\newcommand{\cmark}{\ding{51}}%
\newcommand{\xmark}{\ding{55}}%
\lstdefinestyle{Rstyle}{
    language=R,
    basicstyle=\ttfamily\small,
    commentstyle=\color{gray},
    numbers=left,
    numberstyle=\tiny\color{gray},
    stepnumber=1,
    showstringspaces=false,
    tabsize=4,
    frame=single,
    breaklines=true,
    breakatwhitespace=true,
}
\newtheorem{theorem}{Theorem}[section]
\newtheorem{definition}{Definition}[section]
\newtheorem{example}{Example}[section]
\newtheorem{lemma}{Lemma}[section]
\newtheorem{proposition}{Proposition}[section]
\newtheorem{remark}{Remark}[section]
\newtheorem{axiom}{Axiom}
\newcommand{\PN}{{\cal F}(N)}
\newcommand{\PP}{\mbox{\boldmath $\cal R$}}
\newcommand{\ran}{\mbox{\boldmath $le$}}
\newcommand{\dual}{\mbox{\boldmath $d$}}
\newcommand\sm[1]{\textcolor{red}{#1}}
\title{Desirability and social rankings\footnote{Declaration of interests: None.\\  Michele Aleandri is member of the Gruppo Nazionale per l’Analisi Matematica, la Probabilità e le loro Applicazioni (GNAMPA) of the Istituto Nazionale di Alta Matematica (INdAM). \\
Funding: A support of the ANR project THEMIS (ANR-20-CE23-0018) is gratefully acknowledged.}
}
\author{Michele Aleandri\footnote{LUISS University, Roma, Italy; maleandri@luiss.it}, Felix Fritz\footnote{ LAMSADE, CNRS, Université Paris-Dauphine, Université PSL, 75016, Paris, France; felix.fritz@dauphine.psl.eu}, Stefano Moretti\footnote{ LAMSADE, CNRS, Université Paris-Dauphine, Université PSL, 75016, Paris, France; stefano.moretti@dauphine.fr}}
\date{\today}
\begin{document}

\setlength{\abovedisplayskip}{0pt}

\maketitle

\begin{abstract}

In coalitional games, a player $i$ is regarded as {\it strictly more desirable} than player $j$ if substituting $j$ with $i$ within any coalition leads to a strict augmentation in the value of certain coalitions, while preserving the value of the others.
This concept has sparked extensive research, 
particularly within the literature on simple games, with the objective of establishing conditions for the completeness of the desirability relation and ensuring its alignment with diverse power indices.
In this paper, we adopt a property-driven approach to ``integrate'' the notion of desirability relation into a total relation by establishing sets
of independent axioms leading to the characterization of solution concepts from the related literature. 

The solutions considered in this paper extend beyond simple games, accommodating situations where coalitions are partitioned into more than two classes. So, our problem boils down to the challenge of ranking players by considering the ranking of their coalitions, commonly referred to as the {\it social ranking problem}. In this context, a social ranking solution is defined as a map that associates any ranking over the subsets of a finite set $N$ to a total binary relation over the elements of $N$.

We focus on
social ranking solutions consistent with the desirability relation and we propose complementary sets of properties for the axiomatic characterization of five existing solutions: 
the {\it Ceteris Paribus (CP-)majority}, the {\it lexicographic excellence} (lex-cel), the \emph{dual-lex}, the $L^{(1)}$\emph{ solution} and its dual version $L_*^{(1)}$. These novel axiomatic characterizations 
reveal additional similarities among the five solutions and emphasize the essential characteristics that should be taken into account when selecting the most suitable social ranking for a given context.
As an example of application, a practical scenario involving a bicameral legislature is studied. 

\end{abstract}

\noindent\textbf{Keywords}: Social ranking, Axioms, Desirability, Multi-valued simple games, Coalitional ranking

\section{Introduction}\label{sec:intro}

The desirability relation in a simple game entails a (partial) binary relation over the set of players based on their possibilities to form winning coalitions
\cite{I58,taylor2021simple,carreras1996complete}. In words, a player $i$ is at least as desirable as another player $j$ if any winning coalition  containing $j$ remains winning when $j$ is replaced by $i$. Such notion has been generalized for {\it Transferable Utility} (TU)-games in \cite{maschler1966characterization}, where a player $i$ is said to be at least as
{\it desirable} as player $j$, if the worth of any coalition $S$ containing $i$ is at least as large as the worth of the coalition  obtained by  swapping $i$ with $j$ in $S$. Additionally, if the worth of at least one coalition $S$ increases when $j$ is swapped for $i$, then $i$ is said to be {\it strictly} more desirable than $j$. It is not hard to see that a (strict) desirable relation yields a preorder (reflexive and transitive) over the set of players. Many studies have thus focused on understanding under which conditions the desirability relation is complete \cite{taylor1993weighted,taylor2021simple,freixas2008dimension} and how one can guarantee that various {\it power indices} align with it \cite{diffo2002ordinal,carreras2008ordinal,courtin2015note} with the goal  to provide a meaningful ranking of players. Within this research stream, we axiomatically analyse methods aimed to construct total binary relations that are compatible  with the desirability relation and, in addition, satisfy other sets of
independent axioms. 


The methods studied in this paper attempt to rank players according to their influence in {\it multi-valued simple games}, which extend the notion of simple game (where coalitions may be winning or losing) to situations where coalitions are partitioned into more than two classes.
Examples of multi-valued coalitional games from the literature include {\it three-valued simple games} \cite{musegaas2018three} and {\it $(2,k)$ simple voting games} \cite{freixas2005banzhaf,freixas2005shapley}.
As for simple games, where classical $0$ and $1$ values are used to represent losing and winning coalitions, respectively, values associated to coalitions in a multi-valued simple game do not immediately convey a connotation in terms of transferable utility. Nevertheless, their higher magnitudes allow for representing most preferred coalitional outcomes.
For instance, in the context of a multicameral electoral system \cite{taylor1993weighted}, where the approval of an alternative such as a bill or an amendment supported by a coalition needs a sequential approval in a series of elections within distinct chambers, a coalition prevailing in a greater number of chambers might be preferred to one prevailing in a lower one, and so the respective coalitonal values should be chosen according to such preferences. In this context, values are a mere numerical representation of ordinal preferences over coalitional outcomes, so, our framework boils down to the analysis of methods aimed at ranking players based on an ordering of their coalitions. This problem is also known in literature as the {\it social ranking problem} \cite{bernardi:hal-02191137, haret:hal-02103421}. Specifically, a {\it social ranking solution} is defined as a map that associates to any {\it coalitional ranking}, i.e. a ranking over the subsets of a finite set $N$, a total binary relation over the elements of $N$. Several social rankings solutions have been proposed and studied in the literature (\cite{bernardi:hal-02191137,haret:hal-02103421,algaba:hal-03388789,suzuki2024set,suzuki2024consistent,beal2023core}; see also \cite{fritz2023social} for a comprehensive survey).

One of the main goal of this paper is to single out social ranking solutions that align with the strict desirability relation, and to propose complementary sets of properties that, together with strict desirability, can be used to axiomatically characterize those solutions. Precisely, we provide novel axiomatic characterizations of three social ranking solutions from the literature: 1) the CP-majority \cite{haret:hal-02103421} which ranks two players $i$ and $j$ according to a majority rule where ``voters'' are all coalitions $S$ that do not contain neither $i$ nor $j$, and where each  voter-coalition $S$ casts a ballot in favor of the player that yields more, between $S \cup \{i\}$ and $S \cup \{j\}$. It is worth noting that, as a majority rule, the CP-majority, unlike the other two solutions, does not necessarily yield a transitive relation over players; 2) the lexicographic excellence (in short, lex-cel) \cite{bernardi:hal-02191137}, which compares players  in a lexicographic way based on their occurrences in the equivalence classes of a coalitional ranking, starting from the first equivalence class and prioritizing the element occurring more frequently (in case of a tie, it proceeds to the next class, continuing until a difference is found or the last equivalence class is reached with all ties); 3) the $L^{(1)}$ solution \cite{algaba:hal-03388789}, that similar to lex-cel ranks players rewarding their excellence through the comparison of their occurrences within the equivalence classes (from the first to the last one in a lexicographic way), and considering the size of coalitions as a secondary principle according to which small coalitions in the same equivalence class are more likely to form and should count more. Moreover,  by making a simple adjustment to just one axiom, we achieve other two sets of independent axioms that provide new characterizations of a dual version of lex-cel and $L^{(1)}$ named in the literature as dual-lex \cite{bernardi:hal-02191137, suzuki2024consistent} and $L_*^{(1)}$ \cite{algaba:hal-03388789}, respectively.

Besides the alignment with the desirability relation, the sets of axioms used to characterize the five social rankings listed above, share other similarities. First, each set of axioms contains a standard property of {\it neutrality} or {\it symmetry} (the first implying the second, but not the way around), which state that the  players' identity should not affect their ranking (neutrality), or that two players that can be swapped in any coalition without affecting the coalitional ranking, should be ranked equally (symmetry). Another property (namely, {\it consistency after indifference}), specifying that a social ranking should behave consistently across two coalitional rankings when the same set of ties are resolved, is satisfied by all social rankings, but it is only  used in the axiomatic characterization of the CP-majority.

Second, each set of axioms is also characterized by properties that specify, to some extent, the fact that the social ranking of two elements remains {\it invariant} with respect to the identity and, occasionally, the number of other elements within the coalitions they form.
To be more specific, these  axioms are named, respectively, {\it equality of coalitions}, {\it coalitional anonymity} and {\it per-size coalitional anonymity}, with the latter accounting for the cardinality of coalitions.

Finally each set of axioms is characterized by one or two properties of {\it independence} or {\it consistency} for social rankings, which encapsulate some distinctive features of the solutions under study, and can be used to guide the selection of the most suitable ranking based on some specific criteria (namely, these axioms are {\it independence from the worst set}, {\it consistency after indifference} and {\it consistency after tie-breaks}).

Another property, inspired by a  notion of desirability applying to coalitions of the same size, is also used to complete the axiomatic characterization of the $L^{(1)}$ solution and its dual $L_*^{(1)}$, which highlights their peculiarity in considering the cardinality of coalitions in determining the relation between players on specific situations.

The road-map of the paper is as follows. We start in the next section with some preliminaries and the definition of the main notions of social ranking solutions studied in this paper. In Section \ref{sec:axioms}, we proceed by introducing the axioms employed in this paper, accompanied by a comprehensive discussion and interpretation of each.
Sections \ref{sec:main} is aimed to present the main results of this study, that is the axiomatic characterization of the CP-majority (Section \ref{sec:cp}), the lex-cel and its dual
(Section \ref{sec:lexcel}) and the $L^{(1)}$ solution and its dual (Section \ref{sec:lex1}). The logical dependencies among the axioms considered in this paper are presented and discussed in Section \ref{sec:logic}. In Section \ref{sec:cases} we demonstrate how social rankings can be employed to rank political parties in a practical scenario involving a bicameral legislature, where legislators are divided into distinct lower and upper houses, and the approval of a bill necessitates endorsement from both legislative bodies. Section \ref{sec:final} concludes.

	\section{Notations and preliminary notions}

	In the following, $N$ is a finite set  containing $n$ elements, $2^N$ is another finite set whose elements are the subsets of $N$, while $\PN = 2^N \setminus \{\emptyset\}$ denotes the set of non-empty subsets of $N$. 
	Given a set $S \in 2^N$, its cardinality is indicated by $|S|$ (with $|\emptyset|=0$). For notational convinience to write a subset of $N$ we omit brases and commas, i.e. instead of $\{i,j\}\subset N$ we write $ij\subset N$.
A binary relation on $N$ is a set
$R \subseteq N \times N$ and the fact that $(i,j) \in R$, for some $i,j \in N$, is also denoted as $i R j$.
A binary relation $R$ is said \emph{transitive} if  $i R j$ and $j R k$
	imply $i R k$, for
	each $i, j, k \in N$; $R$ is said
 \emph{total} if   $i R j$ or
	$j R i$, for
	each $i, j \in N$; $R$ is said \emph{antisymmetric}, if $i R j$ and $j R i$ imply $i =
	j$, for each $i, j \in N$.	
A transitive and total binary relation on a finite set $N$ is called a \textit{total preorder} or a \textit{ranking} on $N$. A ranking on the set of non-empty subsets of $N$ is called \textit{coalitional ranking}.
	A transitive, total and antisymmetric binary relation on $N$ is called a \textit{total order}. $\bm{\mathcal{B}}(N)$ and $\PP(N)$ denote, respectively, the sets of total binary relations and the sets of rankings (or total preorders) on a finite set $N$.

	To represent a {\it coalitional ranking}, in general we use the notation $\succsim$, with $\succsim \in \bm{\mathcal{R}}(\PN)$. In the related literature, a coalitional ranking $\succsim$ is also called {\it power relation} \cite{haret:hal-02103421, allouche:hal-02930241}. The symmetric part of $\succsim$ is denoted by $\thicksim$ ($S\thicksim T$ if both $S\succsim T$ and $T\succsim S$). Conversely, $\succ$ denotes the asymmetric part ($S\succ T$ if $S\succsim T$ but not $T\succsim S$). Thus, for any pair of subsets $S, T \in \PN$, $S \succ T$ conveys that $S$ is strictly {\it stronger} (or {\it preferred}) than $T$, while $S \thicksim T$ indicates {\it equivalence} in strength (or {\it indifference}) between $S$ and $T$.\\
	Suppose a coalitionl ranking $ \succsim \in  \PP(\PN)$ is of the form\footnote{As usual, in the remaining of the paper, if a ranking is represented using a chain form like $i R j R k R l R...$ one must infer by transitivity the relation between any pair of elements that are not adjacent in the chain. So, for instance, by transitivity of $R$ we have that $i R k$.}
	$$S_1 \succsim  S_2  \succsim S_3  \succsim \dots \succsim  S_{2^n-1}.$$
	Given this ranking $ \succsim$, consider its \textit{quotient order}, denoted as follows
	$$\Sigma_1 \succ  \Sigma_2  \succ  \Sigma_3  \succ  \dots \succ  \Sigma_{l}$$ in which the subsets $S_j$ are grouped in  the  \textit{equivalence classes} $\Sigma_k$ generated by the symmetric part of $\succsim$.
	This means that all the sets in $\Sigma_1$ are indifferent to $S_1$ and are strictly better than the sets in $\Sigma_2$  and so on.
	Observe that    $\Sigma_i=\{S_i\}$ for any $i=1,\dots,2^{n}-1$ if and only if  $\succsim$ is a total order.
	
	\begin{example}\label{ex2.1}
		As an example of power relation in a practical situation, consider a bicameral legislature where some political parties of a set $N$ are represented in two distinct houses, namely,  \emph{Lower House} and \emph{Upper House}, by $n_{\ell}$ and $n_u$ members, respectively. Precisely, suppose that each political party $i$ in $N$ is represented by $w^{\ell}_i$ members in the Lower House and $w^u_i$ in the Upper House, and that the Lower House has the authority to either accept (or reject) a bill based on whether there is (or is not) a majority of at least $q^\ell$ of its members in favour. If the Lower House approves the bill, it is then transmitted to Upper House, where a similar process occurs, requiring a majority of at least $q^u$ members for acceptance. The bill becomes law only if it gains approval from both houses. Notably, only the Lower House holds the prerogative to initiate the submission of a bill and commence the entire procedure.
		As it was initially suggested and discussed in \cite{musegaas2018three}, such a situation can be represented as a three-valued simple game $(N,v)$  where the function $v:2^N \rightarrow \{0,1,2\}$ is such that
		\begin{equation}\label{3valuesg}
			v(S)=\left\{
			\begin{array}{ll}
				2, & \mbox{if } \sum_{i \in S}w^\ell_i \geq q^\ell \mbox{ and } \sum_{i \in S}w^u_i \geq q^u; \\
				1, & \mbox{if } \sum_{i \in S}w^\ell_i \geq q^\ell \mbox{ and } \sum_{i \in S}w^u_i < q^u;    \\
				0, & \mbox{otherwise.}
			\end{array}
			\right.
		\end{equation}
\end{example}
		In the scenario introduced in Example \ref{ex2.1}, labels linked to coalitions in relation \eqref{3valuesg} are purely ordinal. The ordering of $2>1>0$ merely signifies a rising preference of political parties regarding  outcomes that coalitions may guarantee, without quantifying a specific level of utility.
		Due to its ordinal nature, such a situation can be described as a coalitional ranking $\succsim \in  \bm{\mathcal{R}}(\PN)$   such that   for any  coalitions $S,T \in \PN$:
		\begin{equation}\label{3valuesgrank}
			S \succsim T \Leftrightarrow v(S) \geq v(T).
		\end{equation}

	\begin{example}\label{ex3.1bis}
		Consider a bicameral legislation with two houses, as in Example \ref{ex2.1}, which are composed by members of three political parties in $N=\{1,2,3\}$ having weights $(w^\ell_1, w^\ell_2, w^\ell_3)=(2,2,2)$,  $(w^u_1, w^u_2, w^u_3)=(2,1,2)$ and the same quota in the two houses $q^1=q^2=4$.
		The corresponding three-valued game according to relation (\ref{3valuesg})  gives value 2 to coalitions $\{1,2,3\}$ and $\{1,3\}$, value 1 to coalitions $\{1,2\}$ and $\{2,3\}$ and value 0 to all singleton coalitions.
		According to (\ref{3valuesgrank}), this scenario can be represented as a coalitional ranking $\succsim \in \bm{\mathcal{R}}(\PN)$ such that
		\[
		123 \sim 13 \succ 12 \sim 23 \succ 1 \sim 2 \sim 3.
		\]
	\end{example}

	Given a coalitional ranking over the subsets of $N$, there is a compelling interest in attempting to compare the relative importance of the single elements within $N$ taking into account the position of  coalitions they may form.
	For instance,  comparing the ``{\it relevance}'' of political parties in a bicameral legislature, as outlined in Example \ref{ex2.1}, can be done by looking at their potential possibilities to form coalitions with high levels of $v$ or, equivalently, with high positions in $\succsim$. A social ranking solution\footnote{Notice that in the literature, the co-domain of a social ranking has been often defined as the set of total preorders on $N$ (see, for instance, the seminal papers \cite{bernardi:hal-02191137,algaba:hal-03388789}. In this paper, we consider a larger co-domain in order to make a consistent analysis of the lex-cel and the $L^{(1)}$ solution (and their dual versions) with the CP-majority, that in general does not yield a transitive relation for $|N|>2$.} seeks to address this problem by linking to every coalitional ranking a total binary relation on the set $N$ aimed at comparing the ordinal relevance of each pair of elements in $N$.
	
	\begin{definition}
		\em 
		A \emph{social ranking (solution)}, or briefly, a \emph{solution},  is a function $$R:\bm{\mathcal{R}}(\PN)\to \bm{\mathcal{B}}(N).  $$
	\end{definition}
	The  image of a coalitional ranking $\succsim \in \bm{\mathcal{R}}(\PN)$ under a social ranking $R$   is a total binary relation on $N$, notated as $R^\succsim$.
	 The semantics of a relation between two elements $i$ and $j$ expressed as $i R^{\succsim} j$ is described as follows: ``$i$ is at least as relevant as  $j$, when the social ranking $R$ is applied to the coalitional ranking $\succsim$''. We also denote by $I^\succsim$ its symmetric part ($i I^\succsim j$ if $(i,j)\in R^\succsim$ and $(j,i)\in R^\succsim$) and, conversely, by $P^\succsim$ its asymmetric part ($i P^\succsim j$ if $(i,j)\in R^\succsim$ but $(j,i)\notin R^\succsim$).

As discussed in Section \ref{sec:intro}, several social rankings have been introduced in the literature. In this paper we will focus on three particular ones: the  CP-majority \cite{haret:hal-02103421}, the  lex-cel \cite{bernardi:hal-02191137}, the  $L^{(1)}$ solution  \cite{algaba:hal-03388789}.  In the remaining of this section, we recall the formal definition of each of those solutions, and we illustrate their computation on short numerical examples. The dual versions of the lex-cel and of $L^{(1)}$ will be also introduced and studied in Sections \ref{sec:lexcel} and \ref{sec:lex1}, respectively.\\ 

\noindent
The CP-majority compares two elements $i$ and $j$ in $N$ by uniquely considering the ranking of pairs of subsets $S \cup \{i\}$ and $S \cup \{j\}$, where $S$ is a coalition excluding both $i$ and $j$. This kind of comparisons, also named {\it CP-comparisons} \cite{haret:hal-02103421}, involve coalitions that differ only in the exclusion of either $i$ or $j$, while everything else remains the same.
		Before  recalling the formal definition of  the  {\it Ceteris Paribus} relation (\emph{(CP)-majority}) \cite{haret:hal-02103421}, we introduce some further notations. Given a coalitional ranking $\succsim \in \bm{\mathcal{R}}\left(\PN\right)$ and two elements $i, j \in N$ we define two sets: the set  $D_{i j}(\succsim)=\{S \in$ $\left.2^{N \backslash\{i, j\}}: S \cup\{i\} \succ S \cup\{j\}\right\}$ of coalitions $S$ (not containing neither $i$ nor $j$) such that $S \cup \{i\}$ is strictly stronger than $S\cup \{j\}$, and the set $E_{ij}(\succsim)$ $=\{S \in$ $\left.2^{N \backslash\{i, j\}}: S \cup\{i\} \sim S \cup\{j\}\right\}$ of coalitions $S$ (not containing neither $i$ nor $j$) such that $S \cup \{i\}$ and $S\cup \{j\}$ are equally strong. So, the cardinality of set $D_{i j}(\succsim)$ ($D_{j i}(\succsim)$) represents the number of times $i$ ($j$) defeats $j$ ($i$), and the definition of the CP-majority relation between $i$ and $j$ in $\succsim$ follows according to the comparison between $|D_{i j}(\succsim)|$ and $|D_{j i}(\succsim)|$.
		
		\begin{definition}(CP-majority \cite{haret:hal-02103421}.) The CP-majority is the solution $R_{CP}:\PP(\PN)\to \bm{\mathcal{B}}(N)$ such that for all $\succsim \in
			\bm{\mathcal{R}}(\PN)$ and $i, j \in N$ :
			$$
			i R_{CP}^{\succsim} j \Leftrightarrow |D_{i j}(\succsim)| \geq |D_{j i}(\succsim)|.
			$$
			(Following the same convention as before, $I_{CP}^\succsim$ and $P_{CP}^\succsim$ stand for the symmetric part and the asymmetric part of $R_{CP}^\succsim$, respectively.)
		\end{definition}

	\begin{example}\label{ex3.1CP}
		Consider the coalitional ranking of Example \ref{ex3.1bis}. For instance, we have that $D_{1 2}(\succsim)=\{3\}$ whereas $D_{2 1}(\succsim)=\emptyset$. Since $|D_{1 2}(\succsim)|>|D_{2 1}(\succsim)|$, it follows that the CP-majority ranks party $1$ strictly more relevant than $2$, that is notated as $1 P_{C P}^{\succsim} 2$.
We have the following matrix $(|D_{i j}(\succsim)|)_{i,j \in N}$
$$
(|D_{i j}(\succsim)|)_{i,j \in N}=\left ( \begin{array}{lll}
         0 & 1 & 0 \\
         0 & 0 & 0 \\
         0 & 1 & 0
         \end{array} 
         \right )
$$ 
which yields, in this case,  a ranking such that $1 I_{C P}^{\succsim} 3 P_{C P}^{\succsim} 2$, i.e. parties $1$ and $3$ are equally relevant and both of them are strictly more relevant than party $2$ according to the CP-majority on the coalitional ranking $\succsim$.
	\end{example}
\noindent
The lex-cel solution \cite{bernardi:hal-02191137} is based on the idea that individuals who consistently appear at the top of a coalitional ranking should be ranked as the most relevant. It compares any pair of elements by examining their occurrences in equivalence classes of a coalitional ranking, starting with the class with the lowest index (i.e., the class of top coalitions). If there is a tie, it proceeds to compare occurrences in subsequent classes until a difference is found or all classes show ties, in which case the elements are declared indifferent in terms of their relevance.

We first need some more  notations. Given a coalitional ranking $\succsim\in \mathcal{R}(\PN)$ with quotient order $\Sigma_1 \succ  \Sigma_2  \succ  \Sigma_3  \succ  \dots \succ  \Sigma_{l}$, denote by $i_k$ the number of sets in $\Sigma_{k}$ containing $i$:	\[ i_k = |\{S\in \Sigma_k: i\in S\}| \]
		for $k=1, \dots,l$. Now, let $\theta^\succsim(i)$ be the $l$-dimensional vector $\theta^\succsim(i)=(i_1,\dots,i_l)$ associated to $\succsim$. Consider the lexicographic order among vectors:
		$$\mathbf{i} \ge_L \mathbf{j} \quad  \mbox{ if either } \mathbf{i}=\mathbf{j} \;\;
		\mbox{ or }\;  \exists s: i_k=j_k,\ k=1,\dots, s-1\; \mbox{ and } i_s>j_s.$$
		 The lex-cel is defined as follows.

		\begin{definition}[lex-cel \cite{bernardi:hal-02191137}]
			The \emph{lexicographic excellence} (lex-cel)  solution is the function $R_{\ran}:\PP(\PN)\to \bm{\mathcal{B}}(N)$ defined for any coalitional ranking $\succsim \in \PP(\PN)$ as
			$$
			i R_{\ran}^\succsim j \qquad {\rm if\;} \qquad \theta^\succsim (i)\;\;\ge_L \;\;\theta^\succsim (j).
			$$
			(Following the same convention as before, $I_{\ran}^\succsim$ and $P_{\ran}^\succsim$ stand for the symmetric part and the asymmetric part of $R_{\ran}^\succsim$, respectively.)
		\end{definition}

 We illustrate its computation in the following example.

\begin{example}\label{ex3.1lexcel}
Consider again the coalitional ranking of Example \ref{ex3.1bis}. Its quotient order is
		\[
		\Sigma_1 \succ  \Sigma_2  \succ  \Sigma_3
		\]
		with $\Sigma_1=\{ 123, 13\}$, $\Sigma_2=\{ 12, 23\}$ and $\Sigma_3=\{ 1, 2, 3\}$. Then we have that 
		\[
		\begin{array}{l}
		\theta^\succsim(1)=(2,1,1),\\
		\theta^\succsim(2)=(1,2,1),\\
		\theta^\succsim(3)=(2,1,1).
		\end{array}
		\] 
		It follows that the lex-cel yields the ranking   $1 I_{\ran}^{\succsim} 3 P_{\ran}^{\succsim} 2$, i.e element $1$ and $3$ are ranked with equal relevance and each of them  is strictly more relevant than element $2$ according to the lex-cel on the coalitional ranking $\succsim$.
	\end{example}

\noindent
We finally introduce the third main social ranking considered in this article, the $L^{(1)}$ solution introduced in \cite{algaba:hal-03388789}. Similar to lex-cel, the $L^{(1)}$ solution also seeks to recognize the contributions of elements to top coalitions within a calitional ranking. However, in the lexicographic comparisons of their occurrences in the equivalences classes of a coalitional ranking, from the top one to the worst one, it accounts for the size of coalitions, thus emphasizing the significance of smaller ones.

		Given a coalitional ranking $\succsim \in \PP(\PN)$ and an agent $i \in N$, define the matrix $M^{\succsim, i}$ of size $(n, l)$ where each entry $M_{s k}^{\succsim, i}$ denotes the number of coalitions of size $s \in\{1, \ldots, n\}$ containing $i$ in the equivalence class $\Sigma_k, k \in\{1, \ldots, l\}$ of $\succsim$. 
\begin{definition}[$L^{(1)}$  \cite{algaba:hal-03388789}] \label{l1sol}
			The $L^{(1)}$ solution is the map $R_{L^{(1)}}:\PP(\PN)\to \bm{\mathcal{B}}(N)$
			such that for any  coalitional ranking $\succsim \in \PP(\PN)$ and $i,j \in N$:
			$$
			i \ I_{L^{(1)}}^\succsim \ j \Longleftrightarrow M^{\succsim, i}=M^{\succsim, j}
			$$
			and
			$$
			i \ P_{L^{(1)}}^\succsim \ j \Longleftrightarrow
			\begin{array}{l}
				\exists \hat{s} \in \{1, \ldots, n\} \mbox {and } \hat{k} \in  \{1, \dots, l-1\} \mbox{ such that } \\
			\qquad \left\{\begin{array}{ll}
				M^{\succsim, i}_{sk} = M^{\succsim, j}_{sk},                         & \forall s \in \{1, \ldots, n\}, \forall k \in \{1, \ldots, \hat{k}-1\}; \smallskip \\
				M^{\succsim, i}_{s\hat{k}} = M^{\succsim, j}_{s\hat{k}},             & \forall s \in \{1, \ldots, \hat{s}-1\}; \smallskip                                 \\
				M^{\succsim, i}_{\hat{s}\hat{k}} > M^{\succsim, j}_{\hat{s}\hat{k}}. &
			\end{array}\right.
		\end{array}
		$$
					(Following the same convention as before, $I_{L^{(1)}}^\succsim$ and $P_{L^{(1)}}^\succsim$ stand for the symmetric part and the asymmetric part of $R_{L^{(1)}}^\succsim$, respectively.)

\end{definition}

	\begin{example}\label{ex3.1l1}
		Again consider the coalitional ranking of Example \ref{ex3.1bis}. 
We have that 
\[
M^{\succsim, 1}=\left(\begin{array}{lll} 0 &0&1\\ 1 &1&0\\ 1&0&0  \end{array}\right), \ \ 
M^{\succsim, 2}=\left(\begin{array}{lll} 0 &0&1\\ 0 &2&0\\ 1&0&0  \end{array}\right), \ \ 
M^{\succsim, 3}=\left(\begin{array}{lll} 0 &0&1\\ 1 &1&0\\ 1&0&0  \end{array}\right).
\]

		It follows that the $L^{(1)}$ solution yields the ranking   $1 I_{L^{(1)}}^{\succsim} 3 P_{L^{(1)}}^{\succsim} 2$, i.e. parties $1$ and $3$ are ranked with equal relevance (for $M^{\succsim, 1}$ and $M^{\succsim, 3}$ are identical), and each of them  is strictly more relevant than party $2$,  according to Definition \ref{l1sol} with $\hat{s}=2$ and $\hat{k}=1$.
	\end{example}
	
Section \ref{sec:main} will also provide limited examination of dual versions of both lex-cel and the $L^{(1)}$ solution.

	\section{Properties for social rankings}\label{sec:axioms}

	In this section, we introduce and discuss various properties for social ranking solutions, and we explore their implications within specific contexts. We start with the main property studied in this work.

	\begin{axiom}[Strict Desirability (SDes)]\label{Des2}
		Let $i,j \in N$. A solution $R: \bm{\mathcal{R}}(\PN) \longrightarrow \bm{\mathcal{B}}(N)$ satisfies the  \emph{Strict Desirability}  property  if
		$$i \ P^\succsim \ j $$
		for any  $\succsim\in\bm{\mathcal{R}}(\PN)$ such that $S \cup \{i\} \succsim S \cup \{j\}$ for all $S \in 2^{N\setminus \{i,j\}}$ and there exists a coalition $T \in 2^{N\setminus \{i,j\}}$ such that $T \cup \{i\} \succ T \cup \{j\}$.
	\end{axiom}
	The concept of strict desirability plays an important role for various solution concepts in cooperative game theory \cite{carreras2008ordinal,holler2013power}.
 The ability of a social ranking to discriminate the higher relevance of an individual $i$ against another individual $j$ based on the fact that for some  coalition $T$ not containing neither $i$ nor $j$ we have that   $T \cup \{i\}$ is (strictly) stronger than $T \cup \{i\}$ (while it is never true that $S \cup \{j\}$ is strictly stronger than $S \cup  \{i\}$ for any coalition $S$) mirrors the responsiveness of the solution in assessing the relative contributions of players across the entire spectrum of coalitions.
	
	\begin{example}\label{ex3.1bisnew}
		Consider the coalitional ranking of Example \ref{ex3.1bis}.
		A social ranking $R$ satisfying strict desirability ranks party $3$ and $1$ strictly better than $2$ (i.e., $1 P^\succsim 2$ and  $ 3 P^\succsim 2$) thanks to the fact that parties $1$ and $3$ together form the only coalition having a majority in both houses (so, $13 \succ 12$ and $13\succ 23$), while singleton coalitions are all indifferent, making both $1$ and $3$ strictly ``more desirable'' than $2$. Instead, the property of strict desirability does not specify any ranking between the two symmetric parties $1$ and $3$.
	\end{example}
	
	According to the strict desirability property, a relevance relation between an individual $i$ over another individual $j$ can be dictated by even a single coalition $S \cup \{i\}$ that is strictly stronger than $S \cup \{j\}$, if all other comparisons of the form $S \cup \{i\}$ {\it versus} $S \cup \{j\}$, with $S$ belonging to the set of all possible coalitions excluding $i$ and $j$, 
 state indifference. This type of comparisons $S \cup \{i\}$ {\it versus} $S \cup \{j\}$ are also called CP-comparisons \cite{haret:hal-02103421}. One could argue that the reliability of some CP-comparisons may be questioned under certain circumstances; for instance, if the involved coalitions are deemed implausible or if their probability to form is considered not sufficiently high, due to a political contrast between some of their members. 
	Nevertheless, throughout this article, we shall operate under the assumption that all coalitions are feasible, thereby adopting the strict desirability property as a pivotal axiom for all social ranking solutions aiming to capture any evidence of  coalitions' optimality.
	
	The following axiom, symmetry, is another classical property for several game theoretical solutions. It states that if two elements $i$ and $j$ are interchangeable within any coalition $S$ without altering the positions of coalitions in the coalitional ranking (in other terms, $i$ and $j$ are perfect substitutes over all possible coalitions they may form), then they should share the same position also in the social ranking. For instance, in Example \ref{ex3.1bis}, parties $1$ and $3$ should be ranked indifferent relevant according to a symmetric social ranking.
	\begin{axiom}[Symmetry (Sym)]\label{Sym}
		Let $i,j \in N$. A solution $R: \bm{\mathcal{R}}(\PN) \longrightarrow \bm{\mathcal{B}}(N)$ satisfies the  \emph{Symmetry}  property  if
		$$i \ I^\succsim \ j $$
		for any  $\succsim\in\bm{\mathcal{R}}(\PN)$ such that $S \cup \{i\} \sim S \cup \{j\}$ for all $S \in 2^N\setminus \{i,j\}$.
	\end{axiom}

	The next property of neutrality  for social rankings has been introduced in \cite{haret:hal-02103421} to ensure that a social ranking treats two elements regardless of their individual identities.
	
	
	\begin{axiom}[Neutrality (Neu)]\label{N}
		A solution $R: \bm{\mathcal{R}}(\PN) \longrightarrow \bm{\mathcal{B}}(N)$ satisfies the property of Neutrality if and only if for all power relation $\succsim,\succsim'\in \bm{\mathcal{R}}(\PN)$ and  $i, j \in N$, such that
		\begin{equation*}
			S \cup\{i\} \succsim S \cup\{j\} \Leftrightarrow S \cup\{j\} \succsim' S \cup\{i\}
		\end{equation*}
		for all $S \in 2^{N \backslash\{i,j\}}$, it holds that
		\begin{equation*}
			i R^{\succsim} j \Leftrightarrow j R^{\succsim'} i.
		\end{equation*}
	\end{axiom}
	In an adjusted scenario from Example \ref{ex3.1bis} 
	imagine transferring one unit of weight from party $3$ to party $2$ in the Upper House, altering the weights to $(2,2,1)$. This shift grants the $12$ coalition a majority in both Houses, unlike the $13$ coalition, which can only secure a majority in the Lower House. Consequently, party $2$ essentially swaps its position with party $3$ in the coalitional ranking. Given that only the legislative weight of parties should influence their ranking, but not, i.e., their names or other irrelevant attributes, a social ranking satisfying neutrality dictates that, before the transfer,  party $3$ is at least as relevant as party $2$ if and only if  party $2$ is at least as relevant as  party $3$, after the transfer.
	Notice that the neutrality property also implies the symmetry one. Thus, a social ranking that satisfies neutrality also ranks parties 1 and 3 equally relevant in Example \ref{ex3.1bis}  (see Proposition \ref{logic1} for more details).

	A further property introduced in \cite{haret:hal-02103421} with the name of Equality of Coalitions, requires that for any pair of individuals $i$ and $j$, all CP-comparisons $S \cup \{i\}$ {\it vs.}  $S \cup \{j\}$, for any $S$ containing neither $i$ nor $j$, matters equally in establishing the social ranking between individuals $i$ and $j$, regardless of the size or structure of each coalition $S$ involved in the CP-comparison.

	\begin{axiom}[Equality of Coalitions (EC)]\label{EC} A solution $R: \bm{\mathcal{R}}(\PN) \longrightarrow \bm{\mathcal{B}}(N)$ satisfies the property of \emph{Equality of Coalitions} if and only if for all power relations $\succsim, \succsim' \in$ $\bm{\mathcal{R}}(\PN)$, $i, j \in N$, and a bijection $\pi$ on $2^{N\setminus\{i, j\}}$ such that
		$$ S \cup\{i\} \succsim S \cup\{j\} \Leftrightarrow \pi(S) \cup\{i\} \succsim' \pi(S) \cup\{j\}$$
		for all $S \in 2^{N \backslash\{i,j\}}$, it holds that
		$$i R^{\succsim} j \Leftrightarrow i R^{\succsim'} j.$$
	\end{axiom}
	Notice that taking a bijection $\pi$  such that $\pi(S)=S$ for all $S \in 2^{N \setminus \{i,j\}}$,
	Axioms \ref{EC} states that a social ranking solution should remain unaffected by coalitional comparisons other than the {\it ceteris paribus} ones, i.e. those involving sets of the form $S \cup \{i\}$ and $S \cup \{j\}$. Differently stated, Equality of Coalitions requires that the relation specified by a social ranking between elements $i$ and $j$ should be independent of all comparisons that are not CP for $i$ and $j$, as it is also illustrated in the following example.

	\begin{example}\label{exEC}
		Consider the coalitional ranking $\succsim$ from Example \ref{ex3.1bis} 
		and a new coalitional ranking $\succsim'$ that may emerge as a result of alterations in the weights of the parties in the two houses 
		(the specific weight details are omitted for brevity) such that
		\[
		123 \sim' 12 \succ' 13 \sim' 23\sim' 1 \succ'2 \sim' 3.
		\]
		Notice that in $\succsim$ we have $13 \succ 23$ and $1 \sim 2$, while in $\succsim'$ we have that
		$13 \sim' 23$ and $1 \succ' 2$. So, defining a bijection    $\pi$ on $\{\emptyset,3\}$ such that $\pi(\emptyset)=3$, we have that $S \cup\{1\} \succsim S \cup\{2\} \Leftrightarrow \pi(S) \cup\{1\} \succsim' \pi(S) \cup\{2\}$ for all $S \in \{\emptyset, 3\}$. Then, a social ranking satisfying EC should rank parties $1$ and $2$ in $\succsim$ precisely as in $\succsim'$.
		
		Moreover, the fact that coalition $1$ is in the intermediate equivalence class in the quotient order of $\succsim'$ together with some coalitions of size two, or that $12$ is in the first equivalence, has no effect on the final social ranking between $1$ and $2$. A social ranking satisfying EC must rank $1$ and $2$  independently of the relative position of coalitions that are not involved in CP-comparisons for $1$ and $2$.
	\end{example}
	
	In a political context, the EC property could be of some interest to compare the relevance of political parties with respect to their ability to withstand potential shifts in alliances of the other parties in a coalition, irrespective of their size or their political identities. However, one could argue that forging alliances in smaller coalitions is more likely in a practical political landscape, or that other type of comparisons other than the CP-ones play a role. In this scenario, using EC might not be suitable, and employing alternative versions that relies on other kind of comparisons could prove more effective.


	Another property that is based on the use of a bijection that transforms each coalition (not containing two individuals $i$ and $j$) into another is the coalitional anonymity property introduced in \cite{bernardi:hal-02191137}.
	
	\begin{axiom}[Coalitional Anonymity (CA)]\label{CA}
		Suppose that for two rankings $\succsim, \succsim'  \in \PP(\PN)$, there are two elements $i,j \in N$, and a bijection $\pi$ on $2^{N \setminus\{i,j\}}$ such that, for all $S,T \in 2^{N\setminus\{i,j\}}$:
		\begin{equation}\label{hypAn}
			S \cup \{i\} \succsim T \cup \{j\} \Leftrightarrow \pi(S) \cup \{i\} \succsim' T \cup \{j\}.
		\end{equation}
		Then a solution $R$ satisfies \emph{Coalitional Anonymity}  if the following holds:
		$$i R^\succsim j \Leftrightarrow i R^{\succsim'} j.$$
	\end{axiom}
	
	Unlike the EC property, the CA property does not require that the ranking between $i$ and $j$ is based exclusively on CP-comparisons  $S \cup \{i\}$ {\it vs.} $S \cup \{j\} $, for every $S \in 2^{N\setminus\{i,j\}}$.  The notion of relevance considered by a solution satisfying the CA axiom refers to the position of coalitions containing ether $i$ or $j$ in the coalition ranking, regardless of the size and structure of the coalition to which they belong. Consequently, also the impact of a relation $S \cup \{i\}  \succ S \cup \{j\}$ to a social ranking may differ according to the position of $S \cup \{i\}$  and  $S \cup \{j\}$ in the coalitional ranking.
	Furthermore, CA embodies a principle of position independence of coalitions containing both $i$ and $j$, or neither.
	
	\begin{example}\label{exCA}
		Consider the coalitional ranking $\succsim$ from Example \ref{ex3.1bis} 
		and a new coalitional ranking $\succsim'$ reflecting a shift in  power balance from 3 to 1 that, for some political reason or consensus drift, has reversed the relative position of coalitions $1$ with coalition $13$ as follows
		\[
		123 \sim' 1 \sim' 12 \succ' 23 \succ' 13 \sim' 2 \sim' 3.
		\]
		Defining a bijection    $\pi$ on $\{\emptyset, 3\}$ such that $\pi(\emptyset)=3$, we have that $S \cup\{1\} \succsim T \cup\{2\} \Leftrightarrow \pi(S) \cup\{1\} \succsim' T \cup\{2\}$ for all $S \in \{\emptyset, 3\}$.
		Then, a social ranking satisfying CA ranks parties $1$ and $2$ in $\succsim$ precisely as in $\succsim'$.
		Moreover, the change of position of coalition $12$ that in $\succsim'$ is positioned in the first equivalence class, does not affect the social ranking between $1$ and $2$ (as it would not affect such a ranking a change in the position of coalition $3$ or $123$).

		Notice that in $\succsim'$, $1\succ'2 $ and $23\succ' 13 $, while, as already said in Example \ref{exEC}, in $\succsim$ we have $13 \sim 23$ and $1 \succ 2$. So, there is no way to put in one-to-one correspondence the CP-comparisons $S \cup \{1\}$ {\it vs.} $S \cup \{2\}$ via a bijection that preserves the corresponding coalitional rankings as demanded by Axiom \ref{EC}. So, Axiom \ref{EC} does not specify anything about the relation of social rankings between $1$ and $2$ in $\succsim$ and $\succsim'$.
		In a similar way, one can check that  condition (\ref{hypAn}) demanded by the CA property does not apply when $\succsim$ is considered together with $\succsim'$ of Example \ref{exEC}.
		Differently stated, properties EC and CA may apply to distinct pairs of coalitional ranking. This fact leads to the consequence that is possible to find social ranking solutions that satisfy the Axiom \ref{EC} (EC) but not the Axiom \ref{CA} (CA), and the way around. A formal proof of the logical independence between Axioms \ref{EC} (EC) and \ref{CA} (CA) is presented in Proposition \ref{logic1}.
	\end{example}
	
	In a political framework, Axiom \ref{CA} suggests to rank parties according to the position of coalitions they may form independently from each other, no-matter the identities or the number of the other members in the coalitions. In alternative, one may argue that at least the size of coalitions should matter, and therefore, using a slight different definition as the one used previously, the property of per-size coalitional anonymity, where each coalition is mapped into a coalition of the same cardinality, is recalled from \cite{algaba:hal-03388789}.
	
	\begin{axiom}[Per-size Coalitional Anonymity (PCA)]\label{PCA}
			Suppose that, for 
			$\succsim, \succsim'  \in \PP(\PN)$, there are two elements $i,j \in N$, and a bijection $\pi$ on $2^{N \setminus\{i,j\}}$ such that $|\pi(S)|=|S|$ for all $S \in 2^{N\setminus\{i,j\}}$ and such that
			\begin{equation}\label{hypAn2}
				S \cup \{i\} \succsim T \cup \{j\} \Leftrightarrow \pi(S) \cup \{i\} \succsim' T \cup \{j\}
			\end{equation}
			for all $S,T \in 2^{N\setminus\{i,j\}}$ with $|S|=|T|$.
			
			A solution $R$ satisfies \emph{Per-size Coalitional Anonymity}  if the following holds:
			$$i R^\succsim j \Leftrightarrow i R^{\succsim'} j.$$
		\end{axiom}

		The interpretation of the PCA property is obvious and follows the same lines of CA, except that it applies to a restricted set of pairs of coalitional rankings. For instance, the PCA property does not say anything about the relation between the coalitional ranking $\succsim$ of Example \ref{ex3.1bis} and $\succsim'$ of Example \ref{exCA}, as the unique bijection $\pi$ on $2^{N\setminus\{1,2\}}=\{\emptyset, \{3\}\}$ satisfying the condition $|\pi(S)|=|S|$ is the identity $\pi(S)=S$. It is also clear that a social ranking satisfying Axiom \ref{CA} (CA) also satisfies Axiom \ref{PCA}(PCA) (see Proposition \ref{logic1} and \cite{algaba:hal-03388789} for more details).
		
		The following property addresses scenarios in which two individuals $i$ and $j$ initially hold equal relevance in two distinct coalitional rankings. Subsequently, ties among coalitions are broken, resulting in new (complete) coalitional rankings. The consistency after tiebreaks  property states that the ranking of $i$ relative to $j$ should remain consistent in the revised social rankings.
		\begin{axiom}[Consistency After Tiebreaks (CAT)]\label{CAT}
			A solution $R:\PP(\PN) \to \mathcal{B}(N)$ satisfies the property of \emph{Consistency After Tiebreaks} if  for all power relations $\succsim, \succsim' \in \PP(\PN)$ such that $i I^{\succsim} j$ and $i I^{\succsim'} j$
			we have that
			$$
			i R^{\succsim \setminus B} j \Leftrightarrow i R^{\succsim' \setminus B} j
			$$
			for all $B \subseteq \succsim \cap \succsim'$ such that $(\succsim \setminus B) \in \PP(\PN)$ and $(\succsim' \setminus B) \in \PP(\PN)$.
		\end{axiom}
			\begin{remark} \label{rem:cat}
Notice that the set $B$ of removed pairs of coalitions  considered in the definition of the CAT axiom is well defined. In fact, the requirement that the new relations $(\succsim \setminus B) \in \PP(\PN)$ and $(\succsim' \setminus B) \in \PP(\PN)$ obtained after the removal are still total preorders,  implies that only pairs of coalitions  that are involved in a tie can be removed from $\succsim$ and $\succsim'$ (differently stated, it must be verified that if $(S,T) \in B$, then $\{(S,T),(T,S)\} \subseteq \succsim$ and $\{(S,T),(T,S)\} \subseteq \succsim'$ and, in addition, $(T,S) \notin B$). Moreover, since the removed pairs from $\succsim$ and $\succsim'$ in $B$ must be in the intersection $\succsim \cap \succsim'$, both quotient orders of $\succsim$ and $\succsim'$ must have at least one identical equivalence class $\Sigma$ in their respective quotient orders. In fact, suppose pairs in $B$ are removed from two different equivalence classes. Then, in order to preserve the transitivity of the symmetric relation in $(\succsim \setminus B)$ and $(\succsim' \setminus B)$, the removal of pairs in $B$ would also determine the removal of  different pairs in $\succsim$ and $\succsim'$, which yields a contradiction with the fact that only the same pairs in $B$ can be removed from both $\succsim$ and $\succsim'$. Precisely,  let the quotient orders $\succ$ and $\succ'$ and $\Sigma$ be such that
$$
\Sigma_1 \succ \ldots \hat{\succ} \Sigma \succ  \ldots \succ \Sigma_l,
$$
and
$$
\Sigma'_1 \succ' \ldots \succ' \Sigma \succ'  \ldots \succ' \Sigma'_m.
$$
Denote the new quotient orders $\hat{\succ}=\succ\setminus B$ and $\hat{\succ}'=\succ'\setminus B$, they must be such that
$$
\Sigma_1 \hat{\succ} \ldots \hat{\succ} \Sigma\setminus \Omega\ \hat{\succ}\ \Omega\  \hat{\succ} \ldots \hat{\succ} \Sigma_l,
$$
and
$$
\Sigma'_1 \hat{\succ}' \ldots \hat{\succ}' \Sigma\setminus \Omega \ \hat{\succ}' \ \Omega \  \hat{\succ}' \ldots \hat{\succ}' \Sigma'_m,
$$
where $S \in \Omega$, $T \in \Sigma \setminus \Omega$ and $(S,T) \in B$. 
\end{remark} 

		\begin{example}\label{exCAT}
			
			Consider the coalitional ranking $\succsim$ from Example \ref{ex3.1bis} 
			and a new coalitional ranking $\succsim^{'}\in \PP(\PN)$  such that
			\[
			123 \sim^{'} 13 \sim^{'} 12 \sim^{'} 23 \succ^{'} 1 \sim^{'} 2 \sim^{'} 3 .
			\]
			Suppose that a social ranking $R$ ranks $1$ and $3$ equally relevant in both coalitional rankings, $1 I^{\succsim} 3$ and $1 I^{\succsim^{'}} 3$ (for instance, because $1$ and $3$ are symmetric).
			
			Now, let $B=\{(3,1),(2,1),(3,2)\}$  and suppose that in both rankings $\succsim$ and $\succsim'$ the elements of $B$ are removed so that two new coalitional rankings   $\hat{\succsim}=(\succsim \setminus B)$ and $\hat{\succsim}^{'}=(\succsim' \setminus B)$, both in $\in \PP(\PN)$, are produced such that
			
				\[
			 123   \ \hat{\sim}     \   13   \ \hat{\succ}   \    12 \  \hat{\sim}  \     23  \ \hat{\succ}     \  1  \ \hat{\succ}    \    2  \ \hat{\succ}   \     3,
				\]
				and
				\[
				123  \ \hat{\sim}^{'} \ 13  \ \hat{\sim}^{'} \ 12 \ \hat{\sim}^{'} \  23 \ \hat{\succ}^{'} \ 1  \ \hat{\succ}^{'} \ 2 \ \hat{\succ}^{'} \ 3.
			\]
			Notice that to generate $\hat{\succsim}$ and in $\hat{\succsim}^{'}$, all the ties among singletons coalitions are broken in $\succsim$ and in $\succsim^{'}$.  The CAT property does not prescribe specific changes in the social ranking between parties $1$ and $3$ (whether in favor of $1$, $3$, or both being equally ranked) in  $\hat{\succsim}$ and $\hat{\succsim}^{'}$. Instead, it only asserts that a social ranking should consistently rank parties $1$ and $3$  in the same way  in both $\hat{\succsim}$ and $\hat{\succsim}^{'}$
			
		\end{example}
		
		The CAT property can be interpreted as a measure of fairness aimed to treat equally individuals in response to identical changes in  the position of coalitions in two distinct coalitional rankings: the social ranking between two individuals should not depend on the initial coalitional ranking, if they start from a social position of indifference and precisely the same tie-breaks are operated between certain coalitions. 
		
		
		The next property, introduced in \cite{bernardi:hal-02191137}, suggests giving greater importance to the positions of players within excellent coalitions, compared to positions in less favorable coalitions, in determining a social ranking.
		In this direction, if a decision is taken over the fact that an element must be considered socially strictly more relevant than another, the property of independence from the worst set states that breaking ties between coalitions in the last equivalence class should not affect such a decision.
		
		\begin{axiom}[Independence from the Worst Set (IWS)] \label{indep below}
			A solution $R$ satisfies \emph{Independence from the Worst Set}  if for any  ranking $\succsim \in \PP(\PN)$ with the associated quotient order  $\succ$ such that $$\Sigma_1 \succ  \Sigma_2  \succ  \Sigma_3  \succ  \dots \succ  \Sigma_{l}$$
			with $l\ge 2$, and $i,j\in N$ such that 	  $$i P^\succsim j,$$ then it holds: $$i P^{\succsim'} j $$
			for any partition $T_1,\dots, T_m$ of $\Sigma_l$ and for any  ranking $\succsim' \in \PP(\PN)$ with the associated quotient order $\succ'$ such that
			$$ \Sigma_1 \succ' \Sigma_2 \succ'  \dots \succ'   \Sigma_{l-1}\succ' T_1 \succ' \dots \succ' T_m.$$
		\end{axiom}
		
		\begin{example}\label{exIWT}
			Consider again the coalitional ranking  $\succsim$ from Example \ref{ex3.1bis}
			and the coalitional ranking $\hat{\succsim}$ from Example \ref{exCAT}.
			Notice that $\hat{\succsim}$ is obtained from $\succsim$  by breaking ties in the worst equivalence class of $\succsim$, while the relations involving coalitions not in the worst position in $\succsim$ remain unchanged.
			So, if for some reasons a social ranking satisfying the IWS property ranks,  in the coalitional ranking $\succsim$, a party $i \in \{1,2,3\}$ strictly better than $j\in \{1,2,3\} \setminus  \{i\}$, then it must continue to rank $i$ strictly better than $j$ also in the coalitional ranking $\hat{\succsim}$.
		\end{example}
		
		Axiom \ref{indep below} (IWS) can be interpreted as a principle aimed to reward the excellence 
		of coalitions.
		It posits that if one believes that alterations in the positions of coalitions within the worst equivalence class of a coalitional ranking (provided that the worst coalitions continue to be dominated as before the modification) play a secondary role, then the decision about a strict ranking between elements should remain unaltered.

		According to the IWS axiom the excellence is rewarded. Instead, if a decision is guided by the punishment of the mediocrity, a dual principle has to be introduced: once a total order is established for a pair of objects, it remains unchanged even if we refine the best-preferred equivalence class.
		
		\begin{axiom}[Independence from the Best Set (IBS)] \label{indep best}
			A solution $R$ satisfies \emph{Independence from the Best Set}  if for any  ranking $\succsim \in \PP(\PN)$ with the associated quotient order  $\succ$ such that $$\Sigma_1 \succ  \Sigma_2  \succ  \Sigma_3  \succ  \dots \succ  \Sigma_{l}$$
			with $l\ge 2$, and $i,j\in N$ such that 	  $$i P^\succsim j,$$ then it holds: $$i P^{\succsim'} j $$
			for any partition $T_1,\dots, T_m$ of $\Sigma_1$ and for any  ranking $\succsim' \in \PP(\PN)$ with the associated quotient order $\succ'$ such that
			$$ T_1 \succ' \dots \succ' T_m\succ' \Sigma_2 \succ'  \dots \succ'   \Sigma_{l}.$$
			
		\end{axiom}

		\begin{example}\label{exIBT}
			Consider the coalitional ranking $\succsim^{'}$ from Example \ref{exCAT}, and a new coalitional ranking $\succsim^{''}
			\in \PP(\PN)$ that emerges as a result of a slight alteration in the weights of parties only affecting coalitions in the first equivalence class and
			such that
			\[
			123 \succ^{''} 13 \succ^{''} 12 \sim^{''} 23 \succ^{''} 1 \sim^{''} 2 \sim^{''} 3 .
			\]
			
			Notice that $\succsim^{''}$ is obtained from 
			by breaking ties in the first equivalence class in $\succsim^{'}$, while the relations involving coalitions not in the first position in $\succsim^{'}$ remain unchanged.
			So, if for some reasons a social ranking satisfying the IBS property ranks a party $i \in \{1,2,3\}$ strictly better than $j\in \{1,2,3\}$, $j \neq i$, in the coalitional ranking $\succsim^{'}$, then it must rank $i$ strictly better than $j$ also in the coalitional ranking $\succsim^{''}$.
		\end{example}

		
		To introduce the next axiom, we first need to define the notion of $k$-dominance relation aimed at describing those coalitional rankings where for any coalition $S$ of a fixed cardinality $k$, $k \in \{0, \ldots, n-2\}$, and not containing neither $i$ nor $j$ we have that $S \cup \{i\}$ is at least as strong as
		$S \cup \{i\}$. 
		
		\begin{definition}[$k$-dominance relation 
			]\label{kdom}
			Consider a ranking $\succsim  \in \PP(\PN)$. For any pair of players $i,j \in N$ and any $k \in \{0, \ldots, n-2\}$, the \textit{$k$-dominance relation}
			$R^\succsim_k \subseteq N \times N$ is defined as follows:
			\[
			i R^\succsim_k j \Leftrightarrow S \cup \{i\} \succsim S \cup \{j\}  \mbox{ for all } S \subseteq N \setminus \{i,j\} \mbox{ with }|S|=k.
			\]
   Moreover, we say that $i$ $k$-dominates $j$ strictly (in notation, $i P^\succsim_k j\}$) if $i R^\succsim_k j$ and there exists $S \subseteq N \setminus \{i,j\}$ with $|S|=k$ such that $S \cup \{i\} \succ S \cup \{j\}$, while $i$ and $j$ are said indifferent in a $k$-dominance relation (denoted as $i I^\succsim_k j$) if   $S \cup \{i\} \sim S \cup \{j\}$ for all $S \subseteq N \setminus \{i,j\}$ with $|S|=k$.
   \end{definition}
			
		
Notice that if a coalitonal ranking is such that an element $i$ $k$-dominates and element $j$ for any $k \in \{0, \ldots, n-2\}$, and for some $k$ the $k$-dominance relation between $i$ and $j$ is strict ($i P^\succsim_k j$), then we have a situation where a social ranking satisfying strict desirability must rank $i$ strictly better than $j$. 

Denote by
    $\mathcal{KP}_{ij}^\succsim=\{k \in \{0, \ldots, n-2\}: i P^\succsim_k j\}$ the set of indices $k$ such that $i$ $k$-dominates $j$ strictly,
				and by
 $\mathcal{KI}_{ij}^\succsim=\{k \in \{0, \ldots, n-2\}: i I^\succsim_k j\}$ the set of indices $k$ such that $i$ and $j$ are indifferent in an $k$-dominance relation.
		The following axiom addresses dichotomous coalitional rankings where $\mathcal{KP}_{ij}^\succsim \neq \emptyset$ and $\mathcal{KP}_{ji}^\succsim \neq \emptyset$ and therefore Axiom \ref{Des2} (SDes) does not apply.

		\begin{axiom}[ $k$-Desirability on a Dichotomous ranking ($k$-DD)] \label{kdesi}
			A solution $R$ satisfies the property of $k$-\emph{Desirability on a Dichotomous ranking}  if for any dichotomous coalitional ranking $\succsim \in \PP(\PN)$, with the associated quotient order $\Sigma_1 \succ  \Sigma_2$, and for any $i,j\in N$  such that
			$$\mathcal{KP}_{ij}^\succsim \ \cup \ \mathcal{KP}_{ji}^\succsim \ \cup \ \mathcal{KI}_{ij}^\succsim=\{0,\ldots, n-2\},\qquad \mathcal{KP}_{ij}^\succsim\neq \emptyset \mbox{ and } \mathcal{KP}_{ji}^\succsim\neq \emptyset, $$
			it holds that
			$$\min\{k \in \mathcal{KP}_{ij}^\succsim\} \ < \ \min\{k \in \mathcal{KP}_{ji}^\succsim\} \ \Longrightarrow \ i P^{\succsim} j. $$
			
		\end{axiom}
		
		In dichotomous coalitional rankings where the information conveyed by the $k$-dominance relation between two element $i$ and $j$ is conflicting, the $k$-DD property advocates in favour of the strict $k$-dominance relation with the smallest value of $k$. 
		Axiom \ref{kdesi} ($k$-DD) is applied exclusively to dichotomous coalitional rankings (commonly also known as simple games), where only two equivalence classes $\Sigma_1 \succ \Sigma_2$ exist (in the framework of simple games, elements of $\Sigma_1$ are also called winning coalitions and those in $\Sigma_2$, losing coalitions). In this situations, the $k$-DD property, akin to other power indices for simple games \cite{aleandri2022minimal,aleandri2023lexicographic,deegan1978new,holler2013power}, demands that the comparison between two elements $i$ and $j$ focuses on the analysis of coalitions within the class of winning coalitions. Moreover, an element asserting the relation of $k$-dominance over winning coalitions with the smallest cardinality  is granted a better social ranking. 
  
		\begin{example}\label{exkDes}
			Consider the dichotomous coalitional  ranking $\succsim \in \PP(\PN)$ with $N=\{1,2,3\}$  such that
			\[
			123 \sim 12 \sim 23 \sim 1  \succ  13 \sim 2 \sim 3.
			\]
			Notice that element $1$ (strictly) $0$-dominates element $2$, while $2$ (strictly) $1$-dominates element $1$. So, $\mathcal{KP}_{12}=\{0\}$, $\mathcal{KP}_{21}=\{1\}$ and $\mathcal{KI}_{12}=\emptyset$. For $0=\min\{k \in \mathcal{KP}_{12}^\succsim\} \ < \ \min\{k \in \mathcal{KP}_{21}^\succsim\}=1.$, a social ranking solution satisfying the Axiom \ref{kdesi} ($k$-DD) must rank party $1$ strictly better than party $2$. Differently stated, in this example the $k$-DD property states that singletons coalitions count more than coalitons of two elements.
		\end{example}
		Axiom \ref{kdesi} ($k$-DD) is well adapted to compare the relevance in a political framework, where the formation of large coalitions may prove more challenging than creating smaller ones, due to many factors, including the existence of intricate institutional rules, the necessity for mediators in decision-making processes, increased negotiation costs, and various ``psychological'' aspects related to divergent political positions among coalition members.
		
		The next axiom derives insights by comparing two coalitional rankings obtained from an original one where two elements $i$ and $j$ initially share the same social position. 
		The first coalitional ranking is obtained by improving some coalitions  from the last equivalence class to the penultimate one. The second, undergoes the same kind of improvements of coalitions, but also all coalitons from the first to the penultimate equivalence class are placed in one single equivalence class, so producing a dichotomous coalitional ranking. The property of  consistency after indifference than specifies that the ranking between $i$ and $j$ should be coherently maintained  across the two new calitional rankings.

		\begin{axiom}[Consistency after Indifference (CI)] \label{CI}
			A solution $R$ satisfies \emph{Consistency after Indifference}  if for any  ranking $\succsim \in \PP(\PN)$ with the associated quotient order  $\succ$ such that $$\Sigma_1 \succ  \Sigma_2  \succ  \Sigma_3  \succ  \dots \succ  \Sigma_{l}$$
			with $l\ge 2$, and with $i I^\succsim j$, for $i,j\in N$, it holds that: $$i R^{\succsim'} j \Leftrightarrow i R^{\succsim''} j $$
			for any $\Sigma \subset \Sigma_l$ and for all power relations $\succsim', \succsim'' \in \PP(\PN)$ with the associated quotient order $\succ'$ and $\succ''$ such that
			$$ \Sigma_1 \succ' \Sigma_2 \succ'  \dots \succ'   \Sigma_{l-1}\cup \Sigma \succ' \Sigma_l \setminus \Sigma$$
			and
			$$ \Sigma_1 \cup \Sigma_2 \cup  \dots \cup   \Sigma_{l-1}\cup \Sigma \succ'' \Sigma_l \setminus \Sigma. $$
		\end{axiom}
		Notice that as particular case where $\succsim''$ in Axiom \ref{CI} is obtained with $\Sigma = \emptyset$, the CI property asserts that if a dichotomous ranking is derived by making the union of the first $l-1$ equivalence classes of a coalitional ranking $\succsim$ where elements $i$ and $j$ are socially indifferent, then the indifference between $i$ and $j$ must persist in the resulting dichotomous ranking $\succsim''$, since $\succsim=\succsim'$. 
		
		This property proves useful for guiding the behavior of a social ranking  under moderate changes in coalition strength within the worst equivalence class, and where a dichotomous coalitional ranking (i.e., ranking $\succsim^{''}$)  serves as a reference point for a simplified analysis of such changes, as illustrated by the following example.
  
		\begin{example}\label{exCI}
			Consider the coalitional ranking $\succsim'$ from Example \ref{exCA}, and suppose that a social ranking $R$ ranks $1$ and $2$ socially indifferent in $\succsim'$, so $1 I^{\succsim'} 2$.
			Consider the coalitional rankings $\succsim^{''}$ and $\succsim^{'''}$
			such that
			\[
			123 \sim^{''} 1 \sim^{''} 12 \succ^{''} 23 \sim^{''} 13 \succ^{''} 2 \sim^{''} 3,
			\]
			and
			\[
			123 \sim^{'''} 12 \sim^{'''} 23 \sim^{'''} 1 \sim^{'''} 13   \succ^{'''} 2 \sim^{'''} 3.
			\]
			where $\succsim^{''}$ is obtained from $\succsim^{'}$ improving the position of coalition $13$ from the last equivalence class to the penultimate one, and 
			$\succsim^{'''}$ is a dichotomous coalitional ranking obtained from $\succsim^{''}$  by making the union of the two first equivalence classes.
			The CI property does not specify the exact relation between $1$ and $2$ a social ranking should yield in the
			new coalitional rankings $\succsim^{''}$ and $\succsim^{'''}$, but it requires that in both coalitional rankings the social ranking between $1$ and $2$ must be consistent, i.e. $1 R^{\succsim^{''} } 2 \Leftrightarrow 1 R^{\succsim^{'''} } 2$.
		\end{example}

		
\section{Main results}\label{sec:main}

		\subsection{An axiomatic characterisation of the CP-majority}\label{sec:cp}
		
In this section, we explore the implications of adopting a social ranking that satisfies 	Axioms \ref{Des2} (SDes),  \ref{N} (Neu), \ref{EC} (EC)  and \ref{CAT} (CAT).
			
We start our axiomatic analysis with the following lemma showing that social rankings solutions satisfying Axioms  \ref{N} (Neu) and \ref{EC} (EC) 	consider elements supported by the same number of strict CP-comparisons as equally ranked\footnote{This fact is also shown in the proof of Theorem 1 in the paper \cite{haret:hal-02103421}.}. 	
		\begin{lemma}\label{lemcp}
			Let $R$ be a social ranking satisfying Axioms \ref{N} (Neu) and \ref{EC} (EC) . Then for any $i,j \in N$ and $\succsim \in \PP(\PN)$ such that $|D_{ij}^{\succsim}| = |D_{ji}^{\succsim}|$ we have that $i \ I^\succsim \ j$.
		\end{lemma}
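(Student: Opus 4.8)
The plan is to combine Neutrality and Equality of Coalitions with the totality of $R^\succsim$, the crucial observation being that the hypothesis $|D_{ij}(\succsim)| = |D_{ji}(\succsim)|$ is exactly the cardinality condition needed to produce the bijection required by EC. I would first construct an auxiliary ranking $\succsim'$ from $\succsim$ by relabelling $i$ and $j$: letting $\sigma$ be the transposition exchanging $i$ and $j$, define $A \succsim' B \Leftrightarrow \sigma(A) \succsim \sigma(B)$. Since $\sigma$ fixes every $S \in 2^{N\setminus\{i,j\}}$ while swapping $S\cup\{i\}$ and $S\cup\{j\}$, this $\succsim'$ is again a total preorder and satisfies the hypothesis of Neutrality, namely $S\cup\{i\}\succsim S\cup\{j\} \Leftrightarrow S\cup\{j\}\succsim' S\cup\{i\}$ for all $S$. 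A direct check then gives $D_{ij}(\succsim')=D_{ji}(\succsim)$, $D_{ji}(\succsim')=D_{ij}(\succsim)$ and $E_{ij}(\succsim')=E_{ij}(\succsim)$.

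Next, Neutrality (Axiom \ref{N}) applied to the pair $(\succsim,\succsim')$ yields $i R^\succsim j \Leftrightarrow j R^{\succsim'} i$. Separately, I would invoke Equality of Coalitions (Axiom \ref{EC}) for the ordered pair $(j,i)$: the goal is a bijection $\pi$ on $2^{N\setminus\{i,j\}}$ with $S\cup\{j\}\succsim S\cup\{i\}\Leftrightarrow \pi(S)\cup\{j\}\succsim' \pi(S)\cup\{i\}$ for all $S$. The set of $S$ making the left-hand side true is $D_{ji}(\succsim)\cup E_{ij}(\succsim)$, while the set making the right-hand side true is $D_{ji}(\succsim')\cup E_{ij}(\succsim')=D_{ij}(\succsim)\cup E_{ij}(\succsim)$; these two sets are equicardinal precisely because $|D_{ij}(\succsim)|=|D_{ji}(\succsim)|$, so a bijection $\pi$ mapping the first onto the second (and its complement onto the complement) exists. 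EC then delivers $j R^\succsim i \Leftrightarrow j R^{\succsim'} i$.

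Finally, chaining the two equivalences through the common term $j R^{\succsim'} i$ gives $i R^\succsim j \Leftrightarrow j R^\succsim i$. Since $R^\succsim$ is total, at least one of $i R^\succsim j$ and $j R^\succsim i$ holds, and because they are equivalent, both hold; that is, $i \ I^\succsim \ j$, as claimed.

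I expect the main obstacle to be the bookkeeping in the second step: one must verify carefully that the hypothesis $|D_{ij}(\succsim)|=|D_{ji}(\succsim)|$ is exactly what equates $|D_{ji}(\succsim)\cup E_{ij}(\succsim)|$ and $|D_{ij}(\succsim)\cup E_{ij}(\succsim)|$. This reduction relies on the fact that $D_{ij}(\succsim)$, $D_{ji}(\succsim)$ and $E_{ij}(\succsim)$ partition $2^{N\setminus\{i,j\}}$ into disjoint pieces, so that adding the common block $E_{ij}(\succsim)$ to each of $D_{ij}(\succsim)$ and $D_{ji}(\succsim)$ preserves the equality of cardinalities; it is this matching of the equivalence class of CP-ties that makes the EC-bijection available.
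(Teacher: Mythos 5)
Your proof is correct and follows essentially the same strategy as the paper's: one application of Neutrality and one of Equality of Coalitions, chained through a single auxiliary ranking to give $i R^{\succsim} j \Leftrightarrow j R^{\succsim} i$, and then totality of $R^{\succsim}$. The only difference is cosmetic — the paper builds its auxiliary ranking via an EC-bijection that swaps $D_{ij}(\succsim)$ with $D_{ji}(\succsim)$ and then invokes Neutrality, whereas you build it by the transposition of $i$ and $j$ and then invoke EC; these are mirror images of the same argument.
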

		\begin{proof}
			Let $i,j \in N$ and $\succsim \in \PP(\PN)$ be such that $|D_{ij}^{\succsim}| = |D_{ji}^{\succsim}|$.
			
			Consider a bijection $\pi$ on $2^{N \setminus \{i,j\}}$ such that  $\pi(S)\in D_{ji}^{\succsim}$ for each $S \in D_{ij}^{\succsim}$ and $\pi(S)=S$ for all the remaining coalitions $S \in 2^{N \setminus \{i,j\}} \setminus (D_{ij}^{\succsim} \cup D_{ji}^{\succsim} )$.
			
			Define another ranking $\succsim'\in \PP(\PN)$
			$$S \cup \{i\} \succsim S \cup \{j\} \Leftrightarrow \pi(S) \cup \{i\} \succsim' \pi(S) \cup \{j\},$$
			for all $S \in 2^{N \setminus \{i,j\}}$. By EC, we have that
			\begin{equation}\label{ec1}
				i R^{\succsim} j \Leftrightarrow i R^{\succsim'} j.
			\end{equation}
			Notice that by construction of  $\pi$, we also have that for all $S \in 2^{N \setminus \{i,j\}}$
			$$S \cup \{i\} \succsim' S \cup \{j\} \Leftrightarrow S \cup \{j\} \succsim S \cup \{i\}.$$
			Then, by Neu, 
			\begin{equation}\label{n1}
				i R^{\succsim'} j \Leftrightarrow j R^{\succsim} i.
			\end{equation}
			
			By relations (\ref{ec1}) and (\ref{n1}), it follows that $$i R^{\succsim} j \Leftrightarrow j R^{\succsim} i,$$
			and by the fact that  $R^{\succsim}$ is a total relation, we finally obtain $i I^{\succsim} j$.
		\end{proof}

$ 1 \ P^{(\hat{\succsim}^{''} \setminus B)} \ 2$.




\noindent
We now present an axiomatic characterization of the CP-majority.  
		\begin{theorem}\label{theorem:CPaxiom1}
			CP-majority is the only social ranking  that satisfies Axioms \ref{Des2} (SDes), \ref{N} (Neu), \ref{EC} (EC) and \ref{CAT} (CAT).
		\end{theorem}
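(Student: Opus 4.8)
The plan is to prove the two standard directions: that the CP-majority satisfies the four axioms, and that it is the unique solution that does. For the first direction I would check the axioms directly on $R_{CP}$. Strict Desirability holds because its hypothesis forces $D_{ji}(\succsim)=\emptyset$ and $D_{ij}(\succsim)\neq\emptyset$, so $|D_{ij}(\succsim)|>|D_{ji}(\succsim)|$ and $i P_{CP}^{\succsim} j$. Neutrality holds because reversing every ceteris paribus comparison exchanges the roles of $D_{ij}$ and $D_{ji}$, and Axiom \ref{EC} holds because a bijection of the required form maps the strictly dominating (resp.\ indifferent) ceteris paribus coalitions of one power relation onto those of the other, preserving $|D_{ij}|$ and $|D_{ji}|$. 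The only delicate check is CAT: by Remark \ref{rem:cat}, removing a common tie-breaking set $B$ splits a single common equivalence class $\Sigma$ into $\Sigma\setminus\Omega\succ\Omega$ in exactly the same way in $\succsim$ and $\succsim'$, so the newly created strict ceteris paribus comparisons are the same in number for both, say $p$ new wins for $i$ and $q$ for $j$. Since $i I_{CP}^{\succsim} j$ and $i I_{CP}^{\succsim'} j$ mean $|D_{ij}|=|D_{ji}|$ in each relation, after the removal both sides have net advantage $p-q$, and the two social rankings agree.

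For uniqueness, let $R$ satisfy the four axioms. First I would use Axiom \ref{EC} to show that the value of $i R^{\succsim} j$ depends only on the pair $(a,b):=(|D_{ij}(\succsim)|,|D_{ji}(\succsim)|)$, by matching the strictly dominating coalitions and the indifferent coalitions of any two power relations with equal profile through a single bijection. Lemma \ref{lemcp} already settles the case $a=b$, giving $i I^{\succsim} j$, in agreement with $R_{CP}$. It remains to prove that $a>b$ forces $i P^{\succsim} j$, the case $a<b$ being symmetric.

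To obtain strictness I would build two auxiliary power relations $\sigma,\tau$ and one tie-break $B$, and invoke CAT. Fix $a-b$ ``free'' base coalitions $U_1,\dots,U_{a-b}\in 2^{N\setminus\{i,j\}}$ and let $\Sigma=\{U_k\cup\{i\},U_k\cup\{j\}:k\}$ be a common equivalence class (all tied) of both $\sigma$ and $\tau$; let $B$ be the set of upward pairs inside $\Sigma$ whose removal splits it into $\{U_k\cup\{i\}\}\succ\{U_k\cup\{j\}\}$, turning each $U_k$ into a strict win for $i$. I would take $\tau$ to carry no other strict ceteris paribus comparison, so that $i I^{\tau} j$ holds, and then $\tau\setminus B$ makes $i$ strictly desirable over $j$, whence SDes yields $i P^{\tau\setminus B} j$. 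I would take $\sigma$ to carry, outside $\Sigma$, exactly $b$ further $i$-wins and $b$ further $j$-wins, so that $i I^{\sigma} j$ holds by Lemma \ref{lemcp} while $\sigma\setminus B$ has profile $(a,b)$. Applying CAT to the ordered pairs $(i,j)$ and $(j,i)$ then forces $i R^{\sigma\setminus B} j$ true and $j R^{\sigma\setminus B} i$ false, i.e.\ $i P^{\sigma\setminus B} j$; since $\sigma\setminus B$ has profile $(a,b)$, the EC reduction transfers this to $i P^{\succsim} j$, as required. Counting the base coalitions used (at most $a+b\le 2^{n-2}$ for $\sigma$ and $a-b$ for $\tau$) shows the construction fits inside $2^{N\setminus\{i,j\}}$.

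The main obstacle is precisely this construction: one must choose $\sigma$, $\tau$ and especially $B$ so that $B$ is a genuine shared tie-break in the sense of Remark \ref{rem:cat} — contained in $\sigma\cap\tau$, removable while keeping both relations in $\PP(\PN)$, and affecting only the one common class $\Sigma$ — and so that no ceteris paribus pair straddles $\Sigma$ and its complement. Arranging these feasibility conditions to hold simultaneously while realizing the target profiles $(a,b)$ for $\sigma\setminus B$ and $(a-b,0)$ for $\tau\setminus B$ is the technical heart of the proof; the remaining steps are the routine EC-invariance reduction and the bookkeeping of win counts.
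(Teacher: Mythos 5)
Your proposal is correct and follows essentially the same route as the paper's proof: the same counting argument for why $R_{CP}$ satisfies CAT, the reduction via EC (plus Lemma \ref{lemcp} for the tied case) to the profile $(|D_{ij}|,|D_{ji}|)$, and the same key construction for the strict case — your $\sigma$, $\tau$ and $B$ are exactly the paper's $\succsim'$, $\succsim''$ and $B=\{(S\cup\{j\},T\cup\{i\}):S,T\in E\}$, with SDes applied to the clean relation and CAT (applied to both ordered pairs) transferring strictness. The feasibility of $B$ that you flag as the technical heart is resolved in the paper exactly as you anticipate: the whole class $\Sigma$ is split into $\Sigma_i\succ\Sigma_j$ at once, which keeps both reduced relations total preorders.
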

		
		\begin{proof}
			It is easy to check that $R_{CP}$ satisfies Axiom \ref{Des2} (SDes). It also satisfies Axioms \ref{N} (Neu) and \ref{EC} (EC) 
			as already shown in the literature \cite{haret:hal-02103421}. 
			
			To see that $R_{CP}$ satisfies also Axiom \ref{CAT} (CAT), recall that $\succsim$ and $\succsim'$ are subsets of the cartesian product $\mathcal{F}(N)\times \mathcal{F}(N)$, then $\succsim \setminus B$ is obtained removing the elements of $B$ from $\succsim$. Notice that, given two rankings $\succsim, \succsim'$, removing a set $B\subseteq\succsim\cap\succsim'$ such that $\succsim\setminus B$ and $\succsim'\setminus B$ remain total preorders, impact the cardinalities of $D_{ij}^{(\cdot)}$ and $D_{ji}^{(\cdot)}$ in the new ranking $\succsim\setminus B$ and $\succsim'\setminus B$ at the same extent. More precisely, let $W_{ij}=\{S \in 2^{N\setminus \{i,j\}}: (S \cup \{j\}, S \cup \{i\}) \in B\}$  the set coalitions such that the corresponding CP-comparisons in $B$ are in favour of $j$. Removing pairs in $W_{ij}$ from a coalitional ranking $\succsim$ or $\succsim'$ plays in favour of $i$ without affecting $j$, and viceversa, removing pairs in $W_{ji}$ from a coalitional ranking $\succsim$ or $\succsim'$ plays in favour of $j$ without affecting $i$, (recall that $\succsim\setminus B$ and $\succsim'\setminus B$ are total preorders). Therefore,  we have that  $|D_{ij}^{(\succsim\setminus B)}|=|D_{ij}^{\succsim}|+|W_{ij}|$, $|D_{ji}^{(\succsim\setminus B)}|=|D_{ji}^{\succsim}|+|W_{ji}|$,  $|D_{ij}^{(\succsim'\setminus B)}|=|D_{ij}^{\succsim'}+|W_{ij}|$, and $|D_{ji}^{(\succsim'\setminus B)}|=|D_{ji}^{\succsim'}|+|W_{ji}|$. 
			Now suppose that $i I^{\succsim}_{CP} j$ and $i I^{\succsim'}_{CP} j$ as demanded by Axiom \ref{cat} (CAT). Then, $|D_{ij}^{\succsim}|=|D_{ji}^{\succsim}|$ and $|D_{ij}^{\succsim'}|=|D_{ji}^{\succsim'}|$. 
			Consequently,
			\[
			i R^{(\succsim\setminus B)}_{CP} j \Leftrightarrow |D_{ij}^{\succsim}|+|W_{ij}| \geq |D_{ji}^{\succsim}|+|W_{ji}| \Leftrightarrow |W_{ij}| \geq |W_{ji}|,
			\]
			and
			\[
			i R^{(\succsim'\setminus B)}_{CP} j \Leftrightarrow |D_{ij}^{\succsim'}|+|W_{ij}| \geq |D_{ji}^{\succsim'}|+|W_{ji}| \Leftrightarrow |W_{ij}| \geq |W_{ji}|;
			\]
			so, it follows that $i R^{(\succsim\setminus B)}_{CP} j \Leftrightarrow i R^{(\succsim'\setminus B)}_{CP} j$.

			To show that $R_{CP}$ is the unique solution fulfilling Axioms \ref{Des2} (SDes), \ref{N} (Neu), \ref{EC} (EC) and \ref{CAT} (CAT) we must prove that, if a solution $R$ satisfies these axioms  then $ i R^\succsim j \Leftrightarrow i R^\succsim_{CP} j$ for all $i,j \in N$ and $\succsim \in \PP(\PN)$.
			
			Let $R$ be a solution fulfilling Axioms  \ref{Des2} (SDes), \ref{N} (Neu), \ref{EC} (EC) and \ref{CAT} (CAT) and $\succsim \in \PP(\PN)$ with the associated quotient order  $\succ$ such that $\Sigma_1 \succ   \ldots \succ  \Sigma_{l}$.
			
			We first prove that for any $i,j \in N$, $ i P^\succsim_{CP} j \Leftrightarrow i P^\succsim j$.\\
			($\Rightarrow$)\\
			Let $i P^\succsim_{CP} j$. So, $|D_{ij}^{\succsim}|>|D_{ji}^{\succsim}|$.	Define a set of coalitions $E \subseteq D_{ij}^{\succsim}$ such that $|E|=|D_{ij}^{\succsim}|-|D_{ji}^{\succsim}|$.
			
			Consider a new ranking $\succsim' \in \PP(\PN)$
			such that $D_{ij}^{\succsim'}=D_{ij}^{\succsim}\setminus E$ and  $D_{ji}^{\succsim'}=D_{ji}^{\succsim}$, and there exists an equivalence class $\Sigma$ of $\succsim'$ such that $\Sigma=\{S\cup \{i\}:S \in E\}\cup \{S\cup \{j\}:S \in E\}$. 
			By Lemma \ref{lemcp}, it follows that $i I^{\succsim'} j$.

			Now consider another ranking $\succsim'' \in \PP(\PN)$ where $|D_{ji}^{\succsim''}|=|D_{ji}^{\succsim''}|=0$ and there exists an equivalence class $\Sigma$ of $\succsim''$ such that $\Sigma=\{S\cup \{i\}:S \in E\}\cup \{S\cup \{j\}:S \in E\}$.
			By Axiom \ref{N} (Neu)  we have that $i I^{\succsim''} j$. Let $$B=\{(S \cup \{j\}, T \cup \{i\}): S, T \in E\}.$$
			Clearly, $B \subseteq \succsim' \cap \succsim''$. Moreover, removing these pairs from $\succsim'$ and $\succsim''$ we create two new relations $\succsim' \setminus B$ and $\succsim'' \setminus B$ that are still rankings in $\PP(\PN)$. In fact, only the ties between coalitions in the equivalence class $\Sigma$ of both rankings are broken in favor of $i$, and so $\Sigma$ is split in two equivalence classes $\Sigma_i=\{S \cup \{i\}: S \in E\}$ and $\Sigma_j=\{S \cup \{j\}: S \in E\}$ such that $\Sigma=\Sigma_i \cup \Sigma_j$ and in the respective quotient orders we hve $\Sigma_i \succ'\setminus B \Sigma_j$ and $\Sigma_i \succ''\setminus B \Sigma_j$.
			
			So, $(S \cup \{i\},S \cup \{j\}) \in \succ''\setminus B$ but $(S \cup \{j\},S \cup \{i\}) \notin \succ''\setminus B$, for all $S \in E$ (i.e., any $S \cup \{i\}$ with $ S \in E$ is strictly stronger than $S \cup \{j\}$ in   the ranking $\succsim''\setminus B$) while all the other coalitions $T \in 2^{N \setminus \{i,j\}} \setminus E$ are such that $T \cup \{i\} \sim T \cup \{j\}$. Therefore, by Axiom \ref{Des2} (SDes), $i P^{\succsim''\setminus B} j$ ($i$ is ranked strictly better than $j$ by the CP-majority in $\succsim''\setminus B$). So, by Axiom  \ref{CAT} (CAT) on $\succsim'$ and $\succsim''$, it follows that
			\begin{equation}\label{cat}
				i \ P^{(\succsim'\setminus B)} \ j.
			\end{equation}
			Finally, notice that 
			\[
			S \cup \{i\} \succsim S \cup \{j\} \Leftrightarrow S \cup \{i\} \ \succsim' \setminus B \ S \cup \{j\},
			\]
			for all $S \in 2^{N \setminus \{i,j\}}$; so, 
			by Axiom \ref{EC} (EC) on $\succsim$ and $\succsim'\setminus B$ (using the identity bijection on $2^{N \setminus \{i,j\}}$), we have that
			$$i R^{\succsim} j \Leftrightarrow i R^{\succsim'\setminus B} j,$$
			which finally yields  $i P^{\succsim} j$ in view of relation (\ref{cat}).
			
			($\Leftarrow$)\\
			Now, let $i P^\succsim j$. Suppose that $i I^\succsim_{CP} j$. By definition of CP-majority, $|D_{ji}^{\succsim''}|=|D_{ji}^{\succsim''}|$. So, by Lemma \ref{lemcp}, since $I$ satisfies both Axioms \ref{EC} (EC) and \ref{N} (Neu), we have $i I^\succsim j$, which yields a contradiction with $i P^\succsim j$. Since it can't even be $j P^\succsim_{CP} i$ (by the implication previously proved),  it
			must be $i P^\succsim_{CP} j$.\\ \\
			We now prove that $ i I^\succsim_{CP} j \Leftrightarrow i I^\succsim j$.\\
			($\Rightarrow$)\\
			Let $ i I^\succsim_{CP} j$. Then, again by the definition of CP-majority, $|D_{ji}^{\succsim''}|=|D_{ji}^{\succsim''}|$ and so, by Lemma \ref{lemcp}, $i I^\succsim j$.\\
			($\Leftarrow$)\\
			Let $i I^\succsim j$. As we have shown previously, $ i P^\succsim_{CP} j \Leftrightarrow i P^\succsim j$. So, it is not possible that $ i P^\succsim_{CP} j $ or $ j P^\succsim_{CP} i $ and by the fact that $R^\succsim_{CP}$ is total, it follows that  $ i I^\succsim_{CP} j$.
		\end{proof}

		\subsection{An axiomatic characterization of the lex-cel}\label{sec:lexcel}

In this section, we explore the implications of adopting a social ranking that satisfies	Axioms \ref{Des2} (SDes), \ref{Sym} (Sym), \ref{CA} (CA) and  \ref{indep below} (IWS).
			
We start our axiomatic analysis with the following lemma showing that social rankings solutions satisfying the Axioms \ref{Sym} (Sym) and  \ref{CA} (CA)	rank equal the elements that occur with the same frequency in each equivalence class of a coalitional ranking. 

		\begin{lemma}\label{lem1}
			Let $R$ be a social ranking satisfying Axioms \ref{Sym} (SYM) and  \ref{CA} (CA). Then for any $i,j \in N$ and $\succsim \in \PP(\PN)$ such that $\theta^\succsim(i) = \theta^\succsim(j)$ it holds that $i \ I^\succsim \ j$.
		\end{lemma}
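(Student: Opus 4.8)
The plan is to reduce the statement to the Symmetry axiom: I would build an auxiliary coalitional ranking $\succsim'$ in which $i$ and $j$ are perfect substitutes, obtain $i\,I^{\succsim'}\,j$ from Symmetry, and then transport this indifference back to $\succsim$ through Coalitional Anonymity. The first ingredient is a reformulation of the hypothesis at the level of each equivalence class. Writing the quotient order of $\succsim$ as $\Sigma_1\succ\dots\succ\Sigma_l$, I set $A_k=\{S\in 2^{N\setminus\{i,j\}}:S\cup\{i\}\in\Sigma_k\}$ and $B_k=\{S\in 2^{N\setminus\{i,j\}}:S\cup\{j\}\in\Sigma_k\}$. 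Any coalition containing both $i$ and $j$ contributes equally to the $k$-th entry of $\theta^\succsim(i)$ and of $\theta^\succsim(j)$, so the equality $\theta^\succsim(i)=\theta^\succsim(j)$ is equivalent to $|A_k|=|B_k|$ for every $k$. Moreover, each of the families $\{A_k\}_k$ and $\{B_k\}_k$ partitions $2^{N\setminus\{i,j\}}$, which is what makes the construction below possible.

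Next I would define $\succsim'$ by leaving every coalition containing both $i,j$ or neither of them in its original class, leaving every coalition $T\cup\{j\}$ in its original class, and relocating each coalition $S\cup\{i\}$ to the class $\Sigma_k$ for which $S\in B_k$, i.e.\ to the level occupied by $S\cup\{j\}$. The identity $|A_k|=|B_k|$ ensures that every class keeps its original cardinality, so $\succsim'$ is a genuine total preorder with $l$ classes, and by construction $S\cup\{i\}\sim' S\cup\{j\}$ for all $S$; hence Symmetry gives $i\,I^{\succsim'}\,j$. To apply Coalitional Anonymity I would pick, for each $k$, a bijection $A_k\to B_k$ (available since the sets are equinumerous) and glue these into a single bijection $\pi$ on $2^{N\setminus\{i,j\}}$. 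Because $\pi(S)\in B_k$ whenever $S\in A_k$, the coalition $\pi(S)\cup\{i\}$ occupies in $\succsim'$ the same level $k$ that $S\cup\{i\}$ occupies in $\succsim$, while every $T\cup\{j\}$ keeps its level; a direct check then yields $S\cup\{i\}\succsim T\cup\{j\}\Leftrightarrow\pi(S)\cup\{i\}\succsim' T\cup\{j\}$ for all $S,T$, so CA delivers $i R^\succsim j\Leftrightarrow i R^{\succsim'}j$ and therefore $i R^\succsim j$.

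To upgrade this to full indifference I would repeat the construction with the roles of $i$ and $j$ interchanged, this time keeping the $i$-coalitions fixed and relocating the $j$-coalitions, and apply CA to the pair $(j,i)$ to obtain $j R^\succsim i$; combined with $i R^\succsim j$ and the totality of $R^\succsim$, this gives $i\,I^\succsim\,j$. I expect the main obstacle to be exactly this asymmetry in Coalitional Anonymity, whose bijection acts only on the ``$i$-side'': since the $j$-coalitions are displaced in the first auxiliary ranking, a single $\succsim'$ cannot serve both directions, so two symmetric auxiliary rankings are needed. The delicate points requiring care are verifying that each auxiliary ranking is genuinely a total preorder (nonempty classes of unchanged cardinality) and that the glued bijection really preserves all the mixed comparisons $S\cup\{i\}$ versus $T\cup\{j\}$.
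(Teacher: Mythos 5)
Your proposal is correct and follows essentially the same route as the paper's proof: build an auxiliary ranking in which $i$ and $j$ are perfect substitutes, invoke Symmetry there, and transport the conclusion back to $\succsim$ via Coalitional Anonymity through a level-preserving bijection whose existence rests on the equality $|A_k|=|B_k|$. If anything, you are more careful than the paper, which uses a single auxiliary ranking and reads the CA conclusion as yielding full indifference at once, whereas your second, mirrored construction for the ordered pair $(j,i)$ makes the step to $j\,R^{\succsim}\,i$ explicit.
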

		\begin{proof}
			Since $\theta^\succsim(i) = \theta^\succsim(j)$, then $i_k=j_k$ for all $k \in \{1, \ldots, n\}$. Define a bijection $\pi$  on $2^{N \setminus\{i,j\}}$ such that $\pi(S) \cup \{j\} \in \Sigma_k$ for each $k \in \{1, \ldots, n\}$ and for each coalition $S \in 2^{N \setminus\{i,j\}}$   with $S \cup \{i\} \in \Sigma_k$.
			Consider a power relation $\succsim' \in \PP(\PN)$ such that $S \cup \{i\} \succsim T \cup \{j\} \Leftrightarrow \pi(S) \cup \{i\} \succsim' T \cup \{j\}$.
			Notice that $T \cup \{i\} \sim' T \cup \{j\}$ for all $T \in 2^{N \setminus\{i,j\}}$. So, by Axiom \ref{Sym} (Sym) , $i I^{\succsim'} j$ and, by Axiom $\ref{CA}$ (CA) on $\succsim$ and $\succsim'$, it holds that $i I^\succsim j$.
		\end{proof}

\noindent
We now present an axiomatic characterization of the lex-cel.
		
		\begin{theorem}\label{1}
			The lex-cel solution $R_{\ran}$ is the unique social ranking  fulfilling Axioms  \ref{Des2} (SDes), \ref{Sym} (SYM), \ref{CA} (CA) and  \ref{indep below} (IWS).
		\end{theorem}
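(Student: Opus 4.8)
The plan is to prove the two directions separately: first that the lex-cel $R_{\ran}$ satisfies the four axioms, and then that the axioms force uniqueness, so that any solution satisfying them must coincide with $R_{\ran}$. Throughout I would exploit the fact that the lex-cel comparison of $i$ and $j$ is governed by partial sums of the occurrence vectors: writing $b_k$ (resp.\ $c_k$) for the number of coalitions in $\Sigma_k$ containing $i$ but not $j$ (resp.\ $j$ but not $i$), the coalitions containing both cancel, so $\theta^\succsim(i)\ge_L\theta^\succsim(j)$ is decided by the first index $k$ with $b_k\neq c_k$, the sign of $b_k-c_k$ determining the winner.

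For the first direction, to check Axiom \ref{Des2} (SDes) I would note that if $S\cup\{i\}\succsim S\cup\{j\}$ for all $S$ with a strict instance at $T$, then the identity map injects $\{S:S\cup\{j\}\in\Sigma_1\cup\dots\cup\Sigma_k\}$ into $\{S:S\cup\{i\}\in\Sigma_1\cup\dots\cup\Sigma_k\}$ for each $k$; hence every partial sum of $(b_k)$ dominates that of $(c_k)$, strictly at the class of $T\cup\{i\}$, giving $\theta^\succsim(i)>_L\theta^\succsim(j)$ and $i\,P_{\ran}^\succsim\,j$. Axiom \ref{Sym} (Sym) follows because the class-preserving pairing $S\cup\{i\}\leftrightarrow S\cup\{j\}$ forces $b_k=c_k$, hence $\theta^\succsim(i)=\theta^\succsim(j)$. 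Satisfaction of Axiom \ref{CA} (CA) and Axiom \ref{indep below} (IWS) is routine and already established for the lex-cel in \cite{bernardi:hal-02191137}: CA preserves the relation of each $i$-coalition to every $j$-coalition and hence leaves the sequence $(b_k-c_k)$ unchanged, while IWS refines only $\Sigma_l$ and so cannot alter the first index at which the two occurrence vectors differ, which is always strictly below $l$ whenever the preference is strict.

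For uniqueness, let $R$ satisfy all four axioms. Lemma \ref{lem1} already gives $\theta^\succsim(i)=\theta^\succsim(j)\Rightarrow i\,I^\succsim\,j$, so it remains to prove $\theta^\succsim(i)>_L\theta^\succsim(j)\Rightarrow i\,P^\succsim\,j$; the full equivalence $R=R_{\ran}$ then follows from the totality of $R^\succsim$ exactly as in the proof of Theorem \ref{theorem:CPaxiom1} (the reversed strict preference is excluded by the symmetric implication and indifference by the lemma). Let $s$ be the first class with $i_s>j_s$. Since the number of coalitions containing $i$ equals the number containing $j$, the partial sums must rebalance, which forces $s<l$ and makes the classes below $s$ strictly ``$j$-heavy''. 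The core construction then has three moves. First, coarsen $\succsim$ to $\succsim^{(s+1)}$ by merging $\Sigma_{s+1}\cup\dots\cup\Sigma_l$ into one worst class while keeping $\Sigma_1,\dots,\Sigma_s$ intact; since $\succsim$ is recovered by splitting that worst class into the ordered partition $\Sigma_{s+1}\succ\dots\succ\Sigma_l$, Axiom \ref{indep below} (IWS) ensures $i\,P^{\succsim^{(s+1)}}\,j\Rightarrow i\,P^\succsim\,j$. Second, a counting identity — the surplus of $i$ over $j$ in the top $s$ classes equals the deficit in the merged worst class — produces a bijection $\sigma$ pairing every $S\cup\{j\}$ with a distinct $\sigma(S)\cup\{i\}$ weakly above it, strictly above it for the $i_s-j_s>0$ surplus $i$-coalitions of class $s$. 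Third, applying Axiom \ref{CA} (CA) with $\pi=\sigma^{-1}$ yields a ranking $\succsim'$ with $S\cup\{i\}\succsim' S\cup\{j\}$ for all $S$ and at least one strict instance, whence Axiom \ref{Des2} (SDes) gives $i\,P^{\succsim'}\,j$; CA transports this to $i\,P^{\succsim^{(s+1)}}\,j$ and IWS to $i\,P^\succsim\,j$.

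The step I expect to be the main obstacle is the interaction of the second and third moves: checking that the counting identity genuinely delivers a \emph{complete} bijection $\sigma$ realizing a weak desirability domination, and then verifying that relabelling the $i$-coalitions into the slots dictated by $\sigma$ produces a legitimate total preorder $\succsim'$ that actually meets the hypothesis of CA, i.e.\ that for every pair $(S,T)$ the relation between the relabelled $i$-coalition and $T\cup\{j\}$ is preserved. This is bookkeeping rather than a conceptual difficulty, but it is where the argument must be executed with care; the rest of the proof is a direct assembly of the lemma, the totality argument, and the three-move reduction.
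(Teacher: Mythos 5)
Your proposal is correct and follows essentially the same route as the paper: direct verification of SDes and Sym via the occurrence vectors, citation of \cite{bernardi:hal-02191137} for CA and IWS, and for uniqueness the same three-step reduction — merge $\Sigma_{s+1}\cup\dots\cup\Sigma_l$ and invoke IWS, build a class-preserving bijection sending the $i_s-j_s$ surplus coalitions to the merged bottom class, then apply CA to reach a configuration where SDes fires — with Lemma \ref{lem1} and totality handling the indifference case exactly as in the paper. The only cosmetic difference is that you orient the bijection from $j$-coalitions to $i$-coalitions rather than the reverse.
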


\begin{proof}
		We first show that   $R_{\ran}$ satisfies  Axioms  \ref{Sym} (SYM).
			Let $i,j \in N$ and $\succsim \in \PP(\PN)$. Suppose that  $S \cup \{i\} \sim S \cup \{j\}$ for all  $S \subseteq N \setminus \{i,j\}$. So, $\theta^\succsim(i) = \theta^\succsim(j)$ and $i I_{\ran}^\succsim j$. 
   
To prove that $R_{\ran}$ satisfies  Axioms  \ref{Des2}
(SDes), suppose that $S \cup \{i\} \sim S \cup \{j\}$ for all  $S \subseteq N \setminus \{i,j\}$ and $T \cup \{i\} \succ T \cup \{j\}$ for some  $T \in 2^{N \setminus \{i,j\}}$. Consider the associated quotient order  $\succ$ such that $\Sigma_1 \succ  \Sigma_2   \succ  \dots \succ  \Sigma_{l}$ and let $s \in \{1, \ldots, l\}$, the smallest index such that there exists some $T \cup \{i\}\in \Sigma_s$ with $T \cup \{i\} \succ T \cup \{j\}$. Then  $i_k=j_k$ for $k \in \{1, \ldots, s-1\}$ and $i_s>j_s$ which implies $i P_{\ran}^\succsim j$.

			As already proved in Theorem 1 in \cite{bernardi:hal-02191137},  $R_{\ran}$ also satisfies Axioms  \ref{CA} (CA) and  \ref{indep below} (IWS).\\
			
			We now prove that the lex-cel is the unique social ranking satisfying Axioms \ref{Des2} (SDes), \ref{Sym} (SYM), \ref{CA} (CA) and  \ref{indep below} (IWS).
			
			Take a solution $R$ satisfying Axioms \ref{Des2} (SDes), \ref{Sym} (SYM), \ref{CA} (CA) and  \ref{indep below} (IWS) then we show that $ i R^\succsim j \Leftrightarrow i R^\succsim_{\ran} j$ for all $i,j \in N$ and $\succsim \in \PP(\PN)$.
			
			Fix  $\succsim \in \PP(\PN)$ with the associated quotient order  $\succ$ such that $\Sigma_1 \succ   \ldots \succ  \Sigma_{l}$ and $i,j\in N$.
			The first step of the proof consists in showing that $ i P^\succsim_{\ran} j \Leftrightarrow i P^\succsim j$.\\
			($\Rightarrow$)\\
			Suppose $i P^\succsim_{\ran} j$. Let $s$ be the smallest integer in $\{1, \ldots, l\}$  with $i_s > j_s$. Define a set of coalitions $B=\{S \in 2^{N \setminus\{i,j\}}: S \cup \{i\} \in \Sigma_s\}$ with $|B|=i_s-j_s>0$.
			
			Define a new coalitional ranking $\succsim'$ with the associated quotient order  $\succ'$ such that $\Sigma_1 \succ'  \ldots \succ' \Sigma_s \succ' T_{s+1}$ such that $T_{s+1}= \Sigma_{s+1} \cup \ldots \cup \Sigma_{l}$. By Axiom \ref{indep below} (IWS),
			\begin{equation}\label{conindep}
				i P^{\succsim'} j \Rightarrow i P^{\succsim} j.
			\end{equation}
			Define a bijection $\pi$  on $2^{N \setminus\{i,j\}}$ such that $\pi(S) \cup \{j\} \in \Sigma_k$ for each $k \in \{1, \ldots, s\}$ and each coalition $S \in 2^{N \setminus\{i,j\}} \setminus B$ with $S \cup \{i\} \in \Sigma_k$, and such that $\pi(S) \cup \{j\} \in T_{s+1}$, for each $S \in B$.
			Let $\succsim''$ be a power relation such that $S \cup \{i\} \succsim' T \cup \{j\} \Leftrightarrow \pi(S) \cup \{i\} \succsim'' T \cup \{j\}$. Notice that $i P^{\succsim''}_d j$. So, by the fact that $R$ fulfils Axiom \ref{Des2} (SDes),it holds that $i P^{\succsim''} j$. Moreover, by Axiom \ref{CA} (CA), $i R^{\succsim'} j \Leftrightarrow i R^{\succsim''} j$ and by relation (\ref{conindep}) it follows that $i P^{\succsim} j$.\\
			($\Leftarrow$)\\
			Now, suppose $i P^\succsim j$. If $i I^\succsim_{\ran} j$, by definition of lex-cel, $\theta^\succsim(i)=\theta^\succsim(j)$. So, by Lemma \ref{lem1}, $i I^\succsim j$, which yields a contradiction with $i P^\succsim j$. Since it can't even be $j P^\succsim_{\ran} i$ (by the implication previously proved),  it
			must be $i P^\succsim_{\ran} j$.\\ \\
			The second step of the proof consists in showing that  $ i I^\succsim_{\ran} j \Leftrightarrow i I^\succsim j$.\\
			($\Rightarrow$)\\
			Suppose $ i I^\succsim_{\ran} j$. Then, again by the definition of the lex-cel, $\theta^\succsim(i)=\theta^\succsim(j)$ and so, by Lemma \ref{conindep}, $i I^\succsim j$.\\
			($\Leftarrow$)\\
			Now, suppose $i I^\succsim j$. As previously shown, $ i P^\succsim_{\ran} j \Leftrightarrow i P^\succsim j$. So, it is not possible that $ i P^\succsim_{\ran} j $ or $ j P^\succsim_{\ran} i $ and by the fact that $R^\succsim_{\ran}$ is total, it follows that  $ i I^\succsim_{\ran} j$.
			
		\end{proof}

			In \cite{bernardi:hal-02191137} the \emph{dual-lex} solution has been introduce as the dual solution of the lex-cel solution.  Given two $l$-dimensional vectors $\mathbf{i}, \mathbf{j}$ we define the lexicographic* order $\geq_{L^*}$ as
			$$
			\mathbf{i} \geq_{L^*}\mathbf{j} \quad \text { if either } \mathbf{i}=\mathbf{j} \quad \text { or } \quad \exists s: i_k=j_k, k=s+1, \ldots, l \quad \text { and } i_s<j_s .
			$$
   
			\begin{definition}[dual-lex \cite{bernardi:hal-02191137}]
				The dual-lexicographic (dual-lex) solution is the function $R_{\dual}: \PP(\PN)\to \bm{\mathcal{B}}(N)$ defined for any  ranking $\succsim \in \PP(\PN)$ as
				$$
				i R_{\dual}(\succsim) j\qquad \text { if } \qquad \theta_{\succsim}(i) \geq_{L^*} \theta_{\succsim}(j) .
				$$
			\end{definition}
			
			The following characterization for the \emph{dual-lex} holds. 
			
			\begin{theorem}\label{1dual}
				The dual-lex solution $R_{\dual}$ is the unique solution fulfilling Axioms  \ref{Des2} (SDes), \ref{Sym} (SYM), \ref{CA} (CA) and  \ref{indep best} (IBS).
			\end{theorem}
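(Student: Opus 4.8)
The plan is to mirror the proof of Theorem \ref{1} step by step, exploiting the fact that dual-lex is obtained from lex-cel by reversing the direction in which the occurrence vectors $\theta^{\succsim}(\cdot)$ are scanned, and that Axiom \ref{indep below} (IWS) is correspondingly replaced by its dual \ref{indep best} (IBS). A crucial observation is that Lemma \ref{lem1} can be invoked verbatim here: it relies only on Axioms \ref{Sym} (Sym) and \ref{CA} (CA), both of which remain in force, so $\theta^{\succsim}(i)=\theta^{\succsim}(j)$ still forces $i\, I^{\succsim}\, j$ for any solution satisfying these two axioms.

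First I would verify that $R_{\dual}$ satisfies the four axioms. Symmetry is immediate, since $S\cup\{i\}\sim S\cup\{j\}$ for all $S$ yields $\theta^{\succsim}(i)=\theta^{\succsim}(j)$ and hence $i\, I_{\dual}^{\succsim}\, j$. The axioms \ref{CA} (CA) and \ref{indep best} (IBS) for $R_{\dual}$ follow by duality from the corresponding results for $R_{\ran}$ proved in \cite{bernardi:hal-02191137}. The only genuinely new verification is strict desirability. Assuming $i$ is strictly more desirable than $j$, the swap $S\cup\{j\}\mapsto S\cup\{i\}$ sends every coalition to a weakly better equivalence class, so that the tail sums satisfy $\sum_{k\ge m} i_k \le \sum_{k\ge m} j_k$ for every $m$, with strict inequality for at least one $m$ because of the strict CP-comparison $T\cup\{i\}\succ T\cup\{j\}$. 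Letting $s$ be the largest index with $i_s\neq j_s$, the equalities $i_k=j_k$ for $k>s$ combined with the tail-sum inequality at $m=s$ force $i_s<j_s$; by definition of $\ge_{L^*}$ we obtain $\theta^{\succsim}(i)\ge_{L^*}\theta^{\succsim}(j)$ with $\theta^{\succsim}(i)\neq\theta^{\succsim}(j)$, i.e. $i\, P_{\dual}^{\succsim}\, j$.

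For the uniqueness part I would again split into the two implications $i\,P_{\dual}^{\succsim}\,j\Leftrightarrow i\,P^{\succsim}\,j$ and $i\,I_{\dual}^{\succsim}\,j\Leftrightarrow i\,I^{\succsim}\,j$, the reverse directions and the indifference equivalence being handled exactly as in Theorem \ref{1} through Lemma \ref{lem1} and the totality of $R^{\succsim}$. The substantive direction is $i\,P_{\dual}^{\succsim}\,j\Rightarrow i\,P^{\succsim}\,j$. Here, letting $s$ be the largest index with $i_s\neq j_s$ (so $i_s<j_s$), I would merge the strictly better classes $\Sigma_1,\dots,\Sigma_{s-1}$ into a single top class to obtain a coarser ranking $\succsim'$ whose quotient order is $(\Sigma_1\cup\dots\cup\Sigma_{s-1})\succ'\Sigma_s\succ'\dots\succ'\Sigma_l$. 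Since $\succsim$ is recovered from $\succsim'$ by refining its best class, Axiom \ref{indep best} (IBS) gives $i\,P^{\succsim'}\,j\Rightarrow i\,P^{\succsim}\,j$ (the case $s=1$ being trivial, as then $\succsim'=\succsim$). Next, using a bijection on $2^{N\setminus\{i,j\}}$ as permitted by Axiom \ref{CA} (CA), I would relocate the $i$-coalitions to build a ranking $\succsim''$ in which $i$ strictly dominates $j$ in the desirability sense; the tail-sum inequalities guarantee that such a dominating placement exists. Strict desirability then gives $i\,P^{\succsim''}\,j$, and CA transfers this back to $i\,P^{\succsim'}\,j$, whence $i\,P^{\succsim}\,j$ by IBS.

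The main obstacle, as in the lex-cel case, is the careful bookkeeping in the uniqueness step: one must check that the coarsened ranking $\succsim'$ still lies in $\PP(\PN)$ and has at least two equivalence classes so that IBS applies, and that the CA-bijection can indeed be chosen to make $i$ strictly dominate $j$ in $\succsim''$. This is where the tail-sum (majorization-type) inequalities $\sum_{k\ge m} i_k\le \sum_{k\ge m} j_k$ are essential, playing the role that the analogous head-sum inequalities play in the proof of Theorem \ref{1}. Everything else is a routine dualization of the lex-cel argument.
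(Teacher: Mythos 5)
Your proposal is correct and follows essentially the same route as the paper, which verifies SDes for $R_{\dual}$ by locating the highest equivalence class containing a strict CP-comparison (equivalent to your tail-sum argument) and then declares the uniqueness part to be the verbatim dualization of Theorem \ref{1} with IWS replaced by IBS — i.e., merging the classes above the last differing index into a single top class and using Lemma \ref{lem1}, CA and SDes exactly as you describe. The only remark worth making is that in the uniqueness step the tail-sum inequalities need not hold for the original $\succsim$; they hold (and Hall's condition for your CA-bijection is met) only after the top classes are merged, which your construction in fact guarantees.
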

			\begin{proof}
				It is easy to observe that the dual-lex solution $R_{\dual}$ satisfies the Axiom \ref{Des2} (SDes).  Fix  $\succsim \in \PP(\PN)$ with the associated quotient order  $\succ$ such that $\Sigma_1 \succ   \ldots \succ  \Sigma_{l}$ and $i,j\in N$. If $S \cup \{i\} \succsim S \cup \{j\}$ for all $S \in 2^N\setminus \{i,j\}$ and there exists a coalition $T \in 2^N\setminus \{i,j\}$ such that $T \cup \{i\} \succ T \cup \{j\}$. Let $s\in\{1,\ldots,l\}$ the highest index  such that there exists a coalition $\tilde{T} \in 2^N\setminus \{i,j\}$ such that $\tilde{T} \cup \{i\} \succ \tilde{T} \cup \{j\}$  and $\tilde{T}\cup\{j\}\in\Sigma_s$, then  $i_s<j_s$ and $i_k=j_k$, $\forall k>s$. It follows that $\theta_{\succsim}(i) \geq_{L^*} \theta_{\succsim}(j)$.\\
				The rest of the proof of the theorem follows the same arguments of the proof of Theorem \ref{1} and they are omitted. 
			\end{proof}

		\subsection{An axiomatic characterization of the $L^{(1)}$ solution}\label{sec:lex1}

In this section, we explore the implications of adopting a social ranking that satisfies	Axioms  \ref{Des2} (SDes), \ref{Sym} (SYM), \ref{indep below} (IWS), \ref{kdesi} ($k$-DD),  \ref{PCA} (PCA) and \ref{CI} (CI).
			
We start our axiomatic analysis with the following lemma showing that social rankings solutions satisfying the Axioms \ref{Sym} (SYM) and  \ref{PCA} (PCA)	ranks equally relevant those elements that occur with the same frequency and within coalitions of the same size in each equivalence class of a coalitional ranking  (this result was also shown in paper \cite{algaba:hal-03388789}). 	

\begin{lemma}\label{lem2}
		Let $R$ be a social ranking satisfying Axioms \ref{Sym} (SYM) and  \ref{PCA} (PCA). Then for any $i,j \in N$ and $\succsim \in \PP(\PN)$ such that $M^{\succsim,i}_{sk}=M^{\succsim,j}_{sk}$ for all $s\in\{1,\ldots,n\}$ and $k \in \{1, \ldots, l\}$ it holds that $i \ I^\succsim \ j$.
	\end{lemma}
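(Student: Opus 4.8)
The plan is to follow the same template as the proof of Lemma \ref{lem1}, replacing the use of Axiom \ref{CA} (CA) by Axiom \ref{PCA} (PCA) and upgrading the auxiliary bijection so that it preserves cardinalities. First I would translate the matrix hypothesis into a statement about coalitions not containing $i$ or $j$. Writing each coalition of size $s$ in a class $\Sigma_k$ as either $S\cup\{i\}$, $S\cup\{j\}$, a set containing both $i$ and $j$, or a set avoiding both (with $S\in 2^{N\setminus\{i,j\}}$), the coalitions containing both $i$ and $j$ contribute equally to $M^{\succsim,i}_{sk}$ and to $M^{\succsim,j}_{sk}$. Hence the entrywise equality $M^{\succsim,i}_{sk}=M^{\succsim,j}_{sk}$ is equivalent to
$$|\{S\in 2^{N\setminus\{i,j\}}: |S|=s-1,\ S\cup\{i\}\in\Sigma_k\}| = |\{S\in 2^{N\setminus\{i,j\}}: |S|=s-1,\ S\cup\{j\}\in\Sigma_k\}|$$
for every size $s$ and every class index $k$.

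Next I would use this per-size equality to build the bijection required by PCA. For each fixed pair $(r,k)$ the two sets $\{S:|S|=r,\ S\cup\{i\}\in\Sigma_k\}$ and $\{S':|S'|=r,\ S'\cup\{j\}\in\Sigma_k\}$ have the same finite cardinality, so I can choose a bijection between them; since both domains consist only of sets of size $r$, this partial map is automatically cardinality-preserving. Gluing these partial bijections over all pairs $(r,k)$ — each $S$ belongs to exactly one such block, namely $r=|S|$ and $k$ equal to the class of $S\cup\{i\}$ — yields a single bijection $\pi$ on $2^{N\setminus\{i,j\}}$ with $|\pi(S)|=|S|$ for all $S$ and with $\pi(S)\cup\{j\}$ lying in the same class as $S\cup\{i\}$.

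Then, exactly as in Lemma \ref{lem1}, I would introduce a ranking $\succsim'\in\PP(\PN)$ obeying $S\cup\{i\}\succsim T\cup\{j\}\Leftrightarrow \pi(S)\cup\{i\}\succsim' T\cup\{j\}$ for all $S,T\in 2^{N\setminus\{i,j\}}$ with $|S|=|T|$, so that hypothesis \eqref{hypAn2} of Axiom \ref{PCA} (PCA) holds for the pair $(\succsim,\succsim')$ with the cardinality-preserving bijection $\pi$. By the way $\pi$ was chosen, in $\succsim'$ every coalition $T\cup\{i\}$ ends up in the same equivalence class as $T\cup\{j\}$, i.e.\ $T\cup\{i\}\sim' T\cup\{j\}$ for all $T$. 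Axiom \ref{Sym} (SYM) then gives $i\,I^{\succsim'}\,j$, and Axiom \ref{PCA} (PCA) transfers this indifference back to $\succsim$, yielding $i\,I^\succsim\,j$.

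The hard part will be the bookkeeping in the two middle steps: checking that the glued map $\pi$ is genuinely a bijection of $2^{N\setminus\{i,j\}}$, and that $\succsim'$ can be realised as a bona fide total preorder consistent with the prescribed comparisons between $i$- and $j$-coalitions. Here the per-size restriction is both essential and convenient — because \eqref{hypAn2} only constrains comparisons $S\cup\{i\}$ versus $T\cup\{j\}$ with $|S|=|T|$, the verification decouples across size levels, and the matching counts obtained in the first step are exactly what guarantee that the size-preserving relabelling leaves the positions of all $j$-coalitions intact while moving each $S\cup\{i\}$ into the class of $S\cup\{j\}$.
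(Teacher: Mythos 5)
Your proposal is correct and follows essentially the same route as the paper's proof: construct a size-preserving bijection $\pi$ sending each $S$ with $S\cup\{i\}\in\Sigma_k$ to some $\pi(S)$ with $\pi(S)\cup\{j\}\in\Sigma_k$, define $\succsim'$ via the PCA condition so that $T\cup\{i\}\sim' T\cup\{j\}$ for all $T$, and conclude by Sym followed by PCA. The only difference is that you spell out the counting argument (separating out coalitions containing both $i$ and $j$, and gluing per-block bijections) that the paper leaves implicit when asserting that such a $\pi$ exists.
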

	\begin{proof}
Define a bijection $\pi$  on $2^{N \setminus\{i,j\}}$ such that $\pi(S) \cup \{j\} \in \Sigma_k$ if $S \cup \{i\} \in \Sigma_k$, for each $k \in \{1, \ldots, l\}$, for each coalition $S \in 2^{N \setminus\{i,j\}}$   with  $|\pi(S)|=|S|$.\\
		Consider a coalitional ranking $\succsim' \in \PP(\PN)$ such that
		\begin{equation*}
			S \cup \{i\} \succsim T \cup \{j\} \Leftrightarrow \pi(S) \cup \{i\} \succsim' T \cup \{j\}.
		\end{equation*}
		Notice that $T \cup \{i\} \sim' T \cup \{j\}$ for all $T \in 2^{N \setminus\{i,j\}}$. So, by Axiom \ref{Sym} (SYM), it holds that $i I^{\succsim'} j$ and, by Axiom \ref{PCA} (PCA) on $\succsim$ and $\succsim'$, it follows that $i I^\succsim j$.
	\end{proof}

\noindent
We now introduce an axiomatic characterization of the $L^{(1)}$ solution.
	
	\begin{theorem}\label{teo:L1}
		The solution $R_{L^{(1)}}^\succsim$ is the unique social ranking fulfilling Axioms \ref{Des2} (SDes), \ref{Sym} (SYM), \ref{indep below} (IWS), \ref{kdesi} ( $k$-DD), \ref{PCA} (PCA), and \ref{CI} (CI).
	\end{theorem}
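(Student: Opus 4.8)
The plan is to follow the two-part template of Theorems \ref{theorem:CPaxiom1} and \ref{1}: first verify that $R_{L^{(1)}}$ satisfies the six axioms, then prove uniqueness by establishing $i P^\succsim_{L^{(1)}} j \Leftrightarrow i P^\succsim j$, from which the indifference equivalence and the converse follow via Lemma \ref{lem2}, totality and the forward implication, exactly as in the proof of Theorem \ref{1}. For the satisfaction part, the recurring tool is the correspondence between CP-comparisons and the matrices $M^{\succsim,i}$: if $S\cup\{i\}\succsim S\cup\{j\}$ for all $S$, then the size-preserving injection $S\cup\{j\}\mapsto S\cup\{i\}$ shows that whenever columns $1,\dots,k-1$ of the two matrices coincide one has $M^{\succsim,i}_{sk}\ge M^{\succsim,j}_{sk}$ for every $s$; hence at the first differing column the inequality favours $i$, and a strict comparison yields $i P^\succsim_{L^{(1)}} j$, giving Axiom \ref{Des2} (SDes), while perfect substitutes give $M^{\succsim,i}=M^{\succsim,j}$ and hence Axiom \ref{Sym} (SYM). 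The identity $\sum_{k=1}^{l} M^{\succsim,i}_{sk}=\binom{n-1}{s-1}$, which is independent of $i$, guarantees that the last column is determined by the earlier ones, so the relation is total and the distinguishing index $\hat k$ is always at most $l-1$; this immediately yields Axiom \ref{indep below} (IWS), since refining the worst class leaves columns $1,\dots,\hat k$ untouched. For Axiom \ref{kdesi} ($k$-DD) one checks on a dichotomous ranking that the smallest size at which $i$ strictly dominates is precisely the first row of the single relevant column at which the matrices differ; for Axiom \ref{CI} (CI) a short computation shows that the first size $s$ with $M^{\succsim',i}_{s,l-1}\neq M^{\succsim',j}_{s,l-1}$ and the first size with $M^{\succsim'',i}_{s,1}\neq M^{\succsim'',j}_{s,1}$ coincide and carry the same sign; and Axiom \ref{PCA} (PCA) is taken from \cite{algaba:hal-03388789}.

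The core is the forward implication $i P^\succsim_{L^{(1)}} j \Rightarrow i P^\succsim j$ for an arbitrary solution $R$ obeying the six axioms. Let $(\hat s,\hat k)$ be the distinguishing position, so the matrices of $i$ and $j$ agree on columns $1,\dots,\hat k-1$, agree on rows $1,\dots,\hat s-1$ of column $\hat k$, and $M^{\succsim,i}_{\hat s\hat k}>M^{\succsim,j}_{\hat s\hat k}$. First I would apply Axiom \ref{indep below} (IWS) to reduce to the ranking $\bar\succsim$ obtained by merging $\Sigma_{\hat k+1},\dots,\Sigma_l$ into a single worst class $\Omega$: since $\succsim$ refines the worst class of $\bar\succsim$, it suffices to prove $i P^{\bar\succsim} j$. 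Next, using Axiom \ref{PCA} (PCA) and the size-preserving relabelling technique of Lemma \ref{lem2}, I would replace $\bar\succsim$ by a canonical ranking $\hat\succsim$ with the same matrices $M^{\hat\succsim,i}=M^{\bar\succsim,i}$, $M^{\hat\succsim,j}=M^{\bar\succsim,j}$ in which every CP-comparison is resolved cleanly per size: all comparisons in columns $1,\dots,\hat k-1$ and all comparisons of size $<\hat s$ in column $\hat k$ are ties, while each remaining size in column $\hat k$ is either a tie, a strict $i$-dominance or a strict $j$-dominance, with strict $i$-dominance at size $\hat s$. PCA gives $i R^{\bar\succsim} j\Leftrightarrow i R^{\hat\succsim} j$.

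The decisive step collapses the top classes of $\hat\succsim$ by means of Axiom \ref{CI} (CI). I would let $\Delta\subseteq\Sigma_{\hat k}$ be the set of ``excess'' coalitions that unbalance column $\hat k$ size by size, and let $\tilde\succsim$ be the ranking that agrees with $\hat\succsim$ except that $\Delta$ is demoted into the worst class. By construction and the column-sum identity, $M^{\tilde\succsim,i}=M^{\tilde\succsim,j}$, so Lemma \ref{lem2} gives $i I^{\tilde\succsim} j$; moreover $\hat\succsim$ is exactly the ranking $\succsim'$ and the dichotomous ranking $\succsim_d$ with top class $\Sigma_1\cup\dots\cup\Sigma_{\hat k}$ and worst class $\Omega$ is exactly the ranking $\succsim''$ associated with $\tilde\succsim$ in Axiom \ref{CI} (CI). Hence CI yields $i R^{\hat\succsim} j\Leftrightarrow i R^{\succsim_d} j$. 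In $\succsim_d$ the per-size differences accumulate only from column $\hat k$, the tied columns contributing nothing, so sizes $<\hat s$ are ties, size $\hat s$ is a strict $i$-dominance, and every larger size is a clean dominance; thus either $\mathcal{KP}_{ji}^{\succsim_d}=\emptyset$ and Axiom \ref{Des2} (SDes) gives $i P^{\succsim_d} j$, or $\mathcal{KP}_{ij}^{\succsim_d}$ and $\mathcal{KP}_{ji}^{\succsim_d}$ are both nonempty with $\min\mathcal{KP}_{ij}^{\succsim_d}=\hat s-1<\min\mathcal{KP}_{ji}^{\succsim_d}$ and Axiom \ref{kdesi} ($k$-DD) gives $i P^{\succsim_d} j$. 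Propagating this strict preference back through CI, PCA and IWS yields $i P^{\succsim} j$.

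The main obstacle I anticipate is the bookkeeping in this collapsing step. One must choose $\Delta$ so that the demoted coalitions simultaneously rebalance every size of column $\hat k$, which is possible only because the earlier PCA canonicalization has turned every size into a clean per-size dominance, and one must check that $\tilde\succsim$, $\hat\succsim=\succsim'$ and $\succsim_d=\succsim''$ fit the exact template required by Axiom \ref{CI} (CI). One must also verify that in $\succsim_d$ the three index sets $\mathcal{KP}_{ij}^{\succsim_d}$, $\mathcal{KP}_{ji}^{\succsim_d}$ and $\mathcal{KI}_{ij}^{\succsim_d}$ genuinely partition $\{0,\dots,n-2\}$, so that $k$-DD is applicable, and that every auxiliary relation produced along the way is a bona fide total preorder in $\PP(\PN)$.
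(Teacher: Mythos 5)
Your proposal is correct and follows essentially the same route as the paper's own proof: verify the six axioms via the matrix/CP-comparison correspondence, then for uniqueness collapse the tail with IWS, canonicalize per size with PCA, demote the excess coalitions to reach a symmetric ranking, invoke CI to pass to the dichotomous ranking, and close with SDes or $k$-DD before propagating the strict preference back. The paper's proof is exactly this chain ($\succsim'$, $\succsim''$, $\succsim'''$, $\succsim''''$ playing the roles of your $\bar\succsim$, $\hat\succsim$, $\tilde\succsim$, $\succsim_d$), with the same case split on whether the per-size dominances above $\hat s$ are uniform or mixed.
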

	\begin{proof}
		$L^{(1)}$ solution satisfies Axiom \ref{indep below} (IWS) and \ref{PCA} (PCA) as showed in \cite{algaba:hal-03388789}. \\
		Suppose that $S\cup\{i\}\sim S\cup\{j\}$, for all $S\subseteq N\setminus\{i,j\}$ then $S\cup\{i\},S\cup\{j\}$ belong to the same equivalence class and then $M^{\succsim, i}_{sk} = M^{\succsim, j}_{sk}$ for all $s \in \{1, \ldots, n\}$, and all $k \in \{1, \ldots, l\}$. This observation implies that $R_{L^{(1)}}$ satisfies Axiom \ref{Sym} (SYM). Moreover, let $\bar{k}\in\{1,\ldots,l\}$ the smallest index such that there is $\bar{S}\in\Sigma_{\bar{k}}$ such that $\bar{S}\cup\{i\}\succ \bar{S}\cup\{j\}$, where $\bar{S}$ has minimum cardinality between the coalitions in $\Sigma_{\bar{k}}$. Then $M^{\succsim, i}_{\bar{s}\bar{k}} > M^{\succsim, j}_{\bar{s}\bar{k}}$, where $\bar{s}=|\bar{S}|$, and $i P_{L^{(1)}}^\succsim j$. This observation implies that $R_{L^{(1)}}$ satisfies Axiom \ref{Des2} (SDes).\\
		Call $s^*_i=\min\{s \in \mathcal{KP}_{ij}^\succsim\}$. For a dichotomous power relation it holds
		\begin{equation*}
			M^{\succsim, i}_{s 1} = M^{\succsim, j}_{s 1},\ \forall s<s^*_i \mbox{ and } M^{\succsim, i}_{s_i^*1} > M^{\succsim, j}_{s_i^*1},
		\end{equation*}
		then $R_{L^{(1)}}^\succsim$ satisfies Axiom \ref{kdesi} ($k$-DD). \\
		Let $\succsim\in\PP(\PN)$ such that $iI_{CP}^\succsim j$, then by definition $M^{\succsim, i}=M^{\succsim, j}$. Assume for contradiction, that there are two power relations $\succsim', \succsim'' \in \PP(\PN)$ with the associated quotient order $\succ'$ and $\succ''$ with
		$$ \Sigma_1 \succ' \Sigma_2 \succ'  \dots \succ'   \Sigma_{l-1}\cup \Sigma \succ' \Sigma_l \setminus \Sigma$$
		and
		$$ \Sigma_1 \cup \Sigma_2 \cup  \dots \cup   \Sigma_{l-1}\cup \Sigma \succ'' \Sigma_l \setminus \Sigma. $$
		such that $iP^{\succsim'}_{L^{(1)}}j$ and $jP^{\succsim''}_{L^{(1)}}i$. By $iP^{\succsim'}_{L^{(1)}}j$, there is an index $\bar{s}$ such that
		$M^{\succsim', i}_{\bar{s} l-1}>M^{\succsim', j}_{\bar{s}l-1}$ and such that $M^{\succsim', i}_{s l-1}=M^{\succsim', j}_{s l-1}$, if $\bar{s}>1$, $\forall s < \bar{s}$ . This implies that the number of coalitions containing either $i$ or $j$ of size less $\bar{s}$ in $\Sigma_1 \cup \Sigma_2 \cup  \dots \cup   \Sigma_{l-1}\cup \Sigma$ is the same. This observation contradicts the assumption $jP^{\succsim''}_{L^{(1)}}i$ and  then the $L^{(1)}$  solutionsatisfies Axiom \ref{CI} (CI).\\

We now prove that the $L^{(1)}$ solution is the unique social ranking satisfying Axioms \ref{Des2} (SDes), \ref{Sym} (SYM), \ref{indep below} (IWS), \ref{kdesi} ($k$-DD), \ref{PCA} (PCA), and \ref{CI} (CI).

		Let $R$ be a solution satisfying Axioms  \ref{Des2} (SDes), \ref{Sym} (SYM), \ref{indep below} (IWS), \ref{kdesi} ($k$-DD), \ref{PCA} (PCA), and \ref{CI} (CI) then we show that $iR^\succsim j\iff i R_{L^{(1)}}^\succsim j $, $\forall i,j\in N$ and  $\forall \succsim\in \PP(\PN)$.\\
		Take $\succsim\in \PP(\PN)$ with associated quotient order $\succ$,
		\begin{equation*}
			\Sigma_1 \succ \Sigma_2 \succ \ldots \succ \Sigma_l.
		\end{equation*}
		We start proving that $iP_{L^{(1)}}^\succsim j\iff i P^\succsim j $\\
		$(\Rightarrow)$\\
        Suppose $iP_{L^{(1)}}^\succsim j$, then  there are $\hat{k}\in\{1,\ldots,l\}$ and $\hat{s}\in\{1,\ldots,n\}$ such that
		\begin{align*}
            & M^{\succsim, i}_{\hat{s}\hat{k}} > M^{\succsim, j}_{\hat{s}\hat{k}}, \mbox{ and}\\
			\mbox{ if } \hat{k}>1,\quad & M^{\succsim, i}_{sk} = M^{\succsim, j}_{sk} \mbox{ for all } s \in \{1, \ldots, n\} \mbox{ and all } k \in \{1, \ldots, \hat{k}-1\}, \\
			\mbox{ if } \hat{s}>1,\quad & M^{\succsim, i}_{s\hat{k}} = M^{\succsim, j}_{s\hat{k}} \mbox{ for all } s \in \{1, \ldots, \hat{s}-1\}.                
		\end{align*}
		Moreover, if $\hat{s}<n$, two cases must be distinguished:
		\begin{enumerate}
			\item[Case a)]  $M^{\succsim, i}_{s\hat{k}} \geq M^{\succsim, j}_{s\hat{k}}$ for all $s \in \{\hat{s}+1, \ldots, n\}$,
			\item[Case b)]  for all  $s \in \{\hat{s}+1, \ldots, n\}$ either $M^{\succsim, i}_{s\hat{k}} \geq M^{\succsim, j}_{s\hat{k}} $ or  $M^{\succsim, i}_{s\hat{k}} \leq M^{\succsim, j}_{s\hat{k}}$.
		\end{enumerate}
		Suppose, without loss of generality, that $l\geq3$, $2\leq\hat{k}\leq l-1$ and $2\leq\hat{s}\leq n-1$.
		Consider the ranking $\succ'$ with associated quotient order
		\begin{equation*}
			\Sigma_1 \succ' \ldots \succ' \Sigma_{\hat{k}}\succ' \underbrace{\Sigma_{\hat{k}+1}\cup \ldots \cup \Sigma_{l}}_{T_{\hat{k}+1}}.
		\end{equation*}
		
		By Axiom \ref{indep below} (IWS)
		\begin{equation} \label{eq:IWS}
			i P^{\succsim'} j \Rightarrow i P^\succsim j.
		\end{equation}
		Define $B_{\hat{s}}^{ij}=\{S:\ S\cup\{i\}\in\Sigma_{\hat{k}},\ |S\cup\{i\}|=\hat{s}\}$ such that $|B_{\hat{s}}^{ij}|=M^{\succsim, i}_{\hat{s}\hat{k}} - M^{\succsim, j}_{\hat{s}\hat{k}}$. Take a bijection $\pi$ on $2^{N\setminus\{i,j\}}$ such that $|\pi(S)|=|S|$, $\forall S\in 2^{N\setminus\{i,j\}}$, and
		\begin{align*}
			& \pi(S)\cup\{j\}\in\Sigma_k, \mbox{ if } S\cup\{i\}\in\Sigma_k, \forall k<\hat{k};                                        \\
			& \pi(S)\cup\{j\}\in\Sigma_{\hat{k}}, \mbox{ if } S\cup\{i\}\in\Sigma_{\hat{k}}\setminus B_{\hat{s}}^{ij}, \mbox{ with } |S\cup\{i\}|\leq \hat{s}; \\
			& \pi(S)\cup\{j\}\in T_{k+1}, \mbox{ if } S\in B_{\hat{s}}^{ij}.
		\end{align*}
		Moreover, for all $s \in \{\hat{s}+1, \ldots, n\}$,
		\begin{enumerate}
			\item[Case a)]  define $B_{s}^{ij}=\{S:\ S\cup\{i\}\in\Sigma_{\hat{k}},\ |S\cup\{i\}|=s\}$ such that $|B_s^{ij}|=M^{\succsim, i}_{s\hat{k}} - M^{\succsim, j}_{s\hat{k}}$ then 
   
            $\pi(S)\cup\{j\}\in\Sigma_{\hat{k}}$, if $S\cup\{i\}\in\Sigma_{\hat{k}}\setminus B_{s}^{ij}$, $|S|=s$  \\
            $\pi(S)\cup\{j\}\in T_{\hat{k}+1}$, if $S\in B_{s}^{ij}$;
			
            \item[Case b.1)] Suppose $M^{\succsim, i}_{s\hat{k}} - M^{\succsim, j}_{s\hat{k}} = b \geq 0 $, define $B_{s}^{ij}=\{S:\ S\cup\{i\}\in\Sigma_{\hat{k}},\ |S\cup\{i\}|=s\}$ such that $|B_s^{ij}|=b$, then
			
			$\pi(S)\cup\{j\}\in\Sigma_{\hat{k}}$, if $S\cup\{i\}\in\Sigma_{\hat{k}}\setminus B_{s}^{ij}$, $|S|=s$  \\
			$\pi(S)\cup\{j\}\in T_{\hat{k}+1}$, if $S\in B_{s}^{ij}$;
			
			\item[Case b.2)]  Suppose $M^{\succsim, j}_{s\hat{k}} - M^{\succsim, i}_{s\hat{k}} = b \geq 0 $, define $B_{s}^{ji}=\{S:\ S\cup\{j\}\in\Sigma_{\hat{k}},\ |S\cup\{j\}|=s\}$ such that $|B_s^{ji}|=b$, then
			
			$\pi(S)\cup\{j\}\in\Sigma_{\hat{k}}$, if $S\cup\{i\}\in\Sigma_{\hat{k}}\setminus B_{s}^{ji}$, $|S|=s$ \\
			$\pi(S)\cup\{i\}\in T_{\hat{k}+1}$, if $S\in B_{s}^{ji}$.
		\end{enumerate}
		Consider the ranking $\succsim''$ such that
		\begin{enumerate}
			\item[Case a)] 	 $S\cup\{i\} \succ' T\cup\{j\} \iff \pi(S)\cup\{i\} \succ'' T\cup\{j\}$ ,
			\item[Case b.1)]  $S\cup\{i\} \succ' T\cup\{j\} \iff \pi(S)\cup\{i\} \succ'' T\cup\{j\}$,
			\item[Case b.2)]  $S\cup\{i\} \prec' T\cup\{j\} \iff \pi(S)\cup\{i\} \prec'' T\cup\{j\}$.
		\end{enumerate}
		with associated quotient
		\begin{equation*}
			\tilde{\Sigma}_1 \succ'' \ldots \succ'' \tilde{\Sigma}_{\hat{k}}\succ'' \tilde{T}_{\hat{k}+1}.
		\end{equation*}
		Then by Axiom \ref{PCA} (PCA)
		\begin{equation} \label{eq:PCA}
			i R^{\succsim'} j \iff i R^{\succsim''} j.
		\end{equation}
		Take $\Sigma=\{\pi(S)\cup\{i\},\ S\in \cup_s B_{s}^{ij}\}\cup \{S\cup\{j\},\ S\in \cup_s B_{s}^{ji}\}$ and define the ranking $\succsim'''$ as follows
		\begin{equation*}
			\tilde{\Sigma}_1 \succ''' \ldots \succ''' \tilde{\Sigma}_{\hat{k}}\setminus \Sigma \succ''' \tilde{T}_{k+1}\cup \Sigma.
		\end{equation*}
		Notice that in $\succsim'''$, for all $S\in2^{N\setminus\{i,j\}}$, it holds $S \cup \{i\}\sim S \cup \{j\} S$ then Axiom \ref{Sym} (SYM) implies $i I^{\succsim'''} j$.\\
		Now consider the ranking $\succsim''''$ with associated quotient order
		\begin{equation*}
			\tilde{\Sigma}_1 \cup \ldots \cup \tilde{\Sigma}_{\hat{k}}\succ'''' \tilde{T}_{k+1}.
		\end{equation*}
		
		Then by Axiom \ref{CI} (CI)
		\begin{equation} \label{eq:CI}
			i R^{\succsim''} j \iff i R^{\succsim''''} j.
		\end{equation}
		Using the construction of $\pi$  notice that in $\succsim''''$
		\begin{enumerate}
			\item[Case a)] 	 Axiom  \ref{Des2} (SDes) implies $i P^{\succsim''''} j$ ,
			\item[Case b)]    Axiom  \ref{kdesi} ($k$-DD) implies $i P^{\succsim''''} j$, indeed
			\begin{itemize}
				\item $\mathcal{KI}_{ij}^{\succsim''''}=\{0,\ldots,\hat{s}-1\}$ and for any $s\in\{\hat{s},\ldots,n\}$ then  either $S\cup\{i\}\succsim''''S\cup\{j\}$, $\forall S\in2^{N\setminus\{i,j\}}$ and $|S|=s$, or $S\cup\{j\}\succsim''''S\cup\{i\}$, $\forall S\in2^{N\setminus\{i,j\}}$ and $|S|=s$; \\
				\item $\hat{s}\in\mathcal{KP}_{ij}^{\succsim''''}$ and  $\mathcal{KP}_{ji}^{\succsim''''}\cap\{\hat{s}+1,\ldots,n\}\neq \emptyset$, otherwise $i$ would be strictly more desirable than $j$;
				\item $\hat{s}=\min\{k \in \mathcal{KP}_{ij}^{\succsim''''}\} \ < \ \min\{k \in \mathcal{KP}_{ji}^{\succsim''''}\}$.
			\end{itemize}
		\end{enumerate}
		Then
		\begin{equation}
			i P^{\succsim''''} j \xRightarrow{\eqref{eq:CI}} i P^{\succsim''} j \xRightarrow{\eqref{eq:PCA}} i P^{\succsim'} j \xRightarrow{\eqref{eq:IWS}} i P^\succsim j.
		\end{equation}
		($\Leftarrow$)\\
		Suppose $i P^\succsim j$. If $i I^\succsim_{L^{(1)}} j$, by definition of $L^{(1)}$  solution , $M^{\succsim,i} =M^{\succsim,j}$. So, by Lemma \ref{lem2}, $i I^\succsim j$, which yields a contradiction with $i P^\succsim j$. Since it can't even be $j P^\succsim_{L^{(1)}} i$ (by the implication previously proved),  it must be $i P^\succsim_{L^{(1)}} j$.\\ 
		We conclude the proof showing that $ i I^\succsim_{L^{(1)}} j \Leftrightarrow i I^\succsim j$.\\
		($\Rightarrow$)\\
		Suppose $ i I^\succsim_{L^{(1)}} j$. Then, again by the definition of the $L^{(1)}$  solution, $M^{\succsim,i} =M^{\succsim,j}$ and so, by Lemma \ref{lem2}, $i I^\succsim j$.\\
		($\Leftarrow$)\\
		Suppose $i I^\succsim j$. As previously shown, $ i P^\succsim_{L^{(1)}} j \Leftrightarrow i P^\succsim j$. So, it is not possible that $ i P^\succsim_{L^{(1)}} j $ or $ j P^\succsim_{L^{(1)}} i $ and by the fact that $R^\succsim_{L^{(1)}}$ is total, it follows that  $ i I^\succsim_{L^{(1)}} j$.
	\end{proof}
	
		In \cite{algaba:hal-03388789} the authors mention the possibility to define a dual solution of the $L^{(1)}$ solution using the Independence of the Best Set axiom in place of the Independence of the worst Set axiom. In that paper they do not give a complete axiomatization because they need to introduce new formulation of axioms about the path monotonicity. Using, instead, Axiom \ref{Des2} (SDes) the characterization follows as well.
		\begin{definition}[$L^{(1)}_*$ solution \cite{algaba:hal-03388789}] \label{l1sol-dual}
    			The $L^{(1)}_*$ solution is the map $R_{L^{(1)}_*}^\succsim:\PP(\PN)\to \bm{\mathcal{B}}(N)$
			such that for any  power relation $\succsim \in \PP(\PN)$ and $i,j \in N$:
			$$
			i \ I_{L^{(1)}_*}^\succsim \ j \Longleftrightarrow M^{\succsim, i}=M^{\succsim, j}
			$$
			and
			$$
			i \ P_{L^{(1)}_*}^\succsim \ j \Longleftrightarrow
			\begin{array}{l}
				\exists \hat{s} \in \{1, \ldots, n\} \mbox {and } \hat{k} \in  \{2, \dots, l\} \mbox{ such that } \\
			\qquad \left\{\begin{array}{ll}
				M^{\succsim, i}_{sk} = M^{\succsim, j}_{sk},                         & \forall s \in \{1, \ldots, n\}, \forall k \in \{\hat{k}+1, \ldots, l\}; \smallskip \\
				M^{\succsim, i}_{s\hat{k}} = M^{\succsim, j}_{s\hat{k}},             & \forall s \in \{1,\ldots,\hat{s}-1\}; \smallskip                                 \\
				M^{\succsim, i}_{\hat{s}\hat{k}} < M^{\succsim, j}_{\hat{s}\hat{k}}. &
			\end{array}\right.
		\end{array}
		$$
	\end{definition}
	
	Next theorem gives a characterization of the $L^{(1)}_*$, the dual solution of \emph{$L^{(1)}$}.
	
	\begin{theorem}
		The solution $R_{L^{(1)}_*}^\succsim$ is the unique solution fulfilling Axioms \ref{Des2} (SDes), \ref{Sym} (SYM), \ref{indep best} (IBS), \ref{kdesi} ( $k$-DD), \ref{PCA} (PCA), and \ref{CI} (CI).
	\end{theorem}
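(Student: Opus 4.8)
The plan is to mirror the proof of Theorem~\ref{teo:L1}, replacing Independence from the Worst Set by its dual, Axiom~\ref{indep best} (IBS), and reading the matrices $M^{\succsim,i}$ from the worst equivalence class upwards rather than from the best one downwards. First I would check that $R_{L^{(1)}_*}$ satisfies the six axioms. Symmetry and Strict Desirability are immediate from Definition~\ref{l1sol-dual}: if $S\cup\{i\}\sim S\cup\{j\}$ for all $S\subseteq N\setminus\{i,j\}$ then $M^{\succsim,i}=M^{\succsim,j}$, and if in addition $T\cup\{i\}\succ T\cup\{j\}$ for some $T$, then at the highest-index class $\hat{k}$ carrying a strict comparison, taken at the smallest size $\hat{s}$, one obtains $M^{\succsim,i}_{\hat{s}\hat{k}}<M^{\succsim,j}_{\hat{s}\hat{k}}$ with equality below, i.e. $i\,P^\succsim_{L^{(1)}_*}\,j$. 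Axiom~\ref{indep best} (IBS) holds because a strict verdict is decided at the pivotal class $\hat{k}\ge 2$, which is left unchanged by any refinement of the best class $\Sigma_1$ (the refined pieces all lie strictly above $\hat{k}$); this is the dual of the IWS property of $R_{L^{(1)}}$ proved in \cite{algaba:hal-03388789}. On a dichotomous ranking $M^{\succsim,i}_{s1}+M^{\succsim,i}_{s2}=\binom{n-1}{s-1}$, so ``$i$ has fewer occurrences in the worst class'' coincides with ``$i$ has more occurrences in the best class''; hence $R_{L^{(1)}_*}$ and $R_{L^{(1)}}$ agree on dichotomous rankings and $R_{L^{(1)}_*}$ inherits Axiom~\ref{kdesi} ($k$-DD). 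Finally, Axiom~\ref{CI} (CI) follows from the dual of the counting argument used for $R_{L^{(1)}}$: with $M^{\succsim,i}=M^{\succsim,j}$ in the base ranking and the worst class $\Sigma_l\setminus\Sigma$ common to $\succsim'$ and $\succsim''$, either $i$ and $j$ already differ on that worst class, forcing the same strict verdict in both rankings, or they agree on it and therefore also on the complementary winning block, giving $i\,I^{\succsim''}_{L^{(1)}_*}\,j$; in all cases $i R^{\succsim'}_{L^{(1)}_*}j\Leftrightarrow i R^{\succsim''}_{L^{(1)}_*}j$.

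For uniqueness, let $R$ satisfy the six axioms and suppose $i\,P^\succsim_{L^{(1)}_*}\,j$, with pivotal class $\hat{k}$ (the first class at which the columns differ when the quotient order is read from $\Sigma_l$ upwards) and pivotal size $\hat{s}$, so that $M^{\succsim,i}_{\hat{s}\hat{k}}<M^{\succsim,j}_{\hat{s}\hat{k}}$ while the columns agree on $\Sigma_{\hat{k}+1},\dots,\Sigma_l$ and on sizes below $\hat{s}$ in $\Sigma_{\hat{k}}$. The classes $\Sigma_1,\dots,\Sigma_{\hat{k}-1}$ above the pivotal one are the ``don't care'' classes here, so I would first use Axiom~\ref{indep best} (IBS) to merge them into a single best class $T_0$, obtaining $\succsim^{(1)}$ with quotient $T_0\succ\Sigma_{\hat{k}}\succ\dots\succ\Sigma_l$ and $i P^{\succsim^{(1)}}j\Rightarrow i P^\succsim j$. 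A total-count argument then shows that in $\succsim^{(1)}$ the element $i$ has exactly the same surplus of size-$\hat{s}$ coalitions inside $T_0$ that it lacks inside $\Sigma_{\hat{k}}$. I would reproduce the bijection construction of Theorem~\ref{teo:L1}, but pushing, via a size-preserving bijection and Axiom~\ref{PCA} (PCA), the surplus size-$\hat{s}$ coalitions of $j$ (and, in the conflicting Case b, the surplus coalitions of $i$ at larger sizes) down into the genuine worst class $\Sigma_l$, which is a matching don't-care class. Piling the remaining surplus symmetrically into the worst class yields a ranking on which Axiom~\ref{Sym} (Sym) gives $i\,I\,j$, so that Axiom~\ref{CI} (CI) applies and relates the graded ranking to its dichotomous top-collapse; on the latter $i$ strictly $(\hat{s}-1)$-dominates $j$ at the smallest relevant cardinality, whence Axiom~\ref{Des2} (SDes) in Case a, or Axiom~\ref{kdesi} ($k$-DD) in Case b, gives $i\,P\,j$, which lifts back along PCA and IBS to $i\,P^\succsim\,j$. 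The converse $i\,P^\succsim\,j\Rightarrow i\,P^\succsim_{L^{(1)}_*}\,j$ and the two directions for indifference are then obtained verbatim as in Theorem~\ref{teo:L1}, using Lemma~\ref{lem2} (which invokes only Axioms~\ref{Sym} and~\ref{PCA}) together with the totality of $R^\succsim_{L^{(1)}_*}$.

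The step I expect to be the main obstacle is the reduction to a dichotomous ranking through Axiom~\ref{CI} (CI). Unlike IWS, the axioms CI and $k$-DD are \emph{not} dualized between the two theorems, so one cannot simply interchange the roles of top and bottom in the argument. The delicate point is that a naive top-collapse would bring both the surplus of $i$ (sitting in $T_0$) and the surplus of $j$ (sitting in $\Sigma_{\hat{k}}$) into the same winning block, where they would cancel and yield indifference. The construction must therefore separate them beforehand, using the size-preserving PCA bijection to deposit the surplus of $j$ into the true worst class $\Sigma_l$ while keeping the surplus of $i$ in the winning block, and must verify that the several matching classes $\Sigma_{\hat{k}+1},\dots,\Sigma_{l-1}$ lying below the pivotal class are absorbed harmlessly into that block without disturbing the $(\hat{s}-1)$-dominance that drives the final application of $k$-DD.
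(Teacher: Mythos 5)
Your proposal follows essentially the same route as the paper: verify Strict Desirability directly from Definition~\ref{l1sol-dual} (via the highest-index class containing a strict CP-comparison, at minimal cardinality) and then run the uniqueness argument of Theorem~\ref{teo:L1} with IBS replacing IWS; the paper in fact only writes out the SDes check and declares the rest ``analogous and omitted.'' Your sketch is, if anything, more informative than the published proof, since it also verifies $k$-DD via the complement identity $M^{\succsim,i}_{s1}+M^{\succsim,i}_{s2}=\binom{n-1}{s-1}$ and correctly isolates the one genuinely non-dual step (CI and $k$-DD are reused, not dualized, so the surplus of $j$ must be routed into the worst class before the dichotomous collapse).
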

	\begin{proof}
		It is easy to observe that the$L^{(1)}_*$  solutionsatisfies the Axiom \ref{Des2} (SDes).  Fix  $\succsim \in \PP(\PN)$ with the associated quotient order  $\succ$ such that $\Sigma_1 \succ   \ldots \succ  \Sigma_{l}$ and $i,j\in N$. Suppose that $S \cup \{i\} \succsim S \cup \{j\}$ for all $S \in 2^N\setminus \{i,j\}$ and there exists a coalition $T \in 2^N\setminus \{i,j\}$ such that $T \cup \{i\} \succ T \cup \{j\}$. Let $\hat{k}$ the highest index such that there exists at least one coalition $\hat{T} \in 2^N\setminus \{i,j\}$ with $\hat{T} \cup \{i\} \succ \hat{T} \cup \{j\}\in\Sigma_{\hat{k}}$. Moreover, suppose that $\hat{T}$ has the lowest cardinality between the coalitions that are in $\Sigma_{\hat{k}}$ and that satisfy the previous relation. Then $M^{\succsim, i}_{sk} = M^{\succsim, j}_{sk}$, $\forall s \in \{1, \ldots, n\}, \forall k \in \{1,\ldots,\hat{k}-1\}$, $M^{\succsim, i}_{s\hat{k}} = M^{\succsim, j}_{s\hat{k}}$, $\forall s \in \{\hat{s}+1, \ldots, n \}$ and    $M^{\succsim, i}_{\hat{s}\hat{k}} < M^{\succsim, j}_{\hat{s}\hat{k}}$. From this observation it follows $	i \ P_{L^{(1)}_*}^\succsim \ j$. \\
		The proof of the theorem proceeds in a manner analogous to that of Theorem \ref{teo:L1} and is therefore omitted. 
	\end{proof}

\section{Comparison between the axioms}\label{sec:logic}

In this section we study the dependence between the axioms introduced in Section \ref{sec:axioms}. Moreover we show the relation between the strictly desirability and the monotonicity used in \cite{bernardi:hal-02191137}.

\begin{proposition}\label{logic1}
	The following statements  hold true:
	\begin{itemize}
		\item[i)] Neutrality axiom \ref{N} (Neu) $\Rightarrow$ Symmetry axiom \ref{Sym} (Sym), but the opposite implication does not hold;
		\item[ii)] There is no logical dependence between Equality of Coalitions axiom \ref{EC} (EC) and Coalitional Anonymity axiom \ref{CA} (CA).
		\item[iii)] There is no logical dependence between Equality of Coalitions axiom \ref{EC} (EC) and Per-size Coalitional Anonymity axiom \ref{PCA} (PCA).
	\end{itemize}
\end{proposition}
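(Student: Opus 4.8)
The plan is to establish the three claims by combining a single direct implication with a family of separating examples, relying throughout on the facts already recorded in the excerpt: that the CP-majority satisfies Axiom~\ref{EC} (EC), that the lex-cel satisfies Axiom~\ref{CA} (CA), and that $R_{L^{(1)}}$ satisfies Axiom~\ref{PCA} (PCA).

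For (i), I would first prove $\text{Neu}\Rightarrow\text{Sym}$. Assume $R$ satisfies Neutrality and fix $\succsim$ with $S\cup\{i\}\sim S\cup\{j\}$ for every $S\in 2^{N\setminus\{i,j\}}$. Applying Axiom~\ref{N} with $\succsim'=\succsim$ is legitimate because the hypothesis $S\cup\{i\}\succsim S\cup\{j\}\Leftrightarrow S\cup\{j\}\succsim S\cup\{i\}$ holds (both sides are true for every $S$); the conclusion then reads $i R^\succsim j\Leftrightarrow j R^\succsim i$, and totality of $R^\succsim$ forces $i I^\succsim j$, i.e. Symmetry. To see the converse fails I would exploit the fact that the codomain $\bm{\mathcal{B}}(N)$ need not be transitive: fix a strict order $1<\dots<n$ on $N$ and, for each $\succsim$ and each pair $i\neq j$, set $i I^\succsim j$ whenever $i,j$ are CP-symmetric and otherwise rank the lower-indexed element strictly above. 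This $R$ satisfies Symmetry by construction, yet it violates Neutrality already on $N=\{1,2\}$: taking $\succsim$ with $\{1\}\succ\{2\}$ and $\succsim'$ with $\{2\}\succ'\{1\}$ (each completed into a total preorder) satisfies the hypothesis of Axiom~\ref{N}, but $1,2$ are not CP-symmetric in either ranking, so $1 P^\succsim 2$ and $1 P^{\succsim'}2$, contradicting $i R^\succsim j\Leftrightarrow j R^{\succsim'}i$.

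For the directions of (ii) and (iii) asserting that the anonymity axioms do \emph{not} imply EC, I would use excellence-based solutions, which ignore only the identity (resp. size) of coalitions but remain sensitive to their absolute position. Since the lex-cel satisfies Axiom~\ref{CA} and $R_{L^{(1)}}$ satisfies Axiom~\ref{PCA}, it suffices to show both violate Axiom~\ref{EC} on a common example. I would take $N=\{1,2,3\}$ and two rankings with the \emph{same} CP-comparison pattern for $1,2$ (so Axiom~\ref{EC} applies with the identity bijection) but with the two strict CP-comparisons $\{1\}$ vs $\{2\}$ and $\{1,3\}$ vs $\{2,3\}$, which point in opposite directions, placed in different equivalence classes, e.g.
$$\succsim:\ 13\succ 2\succ 1\succ 23\succ 123\sim 12\sim 3,\qquad \succsim':\ 2\succ' 1\succ' 13\succ' 23\succ' 123\sim' 12\sim' 3.$$
Here the earliest strict comparison favours $1$ in $\succsim$ but $2$ in $\succsim'$, so both the lex-cel and $R_{L^{(1)}}$ return $1 P^\succsim 2$ and $2 P^{\succsim'}1$, contradicting the EC-conclusion $i R^\succsim j\Leftrightarrow i R^{\succsim'}j$. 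This simultaneously yields ``CA but not EC'' and ``PCA but not EC''.

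For the directions asserting that EC does not imply CA or PCA, I would use the CP-majority (which satisfies Axiom~\ref{EC}). The key observation is that $|D_{ij}^\succsim|$ depends on \emph{which} coalition realises each CP-comparison, so a position-permutation that re-pairs the $i$-coalitions with the $j$-coalitions along the diagonal can change the majority count; since a single swap of two comparisons cannot reverse a strict majority, these examples need $N=\{1,2,3,4\}$. For CA, I would place the four $i$-coalitions just above their diagonal $j$-partners (all four CP-comparisons favouring $1$, hence $1 P^\succsim_{CP}2$) and let $\pi$ be a $4$-cycle on $2^{\{3,4\}}$; the ranking $\succsim'$ produced through the all-pairs identity of Axiom~\ref{CA} then has three comparisons favouring $2$, so $2 P^{\succsim'}_{CP}1$. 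For PCA the bijection must be size-preserving, so $\pi$ can only swap $\{3\}\leftrightarrow\{4\}$; I would instead balance the counts by letting the size-$0$ and size-$3$ comparisons favour $2$ and the two size-$2$ comparisons favour $1$ (a tie, $1 I^\succsim_{CP}2$), after which the swap turns the size-$2$ block into one-each and yields $2 P^{\succsim'}_{CP}1$. In both cases $1 R^\succsim 2$ but not $1 R^{\succsim'}2$, breaking the anonymity conclusion.

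I expect the main obstacle to be the CP-majority counterexamples: one must choose the permutation and the placement of \emph{all} coalitions so that the prescribed CP-comparison grid is realised by a genuine total preorder, the full hypothesis of the anonymity axiom holds (all pairs for CA, all same-size pairs for PCA, with the different-size comparisons left free), and the aggregate count actually flips the direction of $R_{CP}$. The size-preserving constraint in the PCA case is the most delicate point, because a single swap cannot reverse a strict within-size majority, so the reversal has to be engineered across size classes rather than within one.
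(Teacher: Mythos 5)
Your overall architecture coincides with the paper's: prove $\text{Neu}\Rightarrow\text{Sym}$ directly, then separate the axioms with the same three witnesses (CP-majority for ``EC but not CA/PCA'', lex-cel and $L^{(1)}$ for the converse directions), so the proof is correct and structurally the same. The differences are in the counterexamples, and two are worth recording. First, for the failure of $\text{Sym}\Rightarrow\text{Neu}$ the paper uses the solution that ranks $i$ above $j$ whenever $\{i\}\succ\{j\}$ and declares indifference otherwise; your solution (break every non-CP-symmetric pair in favour of the lower index) is genuinely different and in fact more robust, since it visibly violates the Neutrality conclusion $i R^{\succsim} j\Leftrightarrow j R^{\succsim'} i$ on two singleton-reversed rankings, whereas the paper's singleton-based rule is exactly the kind of solution that commutes with the relabelling in Axiom \ref{N}. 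Second, your claim that the CP-majority counterexamples ``need $N=\{1,2,3,4\}$'' because a single swap cannot reverse a strict majority rests on a misreading of what a violation of Axiom \ref{CA} requires: the biconditional $i R^{\succsim} j\Leftrightarrow i R^{\succsim'} j$ already fails when an indifference $i I^{\succsim}_{CP} j$ turns into a strict preference $j P^{\succsim'}_{CP} i$, and the paper exploits exactly this with a three-player example ($j\succ ik\succ jk\sim i$ together with the bijection $\pi(\emptyset)=k$). Your four-player constructions do work --- for the CA case only with the correct orientation of the $4$-cycle (the opposite orientation leaves $1 P^{\succsim'}_{CP} 2$ intact), a detail you should pin down --- but they are more elaborate than necessary. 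Your PCA example (tie split across size classes, broken by swapping $\{3\}\leftrightarrow\{4\}$) is essentially the paper's own construction.
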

Before proving the above proposition it is useful to recall that in \cite{algaba:hal-03388789} it is shown that Coalition Anonimity implies Per-size Coalition Anonymity.
\begin{proof}
	Statement $i)$.\\
	It is easy to observe that Neu $\Rightarrow$  Sym. To check that the opposite implication does not hold, consider a solution $R:\bm{\mathcal{R}}(\PN)\to \bm{\mathcal{B}}(N)$ such that, for any $i,j \in N$, $i P^{\succsim} j$ if $\{i\} \succ \{j\}$, and $i I^{\succsim} j$, otherwise. Clearly, $R^{\succsim}$ satisfies Sym, but not Neu.\\
	
	Statement (ii)\\
	Consider the CP-majority relation $R_{CP}$. It satisfies EC, but not CA. In fact, consider a coalitonal ranking $\succsim \in \PP(\PN)$ with $N=\{i,j,k\}$ and such that
	$j \succ ik \succ jk \sim i$. We have that $ i I^{\succsim}_{CP} j$. Now, consider a bijection on $\{\emptyset,k\}$ such that $\pi(\emptyset)=k$ and a ranking $\sqsupseteq \in \PP(\PN)$ such that $j \sqsupset i \sqsupset jk \sim ik$. A solution that satisfies CA on $\succsim$ and $\sqsupseteq$ would give the same ranking between $i$ and $j$ over the two coalitional rankings. But we have $ j P^{\sqsupseteq}_{CP} i$, so $R_{CP}$ does not saisfy CA.
	
	Now, consider the lex-cel solution $R_{\ran}$. The lex-cel satisfies CA, but not EC. In fact, consider for instance the coalitional ranking $\succsim \in \PP(\PN)$ with $N=\{i,j,k\}$ and such that
	$ik \succ jk \succ j \succ i$, and another ranking $\sqsupseteq$ such that $j \sqsupset i \sqsupset ik \sqsupset jk$. A solution that satisfies EC on $\succsim$ and $\sqsupseteq$  (with the identity bijection in Axiom \ref{EC}) should rank $i$ and $j$ in the same way in both coalitional rankings $\succsim$ and $\sqsupseteq$. Indeed coalition $ik$ is better then coalition $jk$ in both $\succsim$ and $\sqsupseteq$ and coalition $j$ is better then coalition $j$ in both $\succsim$ and $\sqsupseteq$. On the other hand the lex-cel yields $i P^{\succsim}_{\ran} j$
	and $j P^{\sqsupseteq}_{\ran} i$. So, lex-cel does not satisfy EC. \\
	
	Statement (iii)\\ 
    Consider the CP-majority relation $R_{CP}$. It satisfies EC, but not PCA. In fact, consider a coalitonal ranking $\succsim \in \PP(\PN)$ with $N=\{i,j,k,l\}$ and such that
	$jl \succ ik \succ jk \sim il\succ \mathcal{C} $, where $\mathcal{C}$ is an equivalence class containing all the other coalitions not already displayed. We have that $ i I^{\succsim}_{CP} j$. Now, consider a size-invariant bijection on $\{\emptyset,k,l,kl\}$ such that $\pi(k)=l$ and $\pi(l)=k$ and a ranking $\sqsupseteq \in \PP(\PN)$ such that $jl \sqsupset il \sqsupset jk \sim ik$. A solution that satisfies PCA on $\succsim$ and $\sqsupseteq$ would give the same ranking between $i$ and $j$ over the two coalitional rankings. But we have $ j P^{\sqsupseteq}_{CP} i$, so $R_{CP}$ does not saisfy PCA.
	
	Now, consider the $L^{(1)}$ solution, $R_{L^{(1)}}$, then it satisfies PCA, but not EC. In fact, taking the same example used in Statement (ii) it holds  $i P^{\succsim}_{L^{(1)}} j$
	and $j P^{\sqsupseteq}_{L^{(1)}} i$, but, as already observed, a solution that satisfies EC on $\succsim$ and $\sqsupseteq$ should rank $i$ and $j$ in the same way in both coalitional rankings $\succsim$ and $\sqsupseteq$. So, $L^{(1)}$ solution does not satisfy EC.
	
\end{proof}

	Given the two rankings $\succsim,\succsim'$ and  and two players $i,j$, in \cite{bernardi:hal-02191137}, the authors call tha coalitional ranking $\succsim'$ $i$-improving and $j$-invariant with respect to $\succsim$ if the difference between the two rankings relies on the fact that in $\succsim'$ some subsets containing $i$ but not $j$ are strictly better ranked than in $\succsim'$, and no subset containing $j$ has changed its ranking position with respect to $\succsim'$. Moreover they introduce the following axiom 
	\begin{axiom}[ Monotonicity (M) ] \label{Mono}
		A solution $R: \bm{\mathcal{R}}(\PN) \longrightarrow \bm{\mathcal{B}}(N)$ is monotone if for all power relations $\succsim,\succsim'\in \bm{\mathcal{R}}(\PN)$ and  $i, j \in N$  such that $i I^{\succsim}j$ and $\succsim'$ is $i$-improving and $j$-invariant with respect to $\succsim$, then it holds that $i P^{\succsim'} j$.
	\end{axiom}
	
	It is not difficult to prove that 
	\begin{proposition}\,
  \begin{itemize}
      \item If a solution $R:\bm{\mathcal{R}}(\PN) \longrightarrow \bm{\mathcal{B}}(N)$ satisfies Axioms \ref{Sym} (SYM) and \ref{Mono} (M) then $R$ satisfies Axiom \ref{Des2} (SDes).
      \item There exists a solution $R:\bm{\mathcal{R}}(\PN) \longrightarrow \bm{\mathcal{B}}(N)$ satisfying Axioms \ref{Des2} (SDes) and \ref{Sym} (SYM), but not Axiom  \ref{Mono} (M).
  \end{itemize}
	\end{proposition}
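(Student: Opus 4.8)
To show that SYM together with M forces SDes, I would start from a coalitional ranking $\succsim$ witnessing the hypothesis of Axiom~\ref{Des2} (SDes) for a pair $i,j$ (so $S\cup\{i\}\succsim S\cup\{j\}$ for every $S\subseteq N\setminus\{i,j\}$, with $T\cup\{i\}\succ T\cup\{j\}$ for at least one $T$), and manufacture an auxiliary ranking $\succsim_0$ in which $i$ and $j$ are perfectly symmetric. Concretely, I keep the equivalence class of every coalition of $\succsim$ fixed, except that each coalition of the form $S\cup\{i\}$ with $S\subseteq N\setminus\{i,j\}$ is demoted to the class occupied by its partner $S\cup\{j\}$. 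This is cleanest to phrase through a rank (level) function: assign to each coalition its rank in $\succsim$, then lower the rank of each $i$-only coalition $S\cup\{i\}$ to the rank of $S\cup\{j\}$, leaving every coalition containing $j$, every coalition containing both $i$ and $j$, and every coalition containing neither untouched; let $\succsim_0$ be the induced total preorder. By construction $S\cup\{i\}\sim_0 S\cup\{j\}$ for all $S$, so Axiom~\ref{Sym} (Sym) yields $i\,I^{\succsim_0}j$.

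\textbf{Closing the first statement.} The second step is to recover $\succsim$ from $\succsim_0$ by raising exactly the $i$-only coalitions: each $S\cup\{i\}$ sits in $\succsim$ at a rank at least that of $S\cup\{j\}$ (by the SDes hypothesis $S\cup\{i\}\succsim S\cup\{j\}$), hence weakly above its $\succsim_0$-position, with a strict increase for the witness $T$. Since no coalition containing $j$ (nor any coalition containing both or neither of $i,j$) changed rank, $\succsim$ is $i$-improving and $j$-invariant with respect to $\succsim_0$, and Axiom~\ref{Mono} (M) applied with base $\succsim_0$ (where $i\,I^{\succsim_0}j$) gives $i\,P^{\succsim}j$, which is exactly the conclusion of SDes. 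The delicate point is purely definitional: one must confirm that ``raising the $i$-only coalitions past their partners $S\cup\{j\}$'' genuinely matches the notion of $i$-improving/$j$-invariant, i.e.\ that an $i$-coalition moving upward across a $j$-coalition is attributed to the former moving rather than to the latter changing position. This is the only place where the precise reading of Axiom~\ref{Mono} (M) matters, and it is the main obstacle to making the argument fully rigorous.

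\textbf{Plan for the second statement.} For the existence part I would exhibit the CP-majority $R_{CP}$ as a witness. By the proof of Theorem~\ref{theorem:CPaxiom1} it satisfies Axiom~\ref{Des2} (SDes), and since it satisfies Neutrality it also satisfies Axiom~\ref{Sym} (Sym) by Proposition~\ref{logic1}. It then remains only to violate Monotonicity with one example. Take $N=\{1,2,3\}$ and
$$ 1 \sim 3 \succ 123 \succ 12 \succ 23 \succ 13 \succ 2. $$
The only CP-comparisons for the pair $1,2$ are $\{1\}$ versus $\{2\}$ (won by $1$) and $13$ versus $23$ (won by $2$), so $|D_{12}^{\succsim}|=|D_{21}^{\succsim}|=1$ and $1\,I^{\succsim}_{CP}2$. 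Now raise the singleton $\{1\}$ strictly above $\{3\}$, producing
$$ 1 \succ 3 \succ 123 \succ 12 \succ 23 \succ 13 \succ 2, $$
which improves only the $1$-coalition $\{1\}$ and leaves every coalition containing $2$ in place, so $\succsim'$ is $1$-improving and $2$-invariant with respect to $\succsim$. Yet both CP-comparisons are unchanged, whence $1\,I^{\succsim'}_{CP}2$, contradicting the conclusion $1\,P^{\succsim'}_{CP}2$ demanded by Axiom~\ref{Mono} (M). Thus $R_{CP}$ satisfies SDes and Sym but not M. The work here reduces to checking the cardinalities of the $D$-sets before and after the move, so I expect no real obstacle in this part.
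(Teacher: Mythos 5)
The paper itself gives no proof of this proposition (it is introduced with ``It is not difficult to prove that'' and left to the reader), so there is nothing to compare against; your argument fills that gap and is correct. For the first bullet, the two-step construction --- demote every $S\cup\{i\}$ to the class of $S\cup\{j\}$ to obtain a ranking $\succsim_0$ in which Axiom \ref{Sym} forces $i\,I^{\succsim_0}j$, then observe that the original $\succsim$ is recovered from $\succsim_0$ by weakly raising only coalitions containing $i$ but not $j$, strictly for the witness $T\cup\{i\}$ --- is exactly the natural argument, and the definitional worry you flag is harmless: under the reading of $i$-improving/$j$-invariant in \cite{bernardi:hal-02191137}, invariance of the $j$-coalitions refers to their positions relative to the unmoved coalitions, so an $i$-coalition overtaking its partner $S\cup\{j\}$ is attributed to the former. (One could also note that $\succsim_0$ is a genuine total preorder since it is induced by a level function, and that the case $\succsim_0=\succsim$ is excluded by the strict witness $T$.) For the second bullet, $R_{CP}$ is a legitimate witness --- SDes and Neu (hence Sym) are established in Theorem \ref{theorem:CPaxiom1} and Proposition \ref{logic1} --- and your counterexample is verified directly: in $1\sim 3\succ 123\succ 12\succ 23\succ 13\succ 2$ one has $D_{12}=\{\emptyset\}$ and $D_{21}=\{3\}$, so $1\,I^{\succsim}_{CP}2$, and lifting $\{1\}$ strictly above $\{3\}$ alters neither CP-comparison, so indifference persists where Axiom \ref{Mono} would demand strict preference.
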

	
    The rankig solution lex-cel and $L^{(1)}$ are defined as total binary realtions, but they are actually total preorder because the lexicogrpahic order is transitive.

\subsection{Independence of the axioms for CP-majority}
We prove that Axioms \ref{Des2} (SDes), \ref{N} (Neu), \ref{EC} (EC) and \ref{CAT} (CAT) are independent.\\

\textbf{Strictly desirability is not satisfied:} Consider the solution $R_{Id}$ that ranks all the player equally, i.e. $\forall i,j\in N$, $i I^{\succsim}_{Id} j$ for all $\succsim\in\PP(\PN)$. It easy to observe that this solution satisfies Axioms  \ref{N} (Neu), \ref{EC} (EC) and \ref{CAT} (CAT), but not  Axiom \ref{Des2} (SDes).\\

\textbf{Equality of coalitions is not satisfied:} Consider the solution $R_{EC}$ that coincides with the CP-majority solution if
\begin{itemize}
	\item for a ranking $\succsim\in\PP(\PN)$, the CP-majority produces a total order
	\item for a ranking $\succsim\in\PP(\PN)$ and two players $i,j\in N$ such that  $\emptyset$ does not belong neither to $D_{i,j}^\succsim$ nor $D_{j,i}^\succsim$ and $|D_{i,j}^\succsim|=|D_{j,i}^\succsim|$.
\end{itemize}
In the other cases if $\emptyset\in D_{i,j}^\succsim$ then $i P_{EC}^\succsim j$.  By definition this solution satisfies Axiom  \ref{Des2} (SDes) and \ref{N} (Neu). Moreover given $\succsim,\succsim'\in\PP(\PN)$ such that $i I_{EC}^{\succsim} j$ and $i I_{EC}^{\succsim'} j$ then $i\sim j$ and $i\sim ' j$ then removing a subset of $\succsim\cap\succsim'$ does not affect the ranking between the player, because either we can apply CP-majority or we add the emplty set in both $D_{i,j}^\succsim$ and $D_{i,j}^{\succsim'}$ ( or $D_{j,i}^\succsim$ and $D_{j,i}^{\succsim'}$). Then this solution satisfies Axiom \ref{CAT} (CAT). Consider the rankings $\succsim,\succsim'\in\PP(\PN)$ such that $|D_{i,j}^\succsim|=|D_{j,i}^\succsim|$ and $|D_{i,j}^{\succsim'}|=|D_{j,i}^{\succsim'}|$, $\emptyset\in D_{i,j}^\succsim$, $T\notin D_{j,i}^\succsim$ and  $\emptyset\notin D_{i,j}^{\succsim'}$, $T\in D_{j,i}^{\succsim'}$ then $i P_{EC}^\succsim j$ and $j P_{EC}^\succsim i$. Taking the permutation $\pi(\emptyset)=T $ we observe that $R_{EC}$ does not satisfies Axiom \ref{EC} (EC).\\

\textbf{Neutrality is not satisfied:} Consider a solution $R_N$ that coincides with $R_{CP}^\succsim$ when for a ranking $\succsim\in\PP(\PN)$ the CP-majority produces a total order. In the other cases, given  $\succsim$, it holds  $iP^{\succsim}_Nj$ if and only if $i < j$. This solution satisfies \ref{Des2} (SDes), \ref{EC} (EC) and \ref{CAT} (CAT), but not Axiom \ref{N} (Neu). \\

\textbf{Consistency after tiebreaks is not satisfied:}  Given a power relation $\succsim \in \bm{\mathcal{R}}\left(\PN\right)$ with associated quotient order $\Sigma_1\succ\ldots\succ\Sigma_l$ and two elements $i, j \in N$ we define the sets: $D_{i j}^k(\succsim)=\{S \in$ $\left.2^{N \backslash\{i, j\}}: S \cup\{i\}\in \Sigma_k, S \cup\{j\}\in\Sigma_{k'}, k<k'\right\}$, for $k\in\{1,\ldots,l-1\}$. Consider the solution $R_{CAT}$ such that
\begin{itemize}
	\item $i P_{CAT}^{\succsim} j$ if there exist $\hat{k}\in\{1,\ldots,l-1\}$ such that $|D_{i j}^{\hat{k}}(\succsim)|>|D_{j i}^{\hat{k}}(\succsim)|$ and, if $\hat{k}>1$, $|D_{i j}^k(\succsim)|=|D_{j i}^k(\succsim)|$, for all $k\in\{1,\ldots \hat{k}-1\}$,
	\item $i I_{CAT}^{\succsim} j$ if $|D_{i j}^k(\succsim)|=|D_{j i}^k(\succsim)|$, for all $k\in\{1,\ldots, l-1\}$.
\end{itemize}
This solution satisfies \ref{Des2} (SDes) because, all $k\in\{1,\ldots l-1\}$, either $|D_{i j}^k(\succsim)|=|D_{j i}^k(\succsim)|=0$ or $|D_{i j}^k(\succsim)|>|D_{j i}^k(\succsim)|$, for . Axiom \ref{EC} (EC) is satisfied because the permutation $\pi$ does not change the cardinality of the sets $|D_{i j}^k(\succsim)|, |D_{j i}^k(\succsim)|$. In order to show that this solution satisfies Axiom \ref{N} (Neu) it is enough to observe that permuting $i$ with player $j$ all the relations are reversed. Consider the rankings $\succsim,\succsim'$ with associated quotient orders
\begin{equation*}
	\Sigma_1 \succ \Sigma_2\qquad \Sigma_1' \succ' \Sigma_2' \succ' \Sigma_3'
\end{equation*}
such that $i I_{CAT}^\succsim j$ and $i I_{CAT}^{\succsim'} j $. Suppose that there are coalitions $S,T\subseteq 2^{N\setminus\{i,j\}}$ such that
\begin{align*}
	& S\cup\{i\}\sim S\cup\{j\}\in\Sigma_1 & S\cup\{i\}\sim' S\cup\{j\}\in\Sigma_3'  \\
	& T\cup\{i\}\sim T\cup\{j\}\in\Sigma_2 & T\cup\{i\}\sim' T\cup\{j\}\in\Sigma_2'.
\end{align*}
Take the set $B=\Big\{\big(S\cup\{i\},S\cup\{j\} \big); \big(T\cup\{j\},T\cup\{i\} \big)  \Big\}\subset \succsim\cap\succsim'$ then
\begin{align*}
	& S\in|D_{ji}^1|,\, \,  T\in|D_{ij}^2|,\,  \Rightarrow jP_{CAT}^{\succsim\setminus B} j   \\
	& S\in|D_{ji}^3|,\, \,  T\in|D_{ij}^2|,\,  \Rightarrow iP_{CAT}^{\succsim'\setminus B} i.
\end{align*}
The previous example shows that $R_{CAT}$ does not satisfies Axiom \ref{CAT} (CAT).

\subsection{Independence of the axioms for lex-cel}
We prove that Axioms \ref{Des2} (SDes), \ref{Sym} (SYM), \ref{CA} (CA) and \ref{indep below} (IWS) are independent.\\

\textbf{Strictly desirability is not satisfied:} Take an arbitrary ranking $\succsim\in\PP(\PN)$ with associated quotient order $\Sigma_1\succ\ldots\succ\Sigma_l$ and for all player $i\in N$ take the $l$-dimensional vector $\theta^{\succsim}(i)$ defined for the lex-cel solution. Define the ranking $R_{SD}^{\succsim}$ such that $i I_{SD}^{\succsim} j$ if $\theta^{\succsim}(i)=\theta^{\succsim}(j)$ and $i P_{SD}^{\succsim} j$ if $\exists s: i_k=j_k, k=1,\dots, s-1\; \mbox{ and } i_s<j_s$. By definition $R_{SD}$ does not satisfies Axiom \ref{Des2} (SDes), but, as lex-cel ranking, this solution satisfies Axiom \ref{Sym} (SYM) and \ref{indep below} (IWS). Moreover the permutation $\pi$ in the definition of Axiom \ref{CA} (CA) does not change the entries of the vector $\theta^{\succsim}$ implying that $R_{SD}$ satisfies Axiom \ref{CA} (CA).\\

\textbf{Symmetry is not satisfied:} Consider a solution $R_S$ that coincides with $R_{\ran}^\succsim$ when, for a ranking $\succsim\in\PP(\PN)$, the lex-cel produces a total order. In the other cases, given  $\succsim$, it holds  $iP_S^{\succsim}j$ if and only if $i < j$. This solution satisfies Axioms \ref{Des2} (SDes) and \ref{indep below} (IWS) by definition, moreover when $i I_{\ran}^\succsim j$  the solution depends only on the name of the players and so it satisfies Axiom \ref{CA} (CA) , but not \ref{Sym} (SYM).\\

\textbf{Coalition anonymity is not satisfied:} Consider a solution $R_{CA}$ that coincides with $R_{\ran}^\succsim$ when, for a ranking $\succsim\in\PP(\PN)$, the lex-cel produces a total order. If there are $i, j$ with $\theta^{\succsim}(i)=\theta^{\succsim}(j)=\left(a_1, \ldots, a_l\right)$, define $\hat{k}$ the first integer such that $a_{\hat{k}}>0$. Then $i P_{CA}^{\succsim} j$ if the largest subset $L_i$ containing $i$ in $\Sigma_{\hat{k}}$ contains more elements than the largest subset $L_j$ containing $j$ in $\Sigma_{\hat{k}}$ and $i I_{CA}^{\succsim} j$ if $|L_i|=|L_j|$. By definition $R_{CA}$ satisfies Axioms \ref{Des2} (SDes) and \ref{indep below} (IWS), moreover if, $S\cup\{i\}\sim S\cup\{j\}$, $\forall S\in2^{N\setminus\{i,j\}}$, then  $|L_i|=|L_j|$ and Axiom \ref{Sym} (SYM) is satisfied. Axiom \ref{CA} (CA) is not satisfied because defining a permutation $\pi$ such that $|\pi(L_i)|<|L_j|$ the rank between $i$ and $j$ is exchanged.\\

\textbf{Independence from the worst set is not satisfied:} Take an arbitrary ranking $\succsim\in\PP(\PN)$ with associated quotient order $\Sigma_1\succ\ldots\succ\Sigma_{l+1}$ and for all player $i\in N$ define the $l$-dimensional vector $\theta^{\succsim,+}(i)=(i_1^+,\ldots,i_{l+1}^+)$ where $i_k^+=i_k+i_{k+1}$ for all $k\in\{1,\ldots,l\}$, and $\theta^{\succsim}(i)=(i_1,\ldots,i_k)$ is defined for the lex-cel solution. Define the ranking $R_{IW}^{\succsim}$ such that $i I_{IW}^{\succsim} j$ if $\theta^{\succsim,+}(i)=\theta^{\succsim,+}(j)$ and $i P_{IW}^{\succsim} j$ if $\exists s: i_k^+=j_k^+, k=1,\dots, s-1\; \mbox{ and } i_s^+>j_s^+$. Observe that if $\theta^{\succsim}(i)\geq_L \theta^{\succsim}(j)$ then $\theta^{\succsim,+}(i)\geq_L \theta^{\succsim,+}(j)$ and Axiom \ref{Des2} (SDes) is satisfied. Moreover if $\theta^{\succsim}(i)=\theta^{\succsim}(j)$ then $\theta^{\succsim,+}(i)= \theta^{\succsim,+}(j)$ and Axiom \ref{Sym} (SYM) is satisfied. Finally, given a permutation$\pi$ and the associtated coalitional ranking $\succsim^\pi$ as in the definition of Axiom \ref{CA} (CA), it holds that as $\theta^{\succsim}(i)=\theta^{\succsim^\pi}(i)$ then  $\theta^{\succsim,+}(i)=\theta^{\succsim^\pi,+}(i)$ and Axiom \ref{CA} (CA) is satisfied. Solution $R_{IW}$ does not satisfies Axiom \ref{indep below} (IWS) because in the definition of $i_l^+$ last equivalence class plays a role and taking two different partitions of $\Sigma_{l+1}$ the values of $i_l^+$ and $j_l^+$ can be increased or decreased.\\

\subsection{Independence of the axioms for  $L^{(1)}$}

We prove that Axioms  \ref{Des2} (SDes), \ref{Sym} (SYM), \ref{PCA} (PCA), \ref{indep below} (IWS), \ref{kdesi} ($k$-DD) and \ref{CI} (CI) are independent.\\

\textbf{Strictly desirability is not satisfied:} Take an arbitrary ranking $\succsim\in\PP(\PN)$ with associated quotient order $\Sigma_1\succ\ldots\succ\Sigma_l$.  Consider a solution $R_{SD}$ that coincides with $R_{L^{(1)}}$ when, for a coalitional raking $\succsim$ and players $i,j\in N$, it does not hold that $M_{sk}^{\succsim,i}\geq M_{sk}^{\succsim,j}$, for all $s\in\{1,\ldots,n\}$ and $k\in\{1,\ldots,l\}$. If $M_{sk}^{\succsim,i}= M_{sk}^{\succsim,j}$, for all $s,k$ then $i I_{SD}^\succsim j$. Otherwise Let $\hat{s}$ and $\hat{k}$ be the minimum indexes such that $M_{\hat{s}\hat{k}}^{\succsim,i}> M_{\hat{s}\hat{k}}^{\succsim,j}$ then $j \ P_{SD}^\succsim \ i $. By definition $R_{SD}$ does not satisfies Axiom \ref{Des2} (SDes), but, as $L^{(1)}$  solution , it satisfies Axioms \ref{Sym} (SYM), \ref{indep below} (IWS), \ref{PCA} (PCA) and \ref{kdesi} ($k$-DD). Axiom \ref{CI} (CI) is satisfied by $R_{SD}$ because adding a subset of coalitions from the last equivalence class to the second-last equivalence class or to the union of first $l-1$ equivalence classes the relative ranking between $i$ and $j$ does not change.  \\

\textbf{Symmetry is not satisfied:}  Consider a solution $R_S$ that coincides with $R_{L^{(1)}}^\succsim$ when, for a coalitional ranking $\succsim\in\PP(\PN)$, the $L^{(1)}$  solutionproduces a total order. In the other cases it holds  $iP_S^{\succsim}j$ if and only if $i < j$. This solution satisfies Axioms \ref{Des2} (SDes), \ref{indep below} (IWS) and \ref{kdesi} ($k$-DD) by definition, moreover when $i I_{L^{(1)}}^\succsim j$  the solution depends only on the name of the players and so it satisfies Axiom \ref{CA} (PCA) and \ref{CI} (CI), but not \ref{Sym} (SYM). \\

\textbf{Independence from the worst set is not satisfied:}  Take an arbitrary ranking $\succsim\in\PP(\PN)$ with associated quotient order $\Sigma_1\succ\ldots\succ\Sigma_{l+1}$.For each $i\in N$, define a new matrix $M^{\succsim,+,i}$ such that
\begin{align*}
	& M^{\succsim,+,i}_{s,k} = M^{\succsim,i}_{s,k}+M^{\succsim,i}_{s,k+1},\ \mbox{ if } s\in\{1,\ldots, n\}, k\in\{1,\ldots,l\},
\end{align*}
where $M^{\succsim,i}$ is the matrix defined for the $L^{(1)}$  solution . Define the solution $R^{\succsim}_{L^{(1)},+}$ defined as $R^{\succsim}_{L^{(1)}}$, but taking $M^{\succsim,+,i}$ instead of $M^{\succsim,i}$. Observe that if $M^{\succsim,i}_{s,k} \geq M^{\succsim,j}_{s,k}$ then $M^{\succsim,+,i}_{s,k} \geq M^{\succsim,+,j}_{s,k}$ and Axioms \ref{Des2} (SDes) and \ref{kdesi} ($k$-DD) are satisfied. Moreover if $M^{\succsim,i}_{s,k} = M^{\succsim,j}_{s,k}$ then $M^{\succsim,+,i}_{s,k} = M^{\succsim,+,j}_{s,k}$ and Axiom \ref{Sym} (SYM) is satisfied. Given a permutation $\pi$ and the associated coalitional ranking $\succsim^\pi$  as in the definition of Axiom \ref{PCA} (PCA) it holds $M^{\succsim,i}=M^{\succsim^\pi,i}$ then  $M^{\succsim,+,i}=M^{\succsim^\pi,+,i}$ and Axiom \ref{PCA} (PCA) is satisfied. Given two rankings $\succsim,\succsim'\in\PP(\PN)$ such that $i I^{\succsim}_{L^{(1)},+} j$ and $i I^{\succsim'}_{L^{(1)},+} j$ then $M^{\succsim,+,i}=M^{\succsim,+,j}$ and $M^{\succsim',+,i}=M^{\succsim',+,j}$. Axiom \ref{CI} (CI) is satisfied because the values of  $M^{\succsim,+,i}$ and $M^{\succsim,+,j}$ are not modified by a changing, coherent to the definition of the axiom, in the last equivalence class. Solution $R_{IW}$ does not satisfies Axiom \ref{indep below} (IWS) because in the definition of $M^{\succsim,+,i}_{s,l}$ last equivalence class plays a role and taking two different partitions of $\Sigma_{l+1}$ we can increase or decrease the values of $M^{\succsim,+,i}_{s,l}$ and $M^{\succsim,+,j}_{s,l}$. \\

\textbf{$k$-desirability on dichotomous power relations  is not satisfied:} Take an arbitrary ranking $\succsim\in\PP(\PN)$ with associated quotient order $\Sigma_1\succ\ldots\succ\Sigma_{l}$.  For all player $i\in N$ take the matrix $M^{\succsim,i}$ defined as for the $L^{(1)}$  solution . Define the ranking $R^{\succsim}_{kD}$ such that $$
M^{\succsim, i}=M^{\succsim, j} \Rightarrow \ i I^{\succsim}_{kD} j
$$
and
$$
i \ P^\succsim_{kD} \ j \Longleftrightarrow
\begin{array}{l}
	\exists \hat{s} \in \{1, \ldots, n\} \mbox {and } \hat{k} \in  \{1, \dots, l-1\} \mbox{ such that } \\
\qquad \left\{\begin{array}{ll}
	M^{\succsim, i}_{sk} = M^{\succsim, j}_{sk},                         & \forall s \in \{1, \ldots, n\}, \forall k \in \{1, \ldots, \hat{k}-1\}; \smallskip \\
	M^{\succsim, i}_{s\hat{k}} = M^{\succsim, j}_{s\hat{k}},             & \forall s \in \{\hat{s}+1,\ldots,n\}; \smallskip                                   \\
	M^{\succsim, i}_{\hat{s}\hat{k}} < M^{\succsim, j}_{\hat{s}\hat{k}}. &
\end{array}\right.
\end{array}
$$
    This solution satisfies, by definition, Axioms  \ref{Des2} (SDes) and \ref{Sym} (SYM) because if $S\cup\{i\}\succsim S\cup\{j\}$, $\forall S\in2^{N\setminus\{i,j\}}$, then $M^{\succsim, i}_{sk} \geq M^{\succsim, j}_{sk}$ \ref{Sym} (SYM), for all indeces $s$ and $k$. Axioms \ref{indep below} (IWS) and \ref{PCA} (PCA) are satisfied because the ranking does not depend on the equivalence classes $\Sigma_k$, with $k>\hat{k}$, and because the values of the matrices $M^{\succsim, i}, M^{\succsim, i}$ does not change with a size invariant permutation $\pi$.  Solution $R_{kD}$ satisfies Axiom \ref{CI} (CI) because in new rankings  $\succsim',\succsim''$ the values of the matrices $M^{\succsim, i}$, $M^{\succsim, j}$ and $M^{\succsim', i}$, $M^{\succsim', j}$ are modified by the same quantity.  Axiom \ref{kdesi} ($k$-DD) is not satisfied because if $M^{\succsim, i}_{s\hat{k}} < M^{\succsim, j}_{s\hat{k}}$, for some $s<\hat{s}$, and according to the axiom it should be $j P^{\succsim}_{kD} i$. \\

\textbf{Per-size Coalitional Anonymity is not satisfied:} Take an arbitrary ranking $\succsim\in\PP(\PN)$ with associated quotient order $\Sigma_1\succ\ldots\succ\Sigma_l$. Define, for each player $i\in N$, the set $E_i^k=\{j\in N\setminus\{i\}:\{i,j\}\in\Sigma_k\}$. Take a solution $R_{PCA}$ that coincides with  $R_{L^{(1)}}^{\succsim}$ when the $L^{(1)}$  solutionproduces a total order. Given two players $i,j\in N$ such that $i I_{L^{(1)}}^{\succsim} j $,  let $\hat{k}$ the first index such that if $\min\{ E_i^{\hat{k}}\} \neq \min \{ E_j^{\hat{k}}\}$ then $i P^{\succsim}_{PCA} j$ if and only if $\min\{ E_i^{\hat{k}}\} < \min \{ E_j^{\hat{k}}\}$ and $i I^{\succsim}_{PCA} j$, otherwise. By definition this solution satisfies Axioms \ref{Des2} (SDes), \ref{Sym} (SYM) and \ref{kdesi} ($k$-DD), but not \ref{PCA} (PCA). Observe that, when $L^{(1)}$ does not apply, only the equivalence class $\Sigma_{\hat{k}}$ is involved to determine the raking of players then Axiom \ref{indep below} (IWS) is satisfied. Solution $R_{PCA}$ satisfies Axiom \ref{CI} (CI) because if $i I^{\succsim}_{PCA} j$ then that $\min\{ E_i^{k}\} = \min \{ E_j^{k}\}$, for every $k$, then adding a subset of coalitions from the last equivalence class to the second-last equivalence class or to the union of the first $l-1$ equivalence classes the relative ranking between $i$ and $j$ does not change.\\

\textbf{Consistency after indifference is not satisfied:} Take an arbitrary ranking $\succsim\in\PP(\PN)$ with associated quotient order $\Sigma_1\succ\ldots\succ\Sigma_l$. Take a solution $R_{CI}$ that coincides with  $R_{L^{(1)}}^{\succsim}$ when the $L^{(1)}$  solutionproduces a total order. Denote by $i_k^{\min}$ the number of sets in $\Sigma_{k}$ containing $i$ with minimal cardinality, i.e. 
\[ i_k^{\min} = |\{S\in \Sigma_k: i\in S, |S|<|G|, \mbox{ where } i\in G\in\Sigma_k\}| \]
for $k=1, \dots,l$. Let $\theta^\succsim_{\min}(i)$ be the $l$-dimensional vector $\theta^\succsim(i)=(i_1^{\min},\dots,i_l^{\min})$ associated to $\succsim$. Take two players $i,j\in N$ such that $i I_{L^{(1)}}^{\succsim} j$ then
\begin{align*}
& \mbox{ if } \theta^\succsim_{\min}(i)\ge_L \theta^\succsim_{\min}(j) \mbox{ then } i P^{\succsim}_{CI} j; \\
& \mbox{ otherwise } i I^{\succsim}_{CI} j.
\end{align*}
This solution satisfies, by definition, Axioms  \ref{Sym} (SYM), \ref{indep below} (IWS) and \ref{kdesi} ($k$-DD). Moreover vector $\theta^\succsim_{\min}$ depends only on the number of coalition of minimal cardinality containing a player then Axiom \ref{PCA} (PCA) is satisfied. Solution $R_{CI}$ does not satisfies Axiom \ref{CI} (CI) because a minimal set in the second equivalence class of a ranking could be not minimal in the union of the last $l-1$ equivalence classes.\\

The following proposition concludes the axiomatic analysis presented in this work.
\begin{proposition}\label{completeaxioms}
The following statements  hold true:
\begin{itemize}
\item[(i)] lex-cel, \emph{dual-lex}, $L^{(1)}$ solution and $L_*^{(1)}$ solution satisfy Axiom \ref{N} (Neu);
\item[(ii)] \emph{dual-lex} and $L_*^{(1)}$ solution do not satisfy Axiom \ref{EC} (EC) ;
\item[(iii)] lex-cel, \emph{dual-lex}, $L^{(1)}$ solution and $L_*^{(1)}$ solution satisfy Axiom \ref{CAT} (CAT) ;
\item[(iv)] $L^{(1)}$ solution and $L_*^{(1)}$ solution do not satisfy Axiom \ref{CA} (CA) ;
\item[(v)] CP-majority, \emph{dual-lex} and $L_*^{(1)}$ solution do not satisfy Axiom \ref{indep below} (IWS) ;
\item[(vi)] CP-majority, lex-cel and $L^{(1)}$ solution do not satisfy Axiom \ref{indep best} (IWS) ;
\item[(vii)] CP-majority, lex-cel and \emph{dual-lex} do not satisfy Axiom \ref{kdesi} ($k$-DD);
\item[(viii)]CP-majority, does not satisfy Axiom \ref{CI} (CI);
\item[(ix)] lex-cel and \emph{dual-lex} both satisfy Axiom \ref{CI} (CI);
\end{itemize}
\end{proposition}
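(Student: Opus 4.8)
The plan is to split the nine claims into the \emph{satisfaction} statements (i), (iii), (ix), which I would settle by direct structural arguments, and the \emph{violation} statements (ii), (iv), (v), (vi), (vii), (viii), which I would settle by exhibiting one separating coalitional ranking for each solution named. Throughout I would exploit the duality between lex-cel and \emph{dual-lex} (and between $L^{(1)}$ and $L_*^{(1)}$): reversing the quotient order interchanges Axiom \ref{indep below} (IWS) with Axiom \ref{indep best} (IBS) and leaves Axioms \ref{N} (Neu), \ref{CAT} (CAT), \ref{PCA} (PCA), \ref{kdesi} ($k$-DD) and \ref{CI} (CI) invariant in content, so the \emph{dual-lex}/$L_*^{(1)}$ half of (i), (iii), (v), (vi), (ix) would follow automatically from the lex-cel/$L^{(1)}$ half.

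For the satisfaction statements I would argue entirely in terms of the occurrence profiles $\theta^\succsim(\cdot)$ (for lex-cel and \emph{dual-lex}) and the matrices $M^{\succsim,\cdot}$ (for $L^{(1)}$ and $L_*^{(1)}$). For (i), a relabelling of $i$ and $j$ interchanges these profiles, and each of the four solutions orders players by a fixed lexicographic rule applied to its own profile; hence the induced relation is reversed, which is exactly Axiom \ref{N} (Neu). For (iii) I would invoke Remark \ref{rem:cat}: the admissible set $B$ splits one common equivalence class $\Sigma$ into the \emph{same} sub-blocks $\Sigma\setminus\Omega \succ \Omega$ in both $\succsim$ and $\succsim'$; since $i I^\succsim j$ and $i I^{\succsim'} j$ force equal profiles in every class, the first class at which the profiles of $i$ and $j$ can differ after the tie-break is the fixed set $\Sigma\setminus\Omega$, whose contribution depends only on the sets $\Sigma$ and $\Omega$ and is therefore identical for $\succsim\setminus B$ and $\succsim'\setminus B$; the two comparisons thus agree. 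For (ix) I would carry out the one genuine computation: writing $a_x$ (resp.\ $b_x$) for the number of coalitions containing $x\in\{i,j\}$ inside the moved block $\Sigma$ (resp.\ inside $\Sigma_l\setminus\Sigma$), the hypothesis $i I^\succsim j$ gives equality in all higher classes together with $a_i+b_i=a_j+b_j$, and one checks that both the shifted ranking $\succsim'$ and the dichotomous ranking $\succsim''$ reduce the lex-cel comparison of $i$ and $j$ to the sign of $a_i-a_j$; hence $i R^{\succsim'} j \Leftrightarrow i R^{\succsim''} j$, and \emph{dual-lex} is the order-reversed mirror.

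For the violation statements I would produce explicit small instances. Statement (vii) is immediate from Example \ref{exkDes}: there $\theta^\succsim(1)=\theta^\succsim(2)$ and $|D_{12}^\succsim|=|D_{21}^\succsim|$, so CP-majority, lex-cel and \emph{dual-lex} all declare $1$ and $2$ indifferent, whereas Axiom \ref{kdesi} ($k$-DD) forces $1 P^\succsim 2$. For (ii) and (iv) I would recycle the separating pairs already built in Proposition \ref{logic1}: the instance showing that lex-cel violates Axiom \ref{EC} (EC) transfers to \emph{dual-lex} and $L_*^{(1)}$ after reversing the order, and the non-size-preserving bijection used there yields, on a four-element $N$, a pair $\succsim,\succsim'$ satisfying the hypothesis of Axiom \ref{CA} (CA) on which the size-sensitive solutions $L^{(1)}$ and $L_*^{(1)}$ disagree. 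For (v) and (vi) I would use that \emph{dual-lex}/$L_*^{(1)}$ reward the worst class and lex-cel/$L^{(1)}$ reward the best class: a single tie-break inside the last (resp.\ first) equivalence class that reverses a strict relation provides the counterexample, while for CP-majority such a tie-break creates a new strict CP-comparison that tips an otherwise balanced majority, breaking both IWS and IBS. Finally, for (viii) I would exhibit a ranking with $i I^\succsim_{CP} j$ for which the penultimate-class promotion $\succsim'$ and the dichotomous collapse $\succsim''$ of Axiom \ref{CI} (CI) produce opposite relations between $|D_{ij}|$ and $|D_{ji}|$.

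The main obstacle is the general argument for (ix) — and, to a lesser extent, the neutrality claim (i) for the lexicographic solutions — because these solutions are \emph{not} determined by CP-comparisons alone, so the reasoning cannot be reduced to counting $D_{ij}^\succsim$ as for CP-majority and must instead track how the structural moves of Axioms \ref{CAT} (CAT) and \ref{CI} (CI) act on the full occurrence profiles. A secondary but necessary care point is that in every counterexample the coalitions containing both or neither of $i,j$ must be placed so as not to perturb the intended comparison; for the lexicographic solutions this is automatic, since such coalitions shift the two profiles equally, but it must be verified explicitly whenever CP-majority is involved.
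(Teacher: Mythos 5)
Your overall architecture matches the paper's proof almost exactly: the positive claims (i), (iii), (ix) are argued structurally on the profiles $\theta^\succsim(\cdot)$ and $M^{\succsim,\cdot}$ (your CAT argument via the common split class $\Sigma\setminus\Omega\ \hat\succ\ \Omega$ of Remark \ref{rem:cat} is precisely the paper's two-case argument, and your explicit $a_x,b_x$ computation for (ix) is \emph{more} detailed than the paper, which merely points back to Theorem \ref{teo:L1}); the negative claims are settled by recycling the separating instances of Proposition \ref{logic1} for (ii) and (iv) and by small ad hoc rankings for (v)--(viii), exactly as the paper does. Your counterexample for (vii) via Example \ref{exkDes} differs from the paper's four-player instance but is valid: there $\theta^\succsim(1)=\theta^\succsim(2)=(3,1)$ and $|D_{12}^\succsim|=|D_{21}^\succsim|=1$, so all three solutions declare indifference while $k$-DD forces $1\,P^\succsim 2$.

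The one step I would not accept as written is the blanket duality claim that reversing the quotient order ``leaves Axioms \ref{CAT} (CAT), \ref{kdesi} ($k$-DD) and \ref{CI} (CI) invariant in content,'' so that the \emph{dual-lex}/$L_*^{(1)}$ halves of (iii) and (ix) come for free. Axiom \ref{CI} (CI) is anchored to the \emph{last} equivalence class $\Sigma_l$ and to the dichotomous collapse of the first $l-1$ classes; under order reversal it becomes a statement about refining the \emph{first} class, i.e.\ a different (dual) axiom, and likewise $k$-dominance flips sign when a dichotomous ranking is reversed. So ``lex-cel satisfies CI'' does not transfer to \emph{dual-lex} by duality. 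Fortunately the conclusion is still true and your own computation already proves it: with $b_i=i_l-a_i$ and $\theta^\succsim(i)=\theta^\succsim(j)$, the dual-lex comparison on both $\succsim'$ and $\succsim''$ starts at the last coordinate $b_i$ vs.\ $b_j$ and hence also reduces to the sign of $a_i-a_j$; the same direct check (as the paper's ``similar lines'' remark indicates) handles \emph{dual-lex} and $L_*^{(1)}$ for CAT. Replace the duality shortcut by these direct verifications and the proposal is complete, modulo actually writing down the sketched instances for (v), (vi) and (viii).
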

\begin{proof}

\noindent
Proof of statement (i)\\
The proof is straigthforward, and can also be found in \cite{bernardi:hal-02191137}, for lex-cel and \emph{dual-lex}, and in \cite{algaba:hal-03388789}, for $L^{(1)}$ solution and $L_*^{(1)}$ solution.\\

\noindent
Proof of statement (ii)\\
Use the same example of the proof of statement (ii) and (iii) of Proposition \ref{logic1} showing that neither \emph{dual-lex} nor $L^{(1)}$ solution satisfy EC.\\

\noindent
Proof of statement (iii)\\
We prove that lex-cel satisfies CAT.

First, recall that, in the statement of CAT,  removing the same set $B\subseteq\succsim\cap\succsim'$ from $\succsim$ and $\succsim'$ such that $\succsim\setminus B$ and $\succsim'\setminus B$ remain total preorders, means that one is removing from $\succsim$ and $\succsim'$ some pairs  $(S,T) \in \succsim\cap\succsim'$ such that there exist a corresponding equivalence class for each quotient order $\Sigma$ with $S,T \in \Sigma$ in both quotient orders of $\succ$ and $\succ'$. So, for any $(S,T) \in B$ it must exist an equivalence class $\Sigma$ such that (see Remark \ref{rem:cat})
$$
\Sigma_1  \succ \ldots \succ \Sigma \succ \ldots \succ \Sigma_l
$$
and
$$
\Sigma'_1 \succ' \ldots \succ \Sigma \succ' \ldots \succ' \Sigma'_m,
$$
such that after removal of $(S,T) \in B$ we obtain two quotient orders $\hat{\succ}=\succ\setminus B$ and $\hat{\succ}'=\succ'\setminus B$
$$
\Sigma_1 \hat{\succ} \ldots \hat{\succ} \Sigma\setminus \Omega\ \hat{\succ}\ \Omega\  \hat{\succ} \ldots \hat{\succ} \Sigma_l,
$$
and
$$
\Sigma'_1 \hat{\succ}' \ldots \hat{\succ}' \Sigma\setminus \Omega \ \hat{\succ}' \ \Omega \  \hat{\succ}' \ldots \hat{\succ}' \Sigma'_m,
$$
with $S \in \Omega$ and $T \in \Sigma \setminus \Omega$. If $i,j \in N$ are such that $i I_{\ran}^{\succsim} j$ and $i I_{\ran}^{\succsim'} j$, we have that   $\theta^\succsim(i) = \theta^\succsim(j)$ and $\theta^{\succsim'}(i) = \theta^{\succsim'}(j)$. Now, after $B$ is removed, two cases may arise: case (1) each  $\Sigma\setminus \Omega$ and $\Omega$ as above contain a same number of coalitions containing $i$ or $j$, and in this case we have  $\theta^{\hat{\succsim}}(i) = \theta^{\hat{\succsim}}(j)$ and $\theta^{\hat{\succsim}'}(i) = \theta^{\hat{\succsim}'}(j)$; case (2) $\Sigma\setminus \Omega$ and $\Omega$ contain a different number of coalitions containing $i$ or $j$, but in this case the highest equivalence class $\Sigma\setminus \Omega$ showing a different number of coalitions for $i$ and $j$ will be the same in both $\hat{\succ}$ and $\hat{\succ}'$, so $\theta^{\hat{\succsim}}(i) >_L \theta^{\hat{\succsim}}(j) \Leftrightarrow \theta^{\hat{\succsim}'}(i) >_L \theta^{\hat{\succsim}'}(j)$. In both cases (1) and (2), we have that $i R_{\ran}^{\succsim} j \Leftrightarrow  R_{\ran}^{\succsim'} j$, which is precisely what is demanded by CAT.

The proof that \emph{dual-lex}, $L^{(1)}$ solution and $L_*^{(1)}$ solution satisfy CAT follows similar lines and it is left to the reader.\\

\noindent
Proof of statement (iv)\\
Use the same example of the proof of statement (ii) of Proposition \ref{logic1} showing that neither CP-majority, $L^{(1)}$ solution nor $L_*^{(1)}$ solution satisfy CA.\\

\noindent
Proof of statement (v)\\
It is obvious that CP-majority does not satisfy IWS. To see that \emph{dual-lex} does not satisfy IWS we refer to Section 6 in \cite{bernardi:hal-02191137}. The proof that neither $L_*^{(1)}$ solution satisfies IWS follows a similar path.\\

\noindent
Proof of statement (vi)\\
Similar to case (v).\\

\noindent
Proof of statement (vii)\\
Consider a coalitional ranking $\succsim \in \PP(\PN)$ such that
\[
i \sim jk \sim jl \succ j \sim ik \sim il \succ \mathcal{C}, 
\]
where $\mathcal{C}$ is an equivalence class containing all the other coalitions not previously shown.
We have that $i P^\succsim_1 j$ and $j P^\succsim_2 i$, while $i I^\succsim_k j$ for all the other $k$. So, a solution satisfying $k$-DD should rank $i$ strictly more relevant than $j$. It is easy to check that on $\succsim$  CP-majority, lex-cel and \emph{dual-lex} all rank $j$ strictly more relevant than $i$.\\

\noindent
Proof of statement (viii)\\
Consider a coalitional ranking $\succsim \in \PP(\PN)$ such that
\[
jk \sim jl \succ ik \sim il \sim i \sim  ikl \succ \mathcal{C}, 
\]
where $\mathcal{C}$ is an equivalence class containing all the other coalitions not previously shown.
One can verify that the CP-majority is such that $i I^{\succsim}_{CP} j$
Now, consider two new coalitional rankings $\succsim^{'}$ and $\succsim^{''}$ such that
\[
jk \sim jl \succ^{'} ik \sim^{'} il \sim^{'} i \sim^{'}  ikl \sim^{'} j  \succ \mathcal{C}\setminus j, 
\]
and
\[
jk \sim^{''} jl \sim^{''} ik \sim^{''} il \sim^{''} i \sim^{''}  ikl\sim^{''} j \succ \mathcal{C}\setminus j, 
\]
where, following the  notation of Axiom \ref{CI} (CI), $\Sigma=\{j\}$. Notice that  $j P^{\succsim^{'}}_{CP} i$ while $i P^{\succsim^{''}}_{CP} j$, contradicting what is demanded by CI. So the CP-majority does not satisfy CI.

\noindent
Proof of statement (ix)\\
The proof that lex-cel satisfies CI is similar the the proof that  $L^{(1)}$ satisfies CI as in Theorem \ref{teo:L1}, and then it is omitted. Similar for \emph{dual-lex}.

\end{proof}

\begin{table}[t]
\begin{center}
\begin{tabular}{|c|c|c|c|c|c|}
\hline
 & CP-majority & lex-cel & dual-lex & $L^{(1)}$ & $L_*^{(1)}$\\
 \hline
SDes & \cmark$^+$ & \cmark$^+$ & \cmark$^+$ &\cmark$^+$ & \cmark$^+$ \\
\hline
Neu &  \cmark$^+$ & \cmark & \cmark & \cmark & \cmark \\
\hline
CAT &  \cmark$^+$ & \cmark & \cmark & \cmark & \cmark \\
\hline
Sym &  \cmark & \cmark$^+$ & \cmark$^+$ & \cmark$^+$ & \cmark$^+$ \\
\hline
EC &  \cmark$^+$ & \xmark & \xmark & \xmark & \xmark \\ 
\hline
CA & \xmark & \cmark$^+$ & \cmark$^+$ & \xmark & \xmark \\
\hline
IWS & \xmark & \cmark$^+$ & \xmark & \cmark$^+$ & \xmark \\
\hline
IBS & \xmark & \xmark & \cmark$^+$ & \xmark & \cmark$^+$ \\
\hline
$k$-DD & \xmark & \xmark & \xmark & \cmark$^+$ & \cmark$^+$ \\
\hline
PCA & \xmark & \cmark & \cmark & \cmark$^+$ & \cmark$^+$ \\
\hline
CI & \xmark & \cmark & \cmark & \cmark$^+$ & \cmark$^+$ \\
\hline
\end{tabular}
\caption{Properties (on the rows) satisfied (\cmark) or not (\xmark) by each social ranking (on the columns). Notation $^+$ highlights sets of properties used in this work to axiomatically characterize the corresponding solutions on each column.}
\label{tab:axiomsandrankings}
\end{center}
\end{table}
\noindent
Table \ref{tab:axiomsandrankings} summarizes which axioms are satisfied by each social ranking studied in this paper.

\section{Case study}\label{sec:cases}

The States General of the Netherlands is the supreme bicameral legislature of the Netherlands.
The House of Representatives (Tweede Kamer) consists of 150 members. They have the ability to propose and to revise a bill.
If a majority votes in favor of a proposed bill, it is moved to the 75 members of the Senate (Eerste Kamer), who only get to vote on the bill.
If a majority of the Senate votes in favor, the bill is passed to law.

The distribution of votes between each party in each house is shown in Table \ref{table:netherlands}, with $w^\ell_i$ and $w^u_i$ corresponding to the number of votes each party $i$ has in the House of Representatives (Lower House) and the Senate (Upper House), respectively.

In determining the power relation between the individual parties based on their contribution to coalitions, Musegaas et al. have proposed to consider the Shapley value of a three-valued simple game, see \cite{musegaas2018three}, which aims to incorporate the asymmetric structure of the vote distribution.
Their value function of the three-valued simple game generates a $2$ for coalitions that are able to pass a legislation in both the House of Representatives and the Senate, a $1$ if it only passes the House of Representatives, and a $0$ otherwise.

As argued in Example \ref{ex2.1}, quantifying the utility of a certain outcome could appear rather arbitrary.
Since merely the relative power between parties is of interest, we may also only consider the relative power between coalitions.

Let $N$ be the set of 17 parties in the States General of the Netherlands. Consider a power relation $\succsim \in \PP(\PN)$ consisting of three equivalence classes.

\begin{align*}
  \Sigma_1 & = \{S \in \PN: \textstyle{\sum_{i \in S}} w^\ell_i \geq 76\ \text{ and }\ \sum_{i \in S} w^u_i \geq 38\} \\
  \Sigma_2 & = \{S \in \PN: \textstyle{\sum_{i \in S}} w^\ell_i \geq 76\ \text{ and }\ \sum_{i \in S} w^u_i < 38\}    \\
  \Sigma_3 & = \PN \setminus (\Sigma_1 \cup \Sigma_2)
\end{align*}

Given the distribution of all coalitions between these three equivalence classes, Table \ref{table:netherlands} shows the resulting ranking of each social ranking solution introduced in previous sections, along with the ranking produced in \cite{musegaas2018three}.
The source code for arriving at those rankings is shown in Listing \ref{lst:helper} in the Appendix.

\begin{table}[t]
  \centering
  \begin{tabular}{lrr|rrrr}
    \hline
    & $w^\ell_i$ & $w^u_i$ & $R_{le}$ & $R_{L^{(1)}}$ & $R_{CP}$ & $\Phi(v)$ \\ 
    \hline
    VVD       & 40 & 13 & 1.  & 1.  & 1.  & 1.  \\
    PvdA      & 36 & 8  & 2.  & 2.  & 2.  & 2.  \\
    SP        & 15 & 9  & 4.  & 4.  & 3.  & 3.  \\
    CDA       & 13 & 12 & 3.  & 3.  & 4.  & 4.  \\
    D66       & 12 & 10 & 5.  & 5.  & 5.  & 5.  \\
    PVV       & 12 & 9  & 6.  & 6.  & 6.  & 6.  \\
    CU        & 5  & 3  & 8.  & 8.  & 7.  & 7.  \\
    GL        & 4  & 4  & 7.  & 7.  & 8.  & 8.  \\
    SGP       & 3  & 2  & 9.  & 9.  & 9.  & 9.  \\
    PvdD      & 2  & 2  & 10. & 10. & 10. & 10. \\
    GrKO      & 2  & 0  & 12. & 13. & 12. & 12. \\
    GrBvK     & 2  & 0  & 12. & 13. & 12. & 12. \\
    50PLUS    & 1  & 2  & 11. & 11. & 11. & 11. \\
    Houwers   & 1  & 0  & 15. & 15. & 14. & 14. \\
    Klein     & 1  & 0  & 15. & 15. & 14. & 14. \\
    Van Vliet & 1  & 0  & 15. & 15. & 14. & 14. \\
    OSF       & 0  & 1  & 14. & 12. & 17. & 17. \\
    \hline
  \end{tabular}
  \caption{Party votes and rankings according to the lex-cel, $L^{(1)}$, CP and $\Phi(v)$\cite{musegaas2018three}}
  \label{table:netherlands}
\end{table}

From the evaluation, we observe some minor but notable distinctions in how different social ranking solutions prioritize the influence of parties in the bicameral system. Specifically, both the lex-cel and $L^{(1)}$  solution s exhibit a marked preference for parties with stronger representation in the Senate. This preference is especially evident when comparing the rankings of the parties Van Vliet and OSF. Despite Van Vliet having a solitary vote in the House of Representatives and OSF having one in the Senate, both lex-cel and $L^{(1)}$ solution  assign a higher rank to OSF, diverging from the ranking provided by CP-majority.

Delving deeper, we find that $L^{(1)}$ solution exhibits a more pronounced bias towards smaller coalitions in the Senate compared to lex-cel.
When compared to the party GrKO, which holds twice as many votes in the House of Representatives as Van Vliet, $L^{(1)}$ solution  continues to favor OSF, unlike lex-cel, which prefers GrKO.
The lex-cel vectors and $L^{(1)}$ matrices for these two parties in (\ref{eq:nlvector}) and (\ref{eq:nlmatrix}) show that this ranking difference stems from the small difference in how many winning coalitions of size 5 OSF can participate in, namely $M^{\succsim,\text{OSF}}_{5,1}=13$ over the 12 by GrKO.
In this context, we could argue that, since a bill is only passed if approved by both houses, and smaller coalitions are generally easier to form than bigger ones, the ranking $L^{(1)}$ solution is an appropriate choice to highlight the influence of parties with a higher representation in the Senate.

\begin{equation}\label{eq:nlvector}
  \theta^\succsim(\text{GrKO}) = (\bm{28838}, 4442, 32256)\quad
  \theta^\succsim(\text{OSF}) = (\bm{28747}, 3517, 33272)
\end{equation}


\begin{equation}\label{eq:nlmatrix}
M^{\succsim,\text{GrKO}} = \begin{pmatrix}
   0 &    0 &    1 \\
   0 &    0 &   16 \\
   0 &    1 &  119 \\
   0 &   14 &  546 \\
  \bm{12} &   86 & 1722 \\
 161 &  307 & 3900 \\
 910 &  696 & 6402 \\
\vdots & \vdots & \vdots \\
   1 &    0 &    0 
\end{pmatrix} \quad
M^{\succsim,\text{OSF}} = \begin{pmatrix}
   0 &   0 &    1 \\
   0 &   0 &   16 \\
   0 &   1 &  119 \\
   0 &  14 &  546 \\
  \bm{13} &  84 & 1723 \\
 159 & 284 & 3925 \\
 886 & 581 & 6541 \\
\vdots & \vdots & \vdots \\
   1 &   0 &    0
\end{pmatrix}
\end{equation}


Lastly, the CP-majority relation produces a transitive relation which is remarkably identical to the Shapley value ranking as computed in \cite{musegaas2018three}. As demonstrated in Example \ref{ex3.1CP}, we can construct a matrix displaying the CP-comparisons for each party. Here, we focus on the three parties GrKO, Van Vliet and OSF, which produce a different ranking in all three analyzed ranking methods.

\begin{equation}
\begin{array}{lccc}
i\quad \backslash\quad j& \text{GrKO} & \text{Van Vliet} & \text{OSF} \\
\text{GrKO}      &   0 &   512 & 1016 \\
\text{Van Vliet} &   0 &     0 &  504 \\
\text{OSF}       & 405 &   437 &    0 \\
\end{array}
\end{equation}

\section{Final remarks}\label{sec:final}

In this paper, we present an axiomatic analysis of five social rankings from the literature: CP-majority \cite{haret:hal-02103421}, lex-cel (and its dual) \cite{bernardi:hal-02191137}, and the $L^{(1)}$ solution (and its dual) \cite{algaba:hal-03388789}. Our aim is to better understand the ability of these social rankings to assess the importance of individuals or elements in scenarios where coalition outcomes 
are assessed using an ordinal scale.
Our findings reveal a common basis in their axiomatic foundations, primarily consisting in the property of Strict Desirability  (a key focus of this study at the crossroad between social choice and coalitional games \cite{I58,taylor2021simple,carreras1996complete}) alongside Neutrality (or the weaker Symmetry axiom) and the property of Consistency After Tiebreaks. 

We emphasize the fact that our axiomatic study brings new insights on those social rankings, which appear closer than anticipated. Specifically, we find that the CP-majority differs from lex-cel primarily due to alternative combinations of three simple axioms: Equivalence of Coalitions for the CP-majority, and Coalitional Anonymity plus Independence of the Worst Set for the lex-cel, with all other axioms in our characterizations remaining consistent. Essentially, this implies that the CP-majority distinguishes itself from lex-cel by its ability to remain independent of the positions of pairs of CP-comparisons $S \cup \{i\}$ {\it vs.} $S \cup \{j\}$ for all $i,j$ in $N$ with $S \in 2^{N\setminus \{i,j\}}$ (Axiom EC), while for lex-cel, positions in the coalitional rankings hold greater significance (Axiom CA), except for those in the worst set (Axiom IWS).
On the other hand, due to its prominent lexicographic nature, the $L^{(1)}$ solution bears more similarity to lex-cel, differing by a weaker version of Axiom CA (the Axiom PCA) and the use of a property of  cardinality-based desirability on dichotomous coalitional rankings (Axiom $k$-DD), which is essential for its ability to discriminate based on the coalitional size.

Overall, our analysis shows that CP-majority, lex-cel, and $L^{(1)}$ solution, in this order, exhibit increasing levels of sensitivity in considering the position and the structure of coalitions through the coalitional ranking. This conclusion, derived from the interpretation of the axioms, broadly aligns with the results obtained from the case study in Section \ref{sec:cases}. In fact, the ranking provided by the CP-majority aligns identically with the one of the Shapley value, which is essentially based on an average marginal contribution to coalition's values and is only weakly correlated with the ordinal position of coalitions. Instead, the differences are more pronounced between lex-cel and the Shapley value, and even more so between $L^{(1)}$ and the Shapley value, particularly for individuals of lesser excellence.
In view of these evidences, our main suggestion for guiding the selection of the most suitable social ranking would be the gradient of importance that the ordinal position of coalitions and their structure should hold, ranging from CP-majority (low level) to lex-cel (intermediate level) and finally the $L^{(1)}$ solution (high level), when the reward of excellence is the primary criterion for measuring the relevance of elements (or the penalty of mediocrity if one considers the dual versions, in the same order).

As for new research directions, we believe it would be highly beneficial to characterize the family of social rankings that satisfy the axioms common to all those studied in this paper. Additionally, an open question remains regarding which set of independent axioms should be imposed to ensure that each total binary relation in the image of a social ranking is also transitive, as it is the case for lex-cel and $L^{(1)}$, but not for CP-majority, in general.


\appendix
\section{Appendix}
\subsection{Calculating the scores of the bicameral legislature in the Netherlands}

For the computational tasks, the code predominantly leverages base R functionalities.
However, for specialized operations, we employ the \texttt{socialranking} package\cite{fritz23}, which provides essential functions like \texttt{createPowerset()} for generating the power set $2^N$ of all possible coalitions and \texttt{doRanking()} to determine the ranking in this legislative context.
It should be noted that the package also incorporates the lex-cel, $L^{(1)}$, and CP-majority social ranking functions.

\begin{lstlisting}[style=Rstyle, caption={Applying the social ranking solutions to the States General in the Netherlands}, label=lst:helper]
library(socialranking)

weights <- matrix(
  c(
    40, 36, 15, 13, 12, 12, 5, 4, 3, 2, 2, 2, 1, 1, 1, 1, 0,
    13,  8,  9, 12, 10,  9, 3, 4, 2, 2, 0, 0, 2, 0, 0, 0, 1
  ) |> as.integer(),
  ncol = 2,
  dimnames = list(
    c('VVD', 'PvdA', 'SP', 'CDA', 'D66', 'PVV', 'CU', 'GL', 'SGP', 'PvdD', 'GrKO', 'GrBvK', '50PLUS', 'Houwers', 'Klein', 'Van Vliet', 'OSF'),
    c('House 1', 'House 2')
  )
)

powerset <- createPowerset(1:nrow(weights))

# 2^n x 2 matrix (two columns for house 1 and house 2 weights)
coalitionWeights <- sapply(powerset,
  function(coalition) apply(weights[coalition,,drop=FALSE], 2, sum)
)

qs <- apply(weights, 2, sum) %/% 2 + 1
determineEqCl <- function(votes) {
  if(votes['House 1'] < qs['House 1']) 3 else
  if(votes['House 2'] < qs['House 2']) 2 else 1
}

# Lexcel
lexMatrix <- matrix(0, ncol=ncol(weights)+1, nrow=nrow(weights))
dimnames(lexMatrix) <- list(rownames(weights),paste0('E',0:ncol(weights)+1))
for(i in 1:length(powerset)) {
  eqcl <- determineEqCl(coalitionWeights[,i])
  lexMatrix[powerset[[i]], eqcl] <- lexMatrix[powerset[[i]], eqcl] + 1
}
(lexRank <- lapply(1:nrow(lexMatrix), function(i) lexMatrix[i,])
  |> structure(class='LexcelScores', names=rownames(weights))
  |> doRanking()
)

# L1
l1Matrix <- array(0, c(nrow(weights),nrow(weights),ncol(weights)+1))
dimnames(l1Matrix) <- list(rownames(weights), 1:nrow(weights), paste0('E', 0:ncol(weights)+1))
for(i in 1:length(powerset)) {
  coal <- powerset[[i]]
  eqcl <- determineEqCl(coalitionWeights[,i])
  l1Matrix[coal,length(coal),eqcl] <- l1Matrix[coal,length(coal),eqcl] + 1
}
(l1Rank <- lapply(1:nrow(weights), function(i) l1Matrix[i,,])
  |> structure(class='L1Scores', names=rownames(weights))
  |> doRanking()
)

# CP
cpMatrix <- structure(matrix(0, nrow=nrow(weights), ncol=nrow(weights)), dimnames=list(rownames(weights), rownames(weights)))

for(i in 1:(nrow(weights)-1)) {
  for(j in (i+1):nrow(weights)) {
    
    Uij <- createPowerset(setdiff(1:nrow(weights), c(i,j)))
    eq_ij <- sapply(Uij,
      function(S) c(apply(weights[c(S,i),,drop=FALSE], 2, sum))
    ) |> apply(2, determineEqCl)
    eq_ji <- sapply(Uij,
      function(S) c(apply(weights[c(S,j),,drop=FALSE], 2, sum))
    ) |> apply(2, determineEqCl)
    d_ij <- sum(eq_ij < eq_ji)
    d_ji <- sum(eq_ji < eq_ij)
    cpMatrix[i,j] <- d_ij
    cpMatrix[j,i] <- d_ji
  }
}
\end{lstlisting}

\clearpage
\bibliographystyle{abbrv}
\bibliography{references}

\end{document}